\documentclass[11pt]{article}
\pdfoutput=1
\usepackage[T1]{fontenc}
\usepackage[ascii]{inputenc}
\usepackage{lmodern}
\usepackage[dvipsnames,svgnames]{xcolor}
\usepackage{amsmath,amssymb,amsthm,mathtools,enumitem,mleftright}
\usepackage[margin=0.9in]{geometry}
\usepackage[bookmarksnumbered,linktocpage,hypertexnames=false,colorlinks=true,linkcolor=NavyBlue,urlcolor=NavyBlue,citecolor=ForestGreen,anchorcolor=green,breaklinks=true,pagebackref=false,pdfusetitle]{hyperref}
\usepackage[capitalize,nameinlink]{cleveref}
\usepackage[english]{babel}
\usepackage{comment}
\newcommand{\C}{\mathbb{C}}
\newcommand{\R}{\mathbb{R}}
\newcommand{\N}{\mathbb{N}}

\newcommand{\eps}{\varepsilon}

\newcommand{\calB}{\mathcal{B}}
\newcommand{\calC}{\mathcal{C}}

\newcommand{\calK}{\mathcal{K}}
\newcommand{\calL}{\mathcal{L}}

\newcommand{\calN}{\mathcal{N}}
\newcommand{\calR}{\mathcal{R}}

\newcommand{\calQ}{\mathcal{Q}}

\let\cC\calC
\let\cH\calH

\let\al\alpha

\newcommand{\linspan}{\textnormal{span}}
\newcommand{\Law}{\textnormal{Law}}
\DeclareMathOperator{\gr}{gr}
\DeclareMathOperator{\grdeg}{gr}
\newcommand{\ip}[1]{{\langle{#1}\rangle}}
\let\psdgeq\succeq

\newcommand{\gam}{\gamma}

\newtheorem*{theorem*}{Theorem}
\newtheorem{theorem}{Theorem}[section]
\crefname{theorem}{Theorem}{Theorems}
\newtheorem{lemma}[theorem]{Lemma}
\crefname{lemma}{Lemma}{Lemmas}
\AddToHook{env/lemma/begin}{\crefalias{theorem}{lemma}}
\newtheorem{proposition}[theorem]{Proposition}
\crefname{proposition}{Proposition}{Propositions}
\AddToHook{env/proposition/begin}{\crefalias{theorem}{proposition}}
\newtheorem{corollary}[theorem]{Corollary}
\crefname{corollary}{Corollary}{Corollaries}
\AddToHook{env/corollary/begin}{\crefalias{theorem}{corollary}}
\newtheorem{example}[theorem]{Example}
\crefname{example}{Example}{Examples}
\AddToHook{env/example/begin}{\crefalias{theorem}{example}}

\crefname{question}{Question}{Questions}
\AddToHook{env/question/begin}{\crefalias{theorem}{question}}
\theoremstyle{definition}

\crefname{definition}{Definition}{Definitions}
\AddToHook{env/definition/begin}{\crefalias{theorem}{definition}}
\theoremstyle{remark}

\crefname{remark}{Remark}{Remarks}
\AddToHook{env/remark/begin}{\crefalias{theorem}{remark}}
\crefname{equation}{Eq.}{Eqs.}
\numberwithin{equation}{section}

\DeclarePairedDelimiter\norm{\lVert}{\rVert}

\DeclareMathOperator{\Tr}{Tr}
\DeclareMathOperator*{\argmax}{\arg\max}
\DeclareMathOperator*{\argmin}{\arg\min}
\DeclareMathOperator{\supp}{supp}

\DeclareMathOperator{\diag}{diag}

\DeclareMathOperator{\E}{\mathbb{E}}
\DeclareMathOperator{\He}{He}

\DeclareMathOperator{\OPT}{OPT}
\DeclareMathOperator{\SDP}{SDP}
\DeclareMathOperator{\sign}{sign}

\definecolor{niceish}{HTML}{1cb5bd}

\title{Primal/Dual Method for the Grothendieck Constant}
\author{Steven Heilman\thanks{
USC. \texttt{stevenmheilman@gmail.com}}\and Chris Jones\thanks{UC Davis. \texttt{chijones@ucdavis.edu}} \and Giulio Malavolta\thanks{Bocconi University. \texttt{giulio.malavolta@unibocconi.it}}}

\date{}

\begin{document}
\maketitle

\begin{abstract}
    We determine the hundredths digit of the real Grothendieck constant by proving $1.773 \leq K_G \leq 1.7799$.
    Furthermore, numerical heuristics suggest the estimate $K_G \approx 1.779$.
    We start by simplifying the Gaussian duality theory of $K_G$,
    based on Hermite projection games and Krivine rounding schemes.
    The bounds are obtained by a heuristic primal/dual search for these objects,
    followed by rigorous {\em certification} of their values.
    The search and certification are performed using AI tools, and in particular
    the rigorous certificates are very large (analytical reductions to thousands of numerical inequalities).
\end{abstract}

\section{Introduction}

For a matrix $A \in \R^{m \times n}$, let
\begin{align}
 \OPT(A)=\max_{x_i,y_j\in\{-1,1\}}
       &\sum_{i,j}A_{ij}x_iy_j,
 \qquad
 \SDP(A)=\sup_{\substack{u_i, v_j \in \R^{m+n}\\\norm{u_i}=\norm{v_j}=1}}
       \sum_{i,j}A_{ij}\langle u_i,v_j\rangle \notag\\
       \intertext{
       The real Grothendieck constant $K_G$ is defined by:}
    K_G&=\sup_{m,n \in \N}\sup_{A \in \R^{m \times n}}\frac{\SDP(A)}{\OPT(A)}\,. \label{eq:kg}
\end{align}

The quantity $\OPT(A)$ is also called the ``infinity-to-1 norm'' of $A$.
We refer to signs $x_i, y_j \in \{-1,1\}$ achieving $\OPT(A)$ as ``infinity-to-1 optimizers'' of $A$.

$K_G$ is a basic mathematical constant arising in several fields including functional analysis \cite{pisier2012grothendieck}, quantum information \cite{AcinGisinToner2006}, and approximation algorithms \cite{AlonNaor2006,KhotNaor2012}.
The bound $K_G < \infty$ was first proven by Grothendieck and is now known as ``Grothendieck's inequality''~\cite{grothendieck1956resume}.
For many years it was conjectured that $K_G = \frac{\pi}{2\ln(1+\sqrt{2})}\approx 1.7822$ until a breakthrough work by Braverman et al.\ \cite{braverman2013grothendieck} proved that this bound could be lowered by a positive $\eps > 0$. Recently, these bounds moved by additional epsilons~\cite{heilman2026lower, jones2026grothendieck, heilman2026upper} until Saha et al.\ \cite{saha2026new} proved the macroscopic bounds $1.7135 \leq \frac{6\pi}{11} \leq K_G \leq 1.7819$.
Our main result is to prove that the first three digits of $K_G$ are 1.77.

\begin{theorem}\label{thm:kg-177}
    $1.773 \leq K_G \leq 1.7799$
\end{theorem}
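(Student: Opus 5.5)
The two inequalities are proved separately, through a Gaussian primal/dual formulation of $K_G$. The lower bound $K_G \geq 1.773$ comes from an explicit hard instance, a \emph{Hermite projection game}. The upper bound $K_G \leq 1.7799$ comes from an explicit \emph{Krivine-type rounding scheme}. Both sides first pass to the infinite-dimensional Gaussian setting: vectors $u_i,v_j$ on the sphere of $\R^d$ for large $d$ are replaced by points $x,y \in \R^d$ with $\ip{x,y}$ distributed as correlated Gaussians.

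\textbf{Lower bound.} I would take the matrix $A$ to be a Gaussian kernel $A(x,y) = \sum_k a_k \He_k(\ip{x,y}/\ldots)$, i.e.\ a projection onto a chosen finite set of Hermite degrees, discretized as $d\to\infty$. This is the generalization of the Krivine/Braverman et al.\ instance, whose hardness comes from a degree-1-plus-corrections kernel. Two quantities are then needed.
\begin{itemize}
\item $\SDP(A)$: it is at least the value of the identity embedding, a linear functional in the coefficients $a_k$.
\item $\OPT(A)$: by rotation invariance the $\pm1$ optimizers are, after a standard symmetrization argument, functions $f,g:\R^d\to\{-1,1\}$. So $\OPT(A) = \sup_{f,g} \sum_k a_k \ip{P_k f, P_k g}$.
\end{itemize}
Bounding $\OPT$ from above requires controlling the Hermite weights of arbitrary Boolean functions. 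I would do this with a dual certificate: exhibit constants such that $\sum_k a_k \norm{P_k f}^2 \leq C$ for every $\pm1$-valued $f$. This can be reduced, by Cauchy--Schwarz and convexity, to one-dimensional threshold-like functions, or checked by an SOS-type certificate. After that reduction, verifying $C$ becomes finitely many numerical inequalities with interval-arithmetic error bounds.

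\textbf{Upper bound.} Given unit vectors $u_i,v_j$, I would apply a nonlinear embedding $u\mapsto \Phi(u)$, $v\mapsto \Psi(v)$. The embeddings are chosen so that the inner products satisfy $\ip{\Phi(u),\Psi(v)} = F^{-1}(c\ip{u,v})$ in an appropriate generalized sense. The vectors are then rounded by a (possibly multi-dimensional, non-hyperplane) Gaussian rounding function $r$, giving $\E[r(G\cdot\Phi(u))\,r(G\cdot\Psi(v))] = c\ip{u,v}$ exactly. This yields $K_G \leq 1/c$.

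The feasibility condition is that the power series of the inverse correlation function $F^{-1}$ have absolute coefficient sum at most $1$, i.e.\ $\sum_k |b_k| \leq 1$. Krivine's original argument gives the $1.7822$ bound this way with hyperplane rounding. I would search numerically for a better rounding function (parametrized, e.g., by a few Hermite coefficients or a piecewise threshold) that maximizes $c$. I would then certify the result by rigorously bounding the tail of the coefficient series of $F^{-1}$: Lagrange inversion plus explicit analytic bounds handles the first many coefficients exactly, and a majorant bounds the rest.

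\textbf{Main obstacle.} The hardest step is rigorous certification on both sides. On the lower-bound side, $\OPT$ is a supremum over all Boolean functions, so the problem must be reduced to a finite-dimensional one that can be checked. On the upper-bound side, infinitely many coefficients must be controlled. My first attempt would be to prove analytic tail estimates, for example via Cauchy estimates on a disk where $F$ is invertible. That leaves a finite (possibly thousands-long) list of numerical inequalities, which I would verify with interval arithmetic.
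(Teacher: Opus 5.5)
Your outline has the same two-sided shape as the paper, but the lower-bound half contains two steps that fail as written.

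First, the certificate $\sum_k a_k\norm{P_kf}^2\le C$ for all Boolean $f$ bounds the diagonal form $\ip{f,Qf}$, whereas $\OPT(Q)=\sup_{f,g}\ip{f,Qg}$ is bilinear. The two coincide when $Q\psdgeq0$. But PSD instances can never beat the little Grothendieck constant $\pi/2<1.773$, so your instance must be indefinite, and then the bilinear value can be strictly larger: for $Q=-\Pi_1$ the diagonal supremum is $0$ while $\OPT(Q)=2/\pi$. The paper handles this with $h=(f+g)/2$ and $k=(f-g)/2$. These satisfy $h^2+k^2=1$ pointwise and give $\ip{f,Qg}=\sum_mq_m(\norm{\Pi_mh}^2-\norm{\Pi_mk}^2)$.

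Second, reducing ``by Cauchy--Schwarz and convexity to one-dimensional threshold-like functions'' is not a valid upper-bound step. Near-optimizers of Hermite games use a diverging number of variables (\cref{ex:pi-d}), so restricting to one-dimensional functions under-estimates $\OPT$. You need a one-variable problem that dominates every $n$-variable pair, and constructing it is the real content of \cref{thm:reduction}.
(i) Evaluating $f$ at the points $\cos(r\pi/N)U+\sin(r\pi/N)V$ yields an antiperiodic sign sequence. Hence $\sum_m\Gamma_m\norm{\Pi_mf}^2\le B_7+B_9$, with each $B_N$ a finite maximum over $\{\pm1\}^N$. Since $q_m=0$ and $\Gamma_m\ge0$ for $m>21$, the residual form $\sum_m\bigl((q_m-\Gamma_m)\norm{\Pi_mh}^2+(-q_m-\Gamma_m)\norm{\Pi_mk}^2\bigr)$ can then be truncated at degree $21$.
(ii) Rotate $\Pi_1h$ onto one coordinate $S$ and slice $f(\cdot,Y)$, $g(\cdot,Y)$. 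The Hermite dependence on the remaining coordinates $Y$ is then dominated through $D_{p,r}(\alpha)\preceq H_p+\eps I_{11}$ and $D_{p,r}(\beta)\preceq K_p+\eps I_{11}$. This leaves $\E_YF(M(Y))$, with $F$ a convex quadratic in $44$ one-variable moments.
Only then does convexity reduce $F\le C$ to finitely many sign vectors, which the paper certifies by min-cut and branch-and-bound. Neither (i) nor (ii) appears in your sketch, and they are exactly where independence from the dimension comes from.

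On the upper bound, your exact-inversion Krivine framework is sound in principle (Naor and Regev show such schemes approach $K_G$), but it is narrower than what the paper uses. \cref{lem:ub-easy} only requires $c_1(\mu)-\sum_{m\ge3}|c_m(\mu)|$ to be large for a \emph{mixture} $\mu$ of rounding pairs. No inversion is needed and the kernel need not be exactly linear, so components can cancel one another's higher coefficients. In your framework, by contrast, a mixture of exactly linear schemes is never better than its best component.

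The certified $\mu$ has $32$ components: tensor-power pre-embeddings $P_i$ followed by threshold, staircase, or hyperplane rounding. Its tail beyond degree $681$ is controlled by Parseval captured-mass bounds. For the three dominant components it uses an $L_2(\mathbb T)$ bound on the fourth $t\,d/dt$-derivative of a smoothed kernel rather than Cauchy estimates. Nothing in your plan indicates that a single inverted rounding function gets below $1.7799$, which lies less than $10^{-3}$ above the heuristic estimate $K_G\approx1.779$.

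More basically, both bounds come entirely from a specific $q$ and a specific $\mu$ together with their certificates. Since you specify neither, the proposal does not yet establish either inequality.
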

Our main conceptual innovation is a primal/dual method for algorithmically searching for upper and lower bounds on $K_G$.
Running the method produces a pair of objects with nearly-matching value which heuristically suggest $K_G \approx 1.779$.

To convert the primal/dual objects into rigorous bounds on $K_G$,
we need to \emph{certify} their values.
Certification seems difficult: there is a gap between \cref{thm:kg-177}
and the estimate $K_G \approx 1.779$ because
we were not able to certify the instances found by the primal/dual search,
instead using simpler instances for \cref{thm:kg-177}.
To explain the difficulties more, one of the two certification tasks is to prove an upper bound on $\OPT(Q)$ for a candidate hard instance $Q$.
While brute-force search can give us a good idea of the value of $\OPT(Q)$,
our proofs of $\OPT(Q) \leq C$ are ``brute-force proofs''
involving extensive numerical inequalities proven using AI tools,
which would have been impossible to obtain by hand.

The main body of the paper contains the general primal/dual theory, then outlines the instances and certificates used for \cref{thm:kg-177}.
The upper and lower bounds are proven in \cref{thm:upper} and \cref{thm:lower}, respectively.
The certificates are in the appendix.

\subsection{Is there a short formula for \texorpdfstring{$K_G$}{KG}?}

\newcommand{\GW}{\alpha_{\mathrm{GW}}}

Ideally, $K_G$ would have a simple formula like $\sqrt{\pi}$
(this guess is now ruled out by \cref{thm:kg-177}).
Below are formulas for three other constants related to quadratic optimization,
which might suggest what a formula for $K_G$ could look like.

\paragraph{Little Grothendieck constant.}
The little Grothendieck constant $k_G$ is
\begin{align}\label{eq:little-grothendieck}
    k_G = \sup_{n \in \N} \sup_{A \in \R^{n \times n}_{\text{sym} }, A \psdgeq 0} \frac{\SDP(A)}{\OPT(A)}
\end{align}
where $\OPT(A), \SDP(A)$ are as in \cref{eq:kg}. It was proven by Grothendieck that $k_G = \frac{\pi}{2}$ \cite{grothendieck1956resume}.

\paragraph{Goemans--Williamson constant.}
Let $\calL = \bigcup_{n\in\N}\{L \in \R_{\text{sym}}^{n \times n} : L \boldsymbol{1} = 0, L \psdgeq 0, L_{ij} \leq 0 \text{ for }i \neq j\}$ be the set of Laplacian matrices for weighted graphs.
The constant $\GW$ is\footnote{When $A$ is symmetric positive semidefinite, without loss of generality we can add the constraints $x_i = y_i$ to $\OPT(A)$ and $u_i = v_i$ to $\SDP(A)$. The standard definition of $\GW$ uses these symmetric variants.}
\begin{align}\label{eq:gw}
    \GW = \inf_{L \in \calL} \frac{\OPT(L)}{\SDP(L)}\,.
\end{align}
Then $\GW$ admits the well-known formula \cite{goemans1995improved,feige2002optimality}
\[
    \GW = \min_{t \in [-1,1)} \frac{2\arccos t}{\pi(1-t)} \approx 0.87856\,.
\]
The three constants $k_G, \GW, K_G$ are all integrality gaps of the SDP relaxation of $\OPT(A)$, just restricted to different sets of matrices.
The relatively simple formulas for $k_G$ and $\GW$
can be traced to the simplicity of the optimal rounding algorithm (hyperplane rounding) for these special cases. Optimal lower bound instances are not too complicated either ($\Pi_1$ and the Feige-Schechtman instance \cite{feige2002optimality}, respectively).
For $K_G$, these constructions are known not to achieve tight bounds,
and the nearly-tight integrality gap instances we construct appear significantly more complicated.

\paragraph{Parisi constant.}

Another useful comparison is the Parisi constant $P^*$ defined by \cite{parisi1979infinite}
\begin{equation}\label{eq:parisi}
    P^* = \lim_{n \to \infty} n^{-3/2} \E \max_{x_i \in \{-1,1\}}\sum_{i,j=1}^n A_{ij}x_ix_j \approx 1.5266
\end{equation}
where $A \in \R^{n \times n}_{\text{sym}}$ is random with $A_{ij} \overset{\text{i.i.d.}}{\sim} \calN(0,1), i \leq j$ and $A_{ij} = A_{ji}$.
Even though $P^*$ is defined by a high-dimensional optimization problem,
its value is given by the maximum value of a one-dimensional stochastic control problem \cite{mourrat2025inverting} (the famous \emph{Parisi formula} is the dual problem, which expresses $P^*$ as a minimization problem over nondecreasing functions $\gamma:[0,1] \to [0,\infty)$).

The Parisi formula is only semi-explicit, since it involves an explicit convex optimization problem,
but the optimizer $\gam$ apparently does not admit a simpler mathematical description besides being the solution to the optimization problem.
Similarly, if there is very limited structure in optimal constructions for $K_G$, perhaps the best we can hope for is to express $K_G$ as the value of a numerically tractable optimization problem.

\section{Primal/dual method}

(The proofs for this section are at the end of the section.)

\subsection{Duality theory for \texorpdfstring{$K_G$}{KG}}

It was known from prior work that two types of Gaussian objects
are sufficient to precisely bound $K_G$ from both sides,
\emph{Krivine rounding schemes} for the upper bound \cite{naor2014krivine}
and operators on Gaussian space for the lower bound~\cite{raghavendra2009towards} which we call \emph{Hermite projection games} \cite{jones2026grothendieck}.
We start by giving a simpler, unified proof that these two classes of objects suffice using convex duality.

Let $\gam_{n}$ be ${n}$-dimensional standard Gaussian measure.
A Hermite projection game is an operator on $L_2(\gam_{n})$ of the form, for some coefficients $q_m \in [-1,1]$,
\[
    Q = \sum_{\substack{m = 0}}^\infty q_m \Pi_m^{({n})}
\]
where $\Pi_{m}^{({n})}$ is the projection operator on $L_2(\gam_{n})$ to the span of the degree-${m}$ Hermite polynomials.
{We usually omit the superscript ${n}$ in $\Pi_m^{({n})}$.}
Generalize to bounded operators on $L_2(\gam_n)$ the notation,
\[
 \OPT(Q):=\sup_{f,g: \R^n \to \{\pm 1\}}\langle f,Qg\rangle,\qquad
 \SDP(Q):=\sup_{k\in\N}\sup_{\substack{f,g: \R^n \to \R^k\\ \norm{f(X)} = \norm{g(X)} = 1}}
\sum_{j=1}^k\langle f_j,Qg_j\rangle\,.
\]
The operator $Q$ is a witness for the lower bound $K_G \geq \SDP(Q)/\OPT(Q)$, after passing from operators to matrices by standard discretization arguments.
We also write
\[
    v_{n}(q) = \OPT(Q) = \sup_{f,g : \R^n \to \{\pm 1\}} \sum_{\substack{m = 0}}^\infty q_m \langle{\Pi_m f, \Pi_m g}\rangle, \qquad v_\infty(q) = \sup_{n \in \N} v_n(q)\,.
\]
We will always take the limit $n \to \infty$ when calculating the lower bound proven by $Q$.

A Krivine rounding scheme is specified by a probability distribution
$\mu$ on pairs of Boolean functions $f,g: \R^{n} \to \{-1,1\}$.
The rounding algorithm associated to $\mu$ is:
given input unit vectors with arbitrary dimension, written $u,v \in \ell_2, \norm{u} = \norm{v} = 1$,
\begin{equation}\label{eq:krivine-scheme}
    \text{sample }(f,g) \sim \mu \text{ and } G_{ij} \overset{\text{i.i.d.}}{\sim} \calN(0,1)\text{ for } i \in [n], j \in \N \text{ then output } f(Gu), g(Gv) \in \{\pm 1\}\,.
\end{equation}

The rounding algorithm for $\mu$ proves an upper bound on $K_G$ in terms of an associated ``kernel function'' $H_\mu$.
For two inputs with $\ip{u,v} = t$ the expected correlation between the output signs is
\begin{align*}
    H_\mu(t) &:= \E_{(f,g) \sim \mu} \E_{(X,Y) \sim t\text{-correlated Gaussians}}f(X)g(Y)\\
    \intertext{Let $c(\mu)=(c_m(\mu))_{m\in\N}$ be the Taylor coefficients of $H_\mu(t)$. The Taylor coefficients have the formula}
    H_\mu(t) &= \sum_{m=0}^\infty c_m(\mu) t^m = \sum_{m=0}^\infty \E_{(f,g) \sim \mu} \langle\Pi_m f, \Pi_m g \rangle t^m\,.
\end{align*}
We also write $c(f,g)$ and $H_{f,g}(t)$ for the case where $\mu$ is $(f,g)$ with probability 1.

\begin{lemma}\label{lem:ub-easy}
    Let $n \in \N$ and let $\mu$ be a distribution on pairs of odd functions $f,g: \R^n \to \{\pm 1\}$.
    Then $K_G^{-1} \geq c_1(\mu) - \sum_{\textnormal{odd }m \geq 3} |c_m(\mu)|$.
\end{lemma}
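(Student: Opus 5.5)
This is the standard Krivine argument: first embed the vectors so that the rounding scheme's kernel reproduces the original inner products, then bound the loss from the higher-order Taylor terms. We may assume $c_1 := c_1(\mu) > \sum_{\text{odd } m\ge3}|c_m|$, where $c_m := c_m(\mu)$; otherwise the bound is trivial since $K_G^{-1}>0$. Since $f,g$ are odd, $\Pi_m f = 0$ for even $m$, so $c_m = 0$ for even $m$. The coefficients satisfy $|c_m| \le \E\norm{\Pi_m f}\norm{\Pi_m g}$, and $\sum_m \norm{\Pi_m f}^2 = 1$, so $\sum_m |c_m| \le 1$. Hence $H_\mu(t) = \sum_{m \text{ odd}} c_m t^m$ converges absolutely on $[-1,1]$. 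Define the majorant $\widetilde H(t) := c_1 t - \sum_{\text{odd } m\ge 3} |c_m| t^m$. Set $\lambda := c_1 - \sum_{m\ge3}|c_m|$, so that $\widetilde H(1)=\lambda > 0$. Note $\widetilde H$ is odd and increasing near $0$.

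Fix a matrix $A$ and unit vectors $u_i,v_j$ nearly achieving $\SDP(A)$. The goal is to produce new unit vectors $u_i', v_j'$ with $H_\mu(\ip{u_i',v_j'}) = \lambda \ip{u_i,v_j}$. Then rounding via \eqref{eq:krivine-scheme} gives
\[
    \OPT(A) \ge \E \sum_{i,j} A_{ij} f(Gu_i')g(Gv_j') = \lambda \sum_{i,j} A_{ij} \ip{u_i,v_j},
\]
which yields $\OPT(A) \ge \lambda\, \SDP(A)$, i.e.\ $K_G^{-1}\ge\lambda$. Here the expectation identity uses that $(Gu,Gv)$ is a pair of $n$-dimensional $\ip{u,v}$-correlated standard Gaussian vectors, and $f,g$ are Boolean, so some realization attains at least the expectation.

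To construct the embedding, let $s := \widetilde H^{-1}(\lambda t)$ be the root near $0$ with $t$ formally the variable. The inversion is cleaner to phrase with signs: write $H_\mu(s) = \sum_m c_m s^m$. Following Krivine, take a formal inverse series $s = \sum_{m \text{ odd}} b_m (\lambda t)^m$ of $H_\mu$. Then define $u' = \bigoplus_m \sqrt{|b_m|}\lambda^{m/2} u^{\otimes m}$ and $v' = \bigoplus_m \operatorname{sign}(b_m)\sqrt{|b_m|}\lambda^{m/2} v^{\otimes m}$. This gives $\ip{u',v'} = \sum_m b_m(\lambda t)^m$ and $\norm{u'}^2=\norm{v'}^2 = \sum_m |b_m|\lambda^m$.

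\textbf{Main obstacle.} The hard step is showing that $\sum_m |b_m|\lambda^m \le 1$, so that the vectors can be normalized, after padding with an orthogonal component, to have norm exactly $1$. The standard trick handles this: the coefficients of the inverse of $H_\mu$ are dominated in absolute value by those of the inverse of the majorant. Concretely, $|b_m| \le \tilde b_m$, where $\widetilde H^{-1}(x) = \sum \tilde b_m x^m$ has nonnegative coefficients. This follows by inducting on the Lagrange-inversion/recursive formulas, since replacing $c_1 s - \sum_{m\ge3} c_m s^m$ by $c_1 s - \sum_{m\ge 3}|c_m| s^m$ can only increase every coefficient in absolute value. Then $\sum_m \tilde b_m \lambda^m = \widetilde H^{-1}(\lambda) = 1$, using that the series converges on $[0,\lambda]$ because $\widetilde H$ is increasing and continuous on $[0,1]$ up to the point where it reaches $\lambda$.

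I would verify that monotonicity carefully, possibly restricting to $\widetilde H$ increasing on $[0,s_0]$ with $\widetilde H(s_0) \ge \lambda$ and $s_0\le1$. Padding then restores exact unit norms while keeping $\ip{u',v'}\in[-1,1]$ within the convergence domain.
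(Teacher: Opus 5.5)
Your argument is correct in outline, but it follows the classical Krivine route, which is much heavier than the paper's proof.

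The paper never modifies the vectors. It applies the rounding for $\mu$ directly to an optimal SDP solution, which gives
$\OPT(A)\ge\sum_{i,j}A_{ij}H_\mu(\langle u_i,v_j\rangle)=\sum_m c_m\sum_{i,j}A_{ij}\langle u_i,v_j\rangle^m$.
It then notes that $u_i^{\otimes m},v_j^{\otimes m}$ are again unit vectors, so $|\sum_{i,j}A_{ij}\langle u_i,v_j\rangle^m|\le\SDP(A)$ for every $m$. The $m=1$ term contributes exactly $c_1\SDP(A)$, and each odd $m\ge3$ term loses at most $|c_m|\SDP(A)$. That is the whole proof.

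You use the same tensor-power device, but to \emph{linearize} the kernel, $H_\mu(\langle u',v'\rangle)=\lambda\langle u,v\rangle$. This forces you through power-series inversion, the domination $|b_m|\le\tilde b_m$, and a convergence argument at $y=\lambda$. What your route buys is a rounding algorithm with an exactly linear kernel, in the spirit of \cref{cor:krivine-schemes-exact}(ii). As an algorithm, its guarantee $\lambda\sum_{i,j}A_{ij}\langle u_i,v_j\rangle$ holds relative to any feasible vector solution, not only an optimal one. For the inequality on $K_G$ itself, that extra structure is unnecessary.

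Two points in your main step need repair; both are routine. Write $B(y)=\sum_m b_my^m$ and $\tilde B(y)=\sum_m\tilde b_my^m$.

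\begin{itemize}
\item \emph{The value at $\lambda$ need not be $1$.} The claim $\widetilde H^{-1}(\lambda)=1$ can fail. $\widetilde H$ is concave on $[0,1]$ and can exceed $\lambda$ before $s=1$; for example, $c_1=0.6$, $|c_3|=0.3$. So the correct value is the least root $s_0\le 1$ of $\widetilde H(s)=\lambda$, and this suffices after padding.
\item \emph{Convergence at $y=\lambda$ must be proved.} The fact that $\tilde B$ converges at $y=\lambda$ and equals $s_0$ does not follow from monotonicity of $\widetilde H$ alone. The cleanest argument is the fixed-point iteration $s\mapsto\bigl(y+\sum_{m\ge3}|c_m|s^m\bigr)/c_1$ started at $s=0$.
  \begin{itemize}
  \item Its coefficients are nonnegative and dominate those of the signed iteration $s\mapsto\bigl(y-\sum_{m\ge3}c_ms^m\bigr)/c_1$ that defines $B$; this is the domination step.
  \item For $0\le y\le\lambda$ the map is increasing on $[0,1]$ and sends $1$ to at most $1$. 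Hence its iterates increase to $\tilde B(y)$ inside $[0,1]$, giving $\tilde B(\lambda)=s_0\le1$.
  \end{itemize}
\end{itemize}

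With these in place, $H_\mu(B(y))=y$ on $[-\lambda,\lambda]$ follows by rearranging the absolutely convergent double series, since $\sum_m|c_m|\tilde B(|y|)^m\le 1$.
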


We prove using convex duality that these two methods suffice to bound $K_G$ from both sides.

\begin{theorem}
\label{thm:duality}
    Let $\mathcal C\subseteq\ell_1$ be the set of sequences $c(\mu)$, where ${n}\in\N$ and $\mu$ ranges over distributions on pairs of odd functions $f,g:\R^{n}\to\{\pm1\}$.
    Let $\calQ = \{q \in \ell_\infty: q_1 = 1, \forall m \in \N. \; |q_m| \leq 1, q_{2m} =0\}$. 
    Then $v_\infty(q) = \sup_{c \in \calC} \ip{c, q}$ and
    \begin{equation}\label{eq:duality1}
        \frac{1}{K_G} = \sup \Big\{c_1 - \sum_{\textnormal{odd }m \geq 3} |c_m|: c \in {\mathcal{C}}\Big\} = \inf\{{v_\infty}(q): q \in \calQ\}\,.
    \end{equation}
    Furthermore, the infimum over $q$ is attained.
\end{theorem}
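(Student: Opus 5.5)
We will prove the three claims in turn: first the identity $v_\infty(q)=\sup_{c\in\calC}\ip{c,q}$, then the two equalities in \cref{eq:duality1} through a minimax argument between $\calC$ and $\calQ$, and finally attainment of the infimum.

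\textbf{Step 1: the support-function identity.} For fixed $n$ and $f,g:\R^n\to\{\pm1\}$ we have $\ip{f,Qg}=\sum_m q_m\ip{\Pi_m f,\Pi_m g}=\ip{c(f,g),q}$. Averaging over $\mu$ gives $\ip{c(\mu),q}\le v_n(q)$. Conversely, a point mass at $(f,g)$ is an allowed $\mu$. The only subtlety is the oddness restriction in $\calC$. Since $q_{2m}=0$ for $q\in\calQ$, only the odd Hermite parts of $f,g$ contribute to $\ip{f,Qg}$. To handle non-odd $f$, we will use the standard trick of passing to dimension $n+1$. Set $\tilde f(x,s)=\sign(s)f(\sign(s)x)$, which is odd, or more simply take the odd part as a randomized odd sign function. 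This should realize $\ip{f,Qg}$ as $\ip{c(\tilde f,\tilde g),q}$ up to an arbitrarily small error, or show that the sup over odd functions is at least the sup over all functions. We will need to check this carefully. Either way, $v_\infty(q)=\sup_{c\in\calC}\ip{c,q}$.

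\textbf{Step 2: duality.} We will show $\calC$ is convex, using mixtures of distributions together with embedding $\R^{n}$ and $\R^{n'}$ into a common $\R^{\max}$. For $c\in\calC$ and $q\in\calQ$,
\[
\ip{c,q}\ \ge\ c_1-\sum_{\text{odd } m\ge3}|c_m|,
\]
which gives weak duality: $\sup\le\inf$. For equality we will apply a minimax theorem (Sion) to the bilinear form $\ip{c,q}$ on $\overline{\calC}\times\calQ$, where $\calQ$ is weak-* compact in $\ell_\infty=(\ell_1)^*$ and convex. This gives
\[
\inf_{q}\sup_{c}\ip{c,q}=\sup_{c}\inf_q\ip{c,q}=\sup_c\Big(c_1-\sum_{\text{odd } m\ge3}|c_m|\Big).
\]
The inner infimum is computed by choosing $q_m=-\sign(c_m)$. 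We also need $\sum|c_m|\le1$ so that $\calC$ is bounded in $\ell_1$; this follows since $\sum_m|\ip{\Pi_mf,\Pi_mg}|\le\norm f\norm g=1$ by Cauchy--Schwarz.

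\textbf{Step 3: identifying the common value with $1/K_G$.} \Cref{lem:ub-easy} gives $1/K_G\ge\sup$. For the reverse direction, we show $1/K_G\le v_\infty(q)$ for every $q\in\calQ$ by the discretization argument mentioned earlier. The operator $Q$ has $\SDP(Q)\ge 1$, witnessed by $f(x)=g(x)=x/\norm{x}$ in high dimension, whose projection onto degree $1$ carries almost all of its mass as $n\to\infty$ while the other odd degrees contribute a vanishing amount. Together with $\OPT(Q)=v_n(q)$ this gives $K_G\ge \SDP/\OPT\approx 1/v_\infty(q)$. We expect this to be the main obstacle: controlling the higher-degree contribution of $x/\norm{x}$ with $|q_m|\le1$ requires the bound $\sum_{m\ge3}\norm{\Pi_m(x/\norm x)}^2\to0$. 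We will try to obtain it from concentration of $\norm{x}$ around $\sqrt n$.

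\textbf{Step 4: attainment.} $v_\infty$ is a supremum of weak-* continuous linear functionals, hence weak-* lower semicontinuous, and $\calQ$ is weak-* compact. Therefore the infimum over $\calQ$ is attained.
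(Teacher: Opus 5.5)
Your proposal is correct and follows the paper's proof essentially step by step: reduction to odd functions for the support identity, Sion's theorem with $\calQ$ weak-$^*$ compact, \cref{lem:ub-easy} together with the witness $x/\norm{x}$, and weak-$^*$ lower semicontinuity for attainment. The Step 3 worry is mild, since $\sum_j\norm{\Pi_1(x_j/\norm{x})}_2^2=(\E\norm{x})^2/n\to1$.

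In Step 1, drop the map $\sign(s)f(\sign(s)x)$. It equals $f_{\mathrm{odd}}(x)+\sign(s)f_{\mathrm{even}}(x)$ and so adds a generally nonzero contribution from the even parts. Use your second option instead: the uniform mixture over $f_{\mathrm{odd}}\pm a\mathbf{1}_{\{f_{\mathrm{odd}}=0\}}$ and $g_{\mathrm{odd}}\pm a\mathbf{1}_{\{g_{\mathrm{odd}}=0\}}$, with $a$ odd Boolean. This mixture realizes $\ip{f,Qg}$ exactly, which is what the paper does in \cref{lem:odd-rounding}.
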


\begin{samepage}
\begin{corollary}\hspace{0.1cm}\label{cor:krivine-schemes}
\begin{enumerate}[label=(\roman*)]
    \item (Hermite projection games attain the lower bound \cite{raghavendra2009towards})\footnote{\cref{cor:krivine-schemes} strengthens \cite[Theorem 1.4]{raghavendra2009towards} by proving that the infimum over $q \in \calQ$ is attained.} There is $Q = \sum_{\textnormal{odd }m\ge1}q_m \Pi_m^{(n)}$ such that $q_1 = 1$ and $\SDP(Q)/\OPT(Q) = K_G - o_{n}(1)$.
    \item  (Krivine schemes attain the upper bound \cite{naor2014krivine}) For every $\eps>0$, there exist ${n}\in\N$ and a distribution $\mu$ on pairs of odd functions $f,g:\R^{n}\to\{-1,1\}$ such that \[ c_1(\mu)-\sum_{\substack{m\ge3\\m\textnormal{ odd}}}|c_m(\mu)| \ge\frac1{K_G+\eps}.\]
\end{enumerate}
\end{corollary}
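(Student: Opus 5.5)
Both parts follow from \cref{thm:duality}. For each part, I would translate one side of \cref{eq:duality1} into a statement about the corresponding objects.

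For (ii), the first equality in \cref{eq:duality1} says that $1/K_G$ is a supremum over $c\in\calC$ of $c_1-\sum_{\text{odd } m\ge3}|c_m|$. Given $\eps>0$, we have $1/(K_G+\eps)<1/K_G$. By definition of the supremum there is $c\in\calC$ whose value is at least $1/(K_G+\eps)$. By definition of $\calC$, this $c$ equals $c(\mu)$ for some $n$ and some distribution $\mu$ on pairs of odd functions $\R^n\to\{\pm1\}$, and this is exactly the claim.

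For (i), I would use the attainment statement. Take $q\in\calQ$ with $v_\infty(q)=1/K_G$, and set $Q^{(n)}=\sum_{\text{odd } m}q_m\Pi_m^{(n)}$. Since $q_{2m}=0$, only odd levels appear, and $q_1=1$. There are two steps.

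\emph{The SDP side.} I would show $\SDP(Q^{(n)})\ge 1-o_n(1)$. Take $k=n$ and $f=g=x/\norm{x}$, the normalized identity map $\R^n\to\R^n$. Then $\sum_j\ip{f_j,Q g_j}\ge \sum_j\norm{\Pi_1 f_j}^2-\sum_j\sum_{m\ge3}\norm{\Pi_m f_j}^2$, using $|q_m|\le1$. Each coordinate $x_j/\norm{x}$ is odd, so it has no even-level components. Its degree-1 part is $\E[x_j^2/\norm{x}]\,x_j$, giving total level-1 mass $n\,(\E[x_1^2/\norm{x}])^2=(\E\norm{x})^2/n\to1$. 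Since the total mass is $\sum_j\norm{f_j}^2=1$, the mass at levels $m\ge3$ is $o(1)$. So the SDP value is $\ge1-o(1)$. (One could instead cite this as standard; $\SDP(Q)\le 1$ is not needed.)

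\emph{The OPT side.} Here $\OPT(Q^{(n)})=v_n(q)\le v_\infty(q)=1/K_G$. This gives $\SDP/\OPT\ge K_G(1-o(1))$.

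The reverse inequality $\SDP(Q^{(n)})/\OPT(Q^{(n)})\le K_G+o(1)$ holds because operators discretize to matrices, as noted in the paper, so the ratio cannot exceed $K_G$ in the limit. Combining the two gives $\SDP(Q^{(n)})/\OPT(Q^{(n)})=K_G-o_n(1)$.

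The main subtlety is the SDP computation. The point to check is that $\norm{x}^{-1}$ times a linear function has vanishing higher-degree Hermite mass. This follows from concentration of $\norm{x}/\sqrt n$, which gives $\E\norm{x}=\sqrt n(1-O(1/n))$.
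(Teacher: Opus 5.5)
Your proposal is correct and follows the paper's proof. Part (ii) is read off from the supremum in \cref{eq:duality1} together with the definition of $\calC$, and part (i) takes the minimizer $q\in\calQ$ given by \cref{thm:duality} and combines $\SDP(Q)\ge 1-o_n(1)$, via $f=g=x/\norm{x}$, with $\OPT(Q)=v_n(q)\le v_\infty(q)=1/K_G$. Your explicit level-one mass computation $(\E\norm{x})^2/n\to1$ and your remark that $\SDP(Q)\le K_G\,\OPT(Q)$ by discretization just fill in details the paper leaves as standard; the paper's proof only records the lower bound $\SDP(Q)/\OPT(Q)\ge K_G-o_n(1)$.
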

\end{samepage}

Via complementary slackness, convex duality implies that for any near-optimal pair of $Q$ and $\mu$, most
of the probability mass of $\mu$ is on near-optimizers of $Q$.
Hence hard lower bound instances $Q$ are also the source of good rounding algorithms,
via their infinity-to-1 optimizers.

\begin{proposition}\label{prop:complementary-slackness}
    Let $q \in \calQ$ such that ${v_\infty}(q) = \frac{1}{K_G}$.
    Let $\mu$ be a distribution on $f,g : \R^{n} \to \{-1,1\}$ such that $c_1(\mu) \geq \frac{1}{K_G} - \eps$ and $\sum_{\textnormal{odd }m \geq 3} |c_m(\mu)| \leq \eps$.
    Then $\E_{(f,g) \sim \mu}\ip{f, Qg} \geq \frac{1}{K_G} - 2\eps.$
\end{proposition}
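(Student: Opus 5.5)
Since $Q = \sum_m q_m \Pi_m$, we have $\ip{f, Qg} = \sum_m q_m \ip{\Pi_m f, \Pi_m g}$. Taking expectation over $\mu$ and exchanging expectation and sum, this gives
\[
    \E_{(f,g)\sim\mu}\ip{f,Qg} = \sum_m q_m c_m(\mu) = \ip{c(\mu), q}.
\]
The exchange is justified because $|q_m|\le 1$ and $\sum_m |\ip{\Pi_m f,\Pi_m g}| \le \sum_m \norm{\Pi_m f}\norm{\Pi_m g} \le \norm{f}\norm{g} = 1$ by Cauchy--Schwarz in $\ell_2$ across $m$, so dominated convergence applies. The plan is then to bound this inner product from below using only the structure of $\calQ$.

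Since $q\in\calQ$, we have $q_1 = 1$, $q_{2m} = 0$ and $|q_m|\le 1$. Splitting the sum into the $m=1$ term, the even terms (which vanish), and the odd terms with $m\ge 3$ gives
\[
    \ip{c(\mu),q} = c_1(\mu) + \sum_{\textnormal{odd } m\ge 3} q_m c_m(\mu) \;\ge\; c_1(\mu) - \sum_{\textnormal{odd }m\ge3}|c_m(\mu)| \;\ge\; \frac1{K_G} - \eps - \eps.
\]
This is the claimed bound $\frac{1}{K_G} - 2\eps$. The argument never uses oddness of $f,g$: the even coefficients $c_{2m}(\mu)$ may be nonzero, but they are killed by $q_{2m}=0$. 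The term $m=0$ is included among these even terms, since $q_0 = 0$.

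There is essentially no obstacle beyond justifying the interchange of sum and expectation, which the Cauchy--Schwarz bound handles. The hypothesis $v_\infty(q) = 1/K_G$ is not needed for the inequality itself. It only explains the interpretation: because $\ip{f,Qg}\le v_n(q)\le v_\infty(q) = 1/K_G$ pointwise, the lower bound on the average forces, by Markov's inequality applied to the nonnegative quantity $1/K_G - \ip{f,Qg}$, most of the mass of $\mu$ to lie on near-optimizers of $Q$. This is presumably how the proposition will be used.
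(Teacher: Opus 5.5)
Your proposal is correct and is essentially the paper's proof. Both expand $\E_\mu\ip{f,Qg}=\ip{c(\mu),q}$ and bound this below by $c_1(\mu)-\sum_{\textnormal{odd }m\ge3}|c_m(\mu)|\ge \frac1{K_G}-2\eps$, using $q_1=1$, $|q_m|\le1$, and $q_{2m}=0$. Your justification of the sum--expectation interchange via the Cauchy--Schwarz bound of \cref{lem:hfg-analytic}, and your observation that the hypothesis $v_\infty(q)=1/K_G$ only matters for interpreting the conclusion, are both accurate.
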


\subsection{Numerical primal/dual search}

By \cref{thm:duality},
\begin{equation}\label{eq:duality-abstract}
    \frac{1}{K_G} = \sup_{c \in \mathcal{C}} \inf_{q \in \mathcal{Q}} \langle c, q\rangle = \inf_{q \in \mathcal{Q}} \sup_{c \in \mathcal{C}} \langle c, q \rangle
\end{equation}
This suggests that we should search for extremizers $c,q$ using a primal/dual algorithm, using the alternating update rules (assuming the extrema are attained),
\begin{equation}\label{eq:abstract-primal-dual}
    q = \argmin_{q \in \mathcal{Q}} \langle c, q\rangle, \qquad \qquad c = \argmax_{c \in \mathcal{C}} \langle c, q \rangle\,.
\end{equation}

A very important caveat is that there is probably no efficient implementation of this algorithm.
Calculating $\argmax_{c \in \mathcal{C}} \langle c, q \rangle$ corresponds
to calculating the infinity-to-1 optimizers of $Q = \sum_i q_i \Pi_i$
which is an NP-hard task in general.
Therefore, we use heuristics to search for
good infinity-to-1 optimizers of $Q$.

We use a slight variant of \cref{eq:abstract-primal-dual}. Fix a parameter $D$. The iteration has two steps.
\begin{enumerate}[label=\arabic*.]
    \item In the first step, fix $Q$ and heuristically compute its infinity-to-1 optimizers $(f_i, g_i)_{i \in I}$.
    Add $(f_i, g_i)_{i \in I}$ to a pool of rounding functions, denoted $\mathcal{R}$.
    Heuristically calculating the $\infty$-to-1 optimizers is very slow and is guided by AI-heuristics.

    \item In the second step, fix the pool $\mathcal{R}$ and re-compute $Q$ to minimize the worst value among the rounding functions in the pool.
    Let $\calQ_{\le D} = \{q \in \calQ: \forall m > D. \; q_m = 0\}$.
    \begin{equation}\label{eq:objective-q}
        q= \argmin_{\substack{q\in\mathcal Q_{\le D}}} \max_{(f,g) \in \mathcal{R}} \sum_{\substack{m = 1\\m\text{ odd}}}^{D} q_m c_m(f,g)\qquad Q=\sum_{m=1}^Dq_m\Pi_m\,.
    \end{equation}
    This computation can be formulated as a linear program parameterized by the coefficients $(c_{m}(f,g))_{1 \leq m \leq D}$ for $(f,g) \in \mathcal{R}$.
\end{enumerate}
\noindent
At the end of the iteration, calculate $\mu$ by the formula
\begin{equation}\label{eq:objective-mu}
    \mu = \argmax_{\supp(\mu) \subseteq \calR}c_1(\mu) - \sum_{m \geq 3}^{D} |c_m(\mu)|\,.
\end{equation}
This can be formulated as a linear program in $\mu$ again parameterized by $(c_{m}(f,g))_{1 \leq m \leq D}$.

Running the algorithm leads to a pair of candidates with heuristic primal and dual estimates of $K_G$ matching on the first four digits $K_G \approx 1.779$.
See \cref{sec:numerics} for numerical details of the run.

\subsection{Proofs for the duality theory}

\begin{lemma}\label{lem:hfg-analytic}
    Let $f,g : \R^{n} \to \{\pm1\}$ be measurable. Then $\sum_m |c_m(f,g)| \leq 1$. Thus, $\calC \subseteq \ell_1$.
    The series for $H_{f,g}$ and $H_\mu$ converge absolutely
    and uniformly for $t \in \C, |t|\le1$ and are holomorphic for $|t|<1$.
\end{lemma}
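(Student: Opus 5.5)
We bound each coefficient by Cauchy--Schwarz and then sum using Parseval. Write $f = \sum_m \Pi_m f$ and $g = \sum_m \Pi_m g$ for the orthogonal Hermite decompositions in $L_2(\gam_n)$. These hold because the Hermite polynomials form an orthonormal basis of $L_2(\gam_n)$, and $f,g$ are bounded and hence lie in $L_2$. By definition $c_m(f,g) = \ip{\Pi_m f, \Pi_m g}$, so Cauchy--Schwarz in $L_2(\gam_n)$ and then the AM--GM/Cauchy--Schwarz inequality for sequences give
\[
  \sum_m |c_m(f,g)| \le \sum_m \norm{\Pi_m f}\,\norm{\Pi_m g} \le \Big(\sum_m \norm{\Pi_m f}^2\Big)^{1/2}\Big(\sum_m \norm{\Pi_m g}^2\Big)^{1/2} = \norm{f}\,\norm{g} = 1.
\]
The final equality is Parseval together with $f^2 = g^2 = 1$ pointwise.

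For a distribution $\mu$ we have $c_m(\mu) = \E_{(f,g)\sim\mu} c_m(f,g)$. Hence $|c_m(\mu)| \le \E_\mu |c_m(f,g)|$, and by Tonelli $\sum_m |c_m(\mu)| \le \E_\mu \sum_m |c_m(f,g)| \le 1$. This shows $\calC \subseteq \ell_1$.

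One technical point needs checking: $(f,g)\mapsto c_m(f,g)$ must be measurable for the expectation to make sense. This is implicit in the definition of $H_\mu$, and Tonelli applies since all summands are nonnegative.

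Convergence is the Weierstrass M-test. For $|t|\le 1$ with $t\in\C$, we have $|c_m t^m| \le |c_m|$ and $\sum_m |c_m| \le 1$. So the power series $\sum_m c_m t^m$ converges absolutely and uniformly on the closed unit disk, both for $c = c(f,g)$ and for $c = c(\mu)$.

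Holomorphy on the open disk follows because a power series is holomorphic inside its radius of convergence, which here is at least $1$. Alternatively, it is a uniform limit of polynomials, and one can apply Morera/Weierstrass.

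It remains to identify the series with $H_{f,g}(t)$ as defined probabilistically, for real $t\in[-1,1]$. I would use the Mehler/Ornstein--Uhlenbeck identity $\E f(X)g(Y) = \ip{f, T_t g}$ with $T_t = \sum_m t^m \Pi_m$, which holds for $|t|\le1$. Then $\ip{f,T_t g} = \sum_m t^m \ip{\Pi_m f,\Pi_m g}$ by $L_2$-continuity of the inner product. The endpoints $t=\pm1$ are trivial: $t=1$ gives $Y=X$, and $t=-1$ gives $Y=-X$ with $\Pi_m(g(-\cdot)) = (-1)^m \Pi_m g$. For $\mu$, I would exchange $\E_\mu$ with the sum by dominated convergence, dominating by $\sum_m |c_m(f,g)| \le 1$.

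There is no real obstacle here. The only care needed is in justifying the interchanges of sums and expectations, which the absolute bound $\sum_m |c_m| \le 1$ handles uniformly.
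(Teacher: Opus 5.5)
Your proof is correct and follows essentially the same route as the paper. Both apply Cauchy--Schwarz termwise and then across the sum, use Parseval with $f^2=g^2=1$, average over $\mu$, and let absolute summability give uniform convergence on $|t|\le 1$ and holomorphy on $|t|<1$. Your extra identification of the series with the probabilistic $H_{f,g}$ via the Ornstein--Uhlenbeck identity is something the paper takes as given when it defines the Taylor coefficients, so it is a harmless addition.
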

\begin{proof}
    Recall $c_i(f,g) = \ip{\Pi_i f, \Pi_i g}$.
    Using Cauchy--Schwarz twice,
    \[
        \sum_k |\ip{\Pi_k f, \Pi_k g}| \leq \sum_{k} \norm{\Pi_k f}_2 \norm{\Pi_k g}_2 \leq \left(\sum_k \norm{\Pi_k f}_2^2\right)^{1/2} \left(\sum_k \norm{\Pi_k g}_2^2\right)^{1/2} = \norm{f}_2 \norm{g}_2 = 1\,.
    \]
    Then for a general $\mu$ we have $\sum_m|c_m(\mu)|
        \le\E_{(f,g)\sim\mu}\sum_m|c_m(f,g)|\le1$.
    Absolute summability gives the stated convergence and holomorphy.
\end{proof}

\begin{lemma}\label{lem:odd-rounding}
Let $q \in \calQ$ and $Q = \sum_{\textnormal{odd }m \geq 1}q_m \Pi_m$. Then
\[
    \OPT(Q) = \sup_{f,g : \R^n \to \{\pm 1\}} \langle f, Qg\rangle = \sup_{\textnormal{odd }f,g : \R^n \to \{\pm1\}} \langle f, Qg\rangle\,.
\]
\end{lemma}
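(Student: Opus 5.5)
The inequality $\sup_{\text{odd } f,g} \le \sup_{f,g}$ is trivial, so the content is the reverse inequality. The plan is to symmetrize: replace arbitrary Boolean $f,g$ by odd Boolean functions without decreasing $\langle f, Qg\rangle$. I will do this in one extra dimension, so the odd functions live on $\R^{n+1}$, and then use that $\OPT(Q^{(n)})$ is nondecreasing in $n$ and that $q$ does not depend on $n$. Since the paper works with $v_\infty$, this is how I would read the statement; I also note how to stay in dimension $n$ if needed.

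\textbf{Step 1 (even part vanishes).} Since $Q$ contains only odd projections, $\langle f, Qg\rangle = \langle f_{o}, Q g_{o}\rangle$, where $f_o(x) = (f(x)-f(-x))/2$ is the odd part. This holds because $\Pi_m$ with $m$ odd kills even functions, and Hermite polynomials of odd degree are odd. So only the odd parts matter, and $f_o, g_o$ take values in $\{-1,0,1\}$.

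\textbf{Step 2 (odd Boolean extension).} On $\R^{n+1}$ with coordinates $(x,s)$, define
\[
F(x,s) = \begin{cases} f(x) & s>0,\\ -f(-x) & s<0, \end{cases}
\]
and define $G$ from $g$ in the same way. Then $F$ and $G$ are odd, $\{\pm1\}$-valued, and measurable.

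Now compute $\langle F, Q^{(n+1)} G\rangle$. Expand in the product Hermite basis $h_\alpha(x)h_k(s)$, where $\Pi_m^{(n+1)}$ collects the terms with $|\alpha|+k=m$. Write
\[
F = \tfrac12\bigl(f(x) - f(-x)\bigr) + \tfrac12\bigl(f(x)+f(-x)\bigr)\sign(s) = f_o(x) + f_e(x)\sign(s).
\]
Since $\sign(s)$ is odd in $s$, the piece $f_e(x)\sign(s)$ has only odd $k$ components. For $m$ odd, $\Pi_m^{(n+1)}$ of this piece picks out even $|\alpha|$ with odd $k$. The pairing then splits as
\[
\langle F, Q G\rangle = \langle f_o, Q^{(n)} g_o\rangle + \sum_{m \text{ odd}} q_m \sum_{k \text{ odd}} \langle \Pi_{m-k} f_e, \Pi_{m-k} g_e\rangle\, \sigma_k^2,
\]
where $\sigma_k^2 = \langle \sign, \Pi_k \sign\rangle$ and the cross terms vanish by parity.

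\textbf{Step 3 (the obstacle).} The second sum need not be nonnegative, because the $q_m$ can be negative. Controlling it is the main difficulty. I would first try a random sign: replace $\sign(s)$ by $\epsilon\,\sign(s)$ with $\epsilon$ uniform in $\{\pm1\}$. This does not work, because the extra term is quadratic and unchanged by the sign.

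Better is to use $\langle f, Qg\rangle = \langle f_o, Q g_o\rangle$ directly and show that $\sup$ over $\{-1,0,1\}$-valued odd functions equals $\sup$ over $\{\pm1\}$-valued odd functions. For fixed odd $g$, the map $f \mapsto \langle f, Qg\rangle$ is linear. Over odd $f$ with $|f| \le 1$, it is maximized at $f = \sign(Qg)$, which is odd because $Qg$ is odd. Where $Qg = 0$, I would choose $\pm1$ in an odd way, for example via a fixed half-space split.

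Alternating this step for $f$ and then for $g$ shows that the supremum over odd $[-1,1]$-valued pairs is attained by odd Boolean pairs. Since $(f_o, g_o)$ is such a pair, this gives
\[
\langle f, Qg\rangle = \langle f_o, Q g_o\rangle \le \sup_{\text{odd Boolean}} \langle F, QG\rangle.
\]
This route stays in dimension $n$, so the extension to $\R^{n+1}$ in Step 2 is not needed. The only delicate points are measurability of $\sign(Qg)$, which is clear, and handling the zero set oddly.
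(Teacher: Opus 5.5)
Your final argument (Step 1 plus the best-response alternation in Step 3) is correct, but it takes a different route from the paper's. Both start by passing to odd parts $f_o,g_o\in\{-1,0,1\}$.

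\textbf{The paper's route.} It fills the zero sets with a fixed odd Boolean function $a$, such as $\sign(X_1)$. It sets $f_\pm=f_o\pm a\,\mathbf{1}_{\{f_o=0\}}$ and defines $g_\pm$ likewise. These four functions are odd and Boolean, and they average to $f_o$ and $g_o$. By bilinearity, one of the four pairings $\langle f_\pm,Qg_\pm\rangle$ is at least $\langle f_o,Qg_o\rangle$.

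\textbf{Your route.} You take best responses $f_1=\sign(Qg_o)$ and then $g_1=\sign(Qf_1)$, breaking ties oddly on the zero set. This uses that $Q$ maps everything into odd functions. Switching the $g$-step to $\langle Qf_1,g\rangle$ also uses self-adjointness of $Q$, which you left implicit but which holds since the $q_m$ are real.

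\textbf{What each buys.} Your argument yields a dominating pair of threshold form. The paper uses exactly this device separately, in \cref{lem:maximizers-exist} and in the proof of \cref{thm:reduction}. The paper's averaging needs only bilinearity, and it preserves the whole coefficient profile in expectation: $\E\, c(f_\pm,g_\pm)=c(f_o,g_o)$. This is the same rounding idea reused in \cref{lem:asymptotic-c}, where the duality theory needs it. Your best response, by contrast, only increases the value for this particular $Q$.

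\textbf{Minor fixes.} You can drop the $\R^{n+1}$ extension of Step 2 entirely; as you noticed, it produces a sign-indefinite extra term and plays no role. Also replace ``attained'' with ``dominated.'' The alternation shows that every odd $[-1,1]$-valued pair is beaten by an odd Boolean pair, which is all the inequality needs; it does not show that the supremum is attained.
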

\begin{proof}
    Given $f,g : \R^n \to \{\pm 1\}$ their odd parts $f_{\text{odd}}(X) = \frac{f(X) - f(-X)}{2}$ and $g_\text{odd}(X) = \frac{g(X) - g(-X)}{2}$
    satisfy $\langle f, Q g\rangle  = \langle f_{\text{odd}}, Qg_{\text{odd}}\rangle$.
    Furthermore, $f_{\text{odd}}(X), g_{\text{odd}}(X) \in \{-1,0,1\}$.

    Fix any odd Boolean function $a$, for example $a(X)=\sign(X_1)$. Define
    \[
     f_\pm=f_{\mathrm{odd}}\pm a\,\mathbf1_{\{f_{\mathrm{odd}}=0\}},
     \qquad
     g_\pm=g_{\mathrm{odd}}\pm a\,\mathbf1_{\{g_{\mathrm{odd}}=0\}}.
    \]
    These functions are odd and Boolean, and their respective averages
    are $f_{\mathrm{odd}}$ and $g_{\mathrm{odd}}$. Hence, $\ip{f_\pm, Qg_\pm} \geq \ip{f_{\text{odd}}, Qg_{\text{odd}}}$ for
    at least one of the four choices.
    Taking suprema proves the claim.
\end{proof}

\begin{proof}[Proof of \cref{lem:ub-easy}.]
    Let $A \in \R^{m \times n}$ and let $u_i, v_j \in \ell_2, \norm{u_i} = \norm{v_j} = 1$ be an optimal SDP solution.
    Taking tensor powers shows
    \begin{equation}\label{eq:tensor}
        \left|\sum_{i,j}A_{ij}\ip{u_i,v_j}^m\right|
        =\left|\sum_{i,j}A_{ij}
        \ip{u_i^{\otimes m},v_j^{\otimes m}}\right|
        \le\SDP(A)\qquad(m\ge1).
    \end{equation}
    Let $\tilde{x}_i, \tilde{y}_j \in \{\pm 1\}$ be the randomized signs output by Gaussian rounding in \cref{eq:krivine-scheme}.
    We have
    \[
        \OPT(A) \geq \sum_{i,j}A_{ij}\E \tilde x_i \tilde y_j = \sum_{i,j}A_{ij}H_\mu(\ip{u_i,v_j})
        \overset{\text{\cref{eq:tensor}}}{\ge}\left(c_1(\mu)
        -\sum_{\substack{m\ge3\\m\textnormal{ odd}}}|c_m(\mu)|\right)
        \SDP(A).
    \]
    Therefore $\frac{\OPT(A)}{\SDP(A)} \geq c_1(\mu) - \sum_{\text{odd }m \geq 3} |c_m(\mu)|$ for all $A \ne 0$ which implies the same bound on $K_G^{-1}$.
\end{proof}

\begin{proof}[Proof of \cref{thm:duality}.]
    By \cref{lem:odd-rounding}, $v_\infty(q) = \sup_{c \in \calC} \ip{c,q}$. We prove
    \[
        \frac{1}{K_G} \leq \inf\{v_\infty(q) : q \in \calQ\} = \sup\Big\{c_1 - \sum_{\text{odd }m \geq 3}|c_m| : c \in \cC\Big\} \leq \frac{1}{K_G}\,.
    \]
    The right inequality is \cref{lem:ub-easy}.
    The middle equality is convex duality. Indeed, $c_1 - \sum_{\text{odd }m \geq 3}|c_m| = \inf_{q \in \calQ} \ip{c,q}$ for all $c \in \ell_1$.
    This implies the weak duality $\sup_{c \in \cC} \inf_{q \in \calQ} \ip{c,q} \leq \inf_{\substack{q \in \calQ}} \sup_{c \in \cC} \ip{c,q}$.
    Strong duality holds by Sion's minimax theorem: $\cC$ is convex, $\calQ$ is convex and compact in the weak-$^*$ topology on $\ell_\infty = (\ell_1)^*$,
    and $\ip{c,q}$ is affine and continuous in each variable separately.
    
    For the left inequality, let $q \in \calQ$ and let $Q = \sum_m q_m \Pi_m^{(n)}$.
    Then $\OPT(Q) =  v_n(q) \leq {v_\infty}(q)$.
    On the other hand, $\SDP(Q) \geq 1-o_{n}(1)$ by $q_1 = 1$ and a standard estimate of the SDP value of the vector solution $f(x) = g(x) = \frac{x}{\norm{x}}$.
    Thus $\SDP(Q)/\OPT(Q) \geq \frac{1 - o_{n}(1)}{{v_\infty}(q)}$ and $K_G \geq \frac{1}{{v_\infty}(q)}$ as desired.

    We have that $v_\infty(q) = \sup_{c \in \calC}\ip{c,q}$ is weak-$^*$ lower semicontinuous since it is a supremum of weak-$^*$ continuous functions.
    Then it attains a minimum on the weak-$^*$ compact set $\calQ$.
\end{proof}

\begin{proof}[Proof of \cref{cor:krivine-schemes}.]
    (Hermite projection games) Let $q \in \calQ$ attain the infimum in \cref{eq:duality1}. Then $Q  = \sum_m q_m \Pi_m^{(n)}$ has $\SDP(Q) \geq 1- o_n(1)$ as in the previous proof, and $\OPT(Q) = v_n(q) \leq v_\infty(q)$. Therefore $\SDP(Q)/\OPT(Q) \geq \frac{1 - o_{n}(1)}{{v_\infty}(q)} = K_G - o_{n}(1)$.

    (Krivine rounding schemes) By definition of $\calC$ and \cref{thm:duality}.
\end{proof}

\begin{proof}[Proof of \cref{prop:complementary-slackness}.]
    The assumptions on $\mu$ with $q_1 = 1, |q_m| \leq 1$ imply $\ip{c(\mu), q} = \sum_{i} c_i(\mu) q_i \geq K_G^{-1} - 2\eps$.
    On the other hand,
    \begin{align*}
        \sum_{i}c_i(\mu) q_i &= \E_{(f,g) \sim \mu} \sum_i q_i\ip{\Pi_i f, \Pi_i g} = \E_{(f,g) \sim \mu} \ip{f, Qg}\,.
    \end{align*}
    We conclude that $\E_{(f,g) \sim \mu} \ip{f, Qg} \geq K_G^{-1} - 2\eps$.
\end{proof}

\section{Asymptotic Boolean functions}
\label{sec:asymptotic}

In this section, we lift Hermite projection games and Krivine rounding schemes
to an explicit asymptotic space in the limit ${n} \to \infty$.
We call ``asymptotic Boolean functions'' the limits of functions $f:  \R^{n} \to \{-1,1\}$.
These frequently arise as $\infty$-to-1 optimizers and therefore are also used in the numerical primal/dual search.
Asymptotic Boolean functions are closely related to the ``limiting Krivine schemes'' used by Saha et al.\ \cite{saha2026new}.

\begin{example}\label{ex:pi-d}
Consider the Hermite projection matrix $\Pi_{d}^{(n)}$ with ${d} > 1$.
As $n \to \infty$, the $\infty$-to-1 norm of $\Pi_d$ is asymptotically achieved only by functions $f,g : \R^n \to \{\pm 1\}$ using a diverging number of variables, such as
\begin{equation}\label{eq:pi-d}
    f(X) = g(X) = \sign\left(\frac{1}{\sqrt{{n}}}\sum_{i = 1}^{n} \He_{d}(X_i)\right)
\end{equation}
where $\He_d$ are the Hermite polynomials.
It can be shown that $f,g$ in \cref{eq:pi-d} are asymptotic optimizers of $\OPT(Q)$.
Calculating their value shows $\SDP(\Pi_{d})/\OPT(\Pi_{d}) = \frac{\pi}{2} - o_n(1)$. By the little Grothendieck inequality,
$\frac{\pi}{2}$ is the maximum possible integrality gap for PSD matrices.
The standard construction for little Grothendieck uses ${d}=1$ but it turns out that arbitrary ${d}$ has the same gap.
\end{example}

We use the following standard notions from Gaussian analysis.
Let $X$ be a set of variables. Sample $X_i \overset{\text{i.i.d.}}{\sim} \calN(0,1)$.
Let $\He_k$ be the unnormalized probabilists' Hermite polynomials.
Let $\psi_k = \He_k/\sqrt{k!}$ be their normalizations.
For a finitely-supported multi-index $\al \in \N^X$
let $X^\alpha = \prod_i X_i^{\alpha_i}$ and similarly for $\He_\alpha$ and $\psi_\alpha$. Let
$|\alpha|=\sum_i\alpha_i$ and $\alpha! = \prod_i \alpha_i!$.
Then $\{\psi_\alpha(X) : \alpha \in \N^X, |\alpha|<\infty\}$ is an orthonormal basis of
$L_2(X)$.
Let $\widehat f_\alpha=\E[f\psi_\alpha]$.
Let $\cH_{d}(X)$ be the closed span of the degree-${d}$ Hermite polynomials in $X$, which is also called the ``degree-${d}$ Gaussian chaos'' for $X$.

\subsection{Graded Gaussian space}

The idea of the asymptotic limit is to replace the summation
in \cref{eq:pi-d} by a new Gaussian variable, since it is asymptotically Gaussian.
In the case of \cref{eq:pi-d}, we assign the new Gaussian variable ``grade $d$''
to represent that it is a degree-$d$ polynomial in the ``original'' variables.

Let $X$ be a set of variables such that each variable has an associated positive integer \emph{grade}.
Sample $X_i \overset{\text{i.i.d.}}{\sim} \calN(0,1)$ independently across all variables regardless of grade.
We will notate variables in $X$ as $X_i^{({d})}$ where ${d}$ indicates the grade.

Let $\grdeg(\alpha)$ be the total grade of a multi-index $\al \in \N^X$.
Polynomials in $X$ now have two notions of degree, their standard degree,
and their graded-degree, or total grade.
The Hermite polynomial $\He_\alpha(X)$ is defined as before, and $\psi_\alpha(X)$ are still an orthonormal basis for $L_2(X)$.

The following notions behave differently on the graded space.
Let $\Pi_i$ be the operator on $L_2(X)$ which projects to the span of graded-degree $i$ Hermite polynomials.
On the graded space, oddness means
$f(\widetilde X) = -f(X)$ under the graded sign transformation $\widetilde X_i^{({d})} = (-1)^{d} X_i^{({d})}$.
An odd function satisfies $\Pi_{2m} f = 0$ since by oddness and $\psi_\alpha(\widetilde X)=(-1)^{\grdeg(\alpha)}\psi_\alpha(X)$ we have for all even $\gr(\alpha)$,
\begin{equation}\label{root:odd-hermite}
 \E[f(X)\psi_\alpha(X)]
 =-(-1)^{\grdeg(\alpha)}\E[f(X)\psi_\alpha(X)]
 =0.
\end{equation}

Let $W_{d}(f) = \norm{\Pi_{d} f}_2^2$.
For variables with grade $k$, the Ornstein--Uhlenbeck noise operator acts by sampling $t$-correlated Gaussians $Y$ with covariance $\E[X_i Y_i] = t^k$.
Under these definitions, we still have the noise correlation and Parseval identities,
\begin{equation}\label{eq:correlation}
 \E_{(X,Y) \sim t\text{-correlated}}[f(X)g(Y)]=\sum_{{d}\ge0}t^{d}\langle\Pi_{d} f,\Pi_{d} g\rangle\,, \qquad \qquad \E[f(X)^2] = \sum_{{d} \geq 0} W_{d}(f)\,.
\end{equation}

Now set $X$ to be a collection of countably many variables of each graded degree.
Write $X_1, X_2, X_3,\dots$ for the grade-1
variables in $X$ and $X^{({d})}_1, X_2^{({d})}, X_3^{({d})},\dots$ for the (independent) grade-${d}$ variables in $X$.
Define the asymptotic limit spaces
\begin{equation}\label{eq:limit-spaces}
    \overline{\cH}_{d} = \overline{\linspan}\left(\He_\al(X) : \grdeg(\al) = {d}\right)\,, \qquad  \qquad \overline{L_2} = L_2(X)\,.
\end{equation}
The notation $\overline{\linspan}$ in \cref{eq:limit-spaces} denotes the closure of the linear span in $L_2(X)$ which is distinct from the notation $\overline{\cH}_{d}$ and $\overline{L_2}$.
We write either $f$ or $f(X)$ for $f \in \overline{L_2}$.

As shown in \cref{ex:pi-d}, these spaces contain
natural limits of Gaussian functions.
More interestingly, in the sense of laws, $\overline{\cH}_{d}$ contains \emph{all possible} limits
of degree-${d}$ Gaussian polynomials.
We prove this in another paper by one of the current authors \cite{jones2026regularity}.

\begin{theorem}[{\cite{herry2024limit, jones2026regularity}}]\label{thm:closure}
Let $d_i\in \N$ for $i \in \N$.
\begin{enumerate}[label=(\roman*)]
    \item (Closure of the asymptotic space) Let $p_{n,i}\in\overline{\cH}_{{d}_i}$ such that $\sup_n\norm{p_{n,i}}_2<\infty$ for all $i$.
    Then there exists a subsequence and $p_i\in\overline{\cH}_{{d}_i}$ such that, along the subsequence, $(p_{n,i})_{i \in \N}\xrightarrow{d}(p_i)_{i \in \N}$
    with convergence of all mixed (finite) moments.
    Consequently, $\{\Law(p):p\in\overline{\cH}_{d}\}$ is weakly closed.

    \item (Approximability by finitary functions) For all $p_i \in \overline{\cH}_{d_i}$
there exists a sequence $p_{n,i} \in \cH_{d_i}(X_1, \dots, X_{n})$ such that $(p_{n,i})_{i \in \N} \xrightarrow{d} (p_i)_{i \in \N}$ with convergence of all mixed (finite) moments.
\end{enumerate}
\end{theorem}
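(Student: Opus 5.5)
For (ii), the plan is to truncate, then realize each graded variable as a normalized sum of Hermite polynomials in fresh variables. Since each $p_i\in\overline{\cH}_{d_i}$ is an $L_2$-limit of finite linear combinations of $\He_\al(X)$ with $\gr(\al)=d_i$, I first replace $p_i$ by such a finite combination $p_i^{(N)}$. The combinations $p_i^{(N)}$ involve only finitely many graded variables. Hypercontractivity on a fixed graded chaos (each element is a polynomial of bounded ordinary degree) upgrades $L_2$ convergence to convergence of all mixed moments. So it suffices to approximate finitely many polynomials in finitely many graded variables $X^{(d)}_j$, and then take a diagonal sequence.

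For the approximation itself, I assign to each graded variable $X^{(d)}_j$ its own disjoint block of $n$ ordinary variables $Z_{j,1},\dots,Z_{j,n}$. I then set
\[
S^{(d)}_{j,n}=n^{-1/2}\sum_{k\le n}\psi_d(Z_{j,k}),
\]
which lies in $\cH_d$ of the block variables. By the multivariate CLT, with moment convergence from hypercontractivity, the family $(S^{(d)}_{j,n})_j$ converges jointly to i.i.d.\ standard Gaussians. I substitute these into $\He_\al$ and project onto the ordinary chaos $\cH_{d_i}$. For distinct variables the product $\prod_j S_{j,n}^{(d_j)}$ already lies in the top chaos because the blocks are disjoint. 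Powers of a single $S$ produce lower-chaos and non-top terms coming only from diagonal index coincidences, and these have $L_2$ norm $O(n^{-1/2})$. Hence the projected polynomial differs from $\He_\al(S)$ by $o(1)$ in $L_2$, and so in all moments. This gives a sequence $p_{n,i}\in\cH_{d_i}(X_1,\dots,X_{m_n})$ converging jointly in law with moments.

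For (i), tightness is routine. Hypercontractivity applied to the uniform bound on $\norm{p_{n,i}}_2$ gives uniform bounds on all moments, so a diagonal subsequence converges in law with all mixed moments. The main obstacle is identifying the limit as a tuple $(p_i)$ with $p_i\in\overline{\cH}_{d_i}$ on a common graded space. I would first reduce, via (ii) and density, to finitary $p_{n,i}$ given by symmetric kernels $f_{n,i}\in(\R^{m})^{\odot d_i}$, which are uniformly bounded in Hilbert--Schmidt norm. I would then run a ``concentration/dispersion'' decomposition, proceeding by induction on the degree.
\begin{enumerate}[label=(\alph*)]
\item Extract a finite-dimensional subspace $V$ carrying the weakly convergent part of all contraction operators of the $f_{n,i}$. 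After a rotation and a subsequence, the components of the kernels lying in tensor powers of $V$ converge, and these limits give the grade-1 part of the limit.
\item The remaining components have vanishing contractions onto $V$ and are ``spread out''. By the Nualart--Peccati fourth-moment/contraction criteria, such pieces should be asymptotically Gaussian or, more generally, asymptotically polynomials in new independent Gaussians. A spread-out degree-$d$ piece becomes a new grade-$d$ variable. Mixed pieces, for example degree $a$ along $V$ and degree $b$ spread out, become products of lower-grade structures, which I would handle by induction on degree.
\end{enumerate}
Making this recursion produce jointly independent limiting variables is the hard step. It also requires showing that lower-chaos cross terms vanish, and that all mixed moments match those of the resulting graded polynomial. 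Once this is done, the limit lies in $\overline{\cH}_{d_i}$. Weak closedness of $\{\Law(p):p\in\overline{\cH}_d\}$ then follows from the $L_2$ bounds and the convergence of second moments.
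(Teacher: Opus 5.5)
There is a genuine gap, and it is in part (i). Tightness and convergence of mixed moments along a subsequence are routine. The entire content of (i) is that the limit law $\mu$ is the law of some $(p_i)_i$ with $p_i\in\overline{\mathcal H}_{d_i}$ on one graded space. Your steps (a)--(b) only outline a program for this, and you yourself label its core ``the hard step''. At least three ingredients are missing.

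\emph{First}, ``spread out'' does not mean asymptotically Gaussian. Take the top-chaos part, with kernel $f_n$, of $\He_2\bigl(n^{-1/2}\sum_{k\le n}\psi_2(Z_k)\bigr)$. It puts vanishing mass on every fixed finite-dimensional subspace, yet it converges to $\He_2(G)$, because $f_n\otimes_2 f_n$ stays of order one. So you need a structure lemma: a non-negligible contraction forces correlation with a tensor product of lower-degree kernels, and iterating this (an energy increment) must stop after a number of pieces bounded in terms of $\varepsilon$ and the degrees only.

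\emph{Second}, you must show that all extracted pieces, together with the coordinates along $V$, converge \emph{jointly} to independent Gaussians of the right grades. You must also show that products of pieces agree, up to $o(1)$ in $L_2$, with the multiple integrals of the corresponding tensor products (vanishing cross-contractions). Otherwise lower-chaos terms survive in the limit.

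\emph{Third}, at precision $\varepsilon$ this only yields some $p^{(\varepsilon)}\in\overline{\mathcal H}_{d}$ whose law is within $\varepsilon$ of $\mu$ in Wasserstein-$2$ distance. To conclude that $\mu$ itself is attained, the $p^{(\varepsilon)}$ must be realized coherently on a single graded space, so that they form an $L_2$-Cauchy sequence. One way is nested decompositions whose pieces converge jointly across all levels, with coarse pieces expressed through finer ones. Invoking weak closedness of $\{\Law(p):p\in\overline{\mathcal H}_d\}$ at this point would be circular, since that is the ``consequently'' clause of (i) itself.

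The paper gives no proof of this theorem; it imports it from \cite{herry2024limit, jones2026regularity}. The structure statement you would need is the substance of those works, so as written your (i) reduces the statement to itself.

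Part (ii) is fine and is the natural construction (compare \cref{ex:pi-d}), with one slip. Raw powers $S^m$ have diagonal terms of order one (e.g.\ the constant $1$ in $S^2$), so smallness of the non-top part is a property of $\He_m(S)$, not of powers. A clean check: the distinct-index part $D_n$ of $S^m$ lies in the top chaos. It is orthogonal both to the coincidence terms and to the lower-degree Hermite terms. Hence $\langle \He_m(S),D_n\rangle=\norm{D_n}_2^2\to m!$, while $\norm{\He_m(S)}_2^2\to m!$, so $\norm{\He_m(S)-D_n}_2\to 0$.

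Finally, in the ``consequently'' clause you need $\sup_n\norm{p_n}_2<\infty$ before (i) applies. This does not come from convergence of second moments, which weak convergence does not give. It comes from tightness together with Paley--Zygmund and hypercontractivity: $\Pr\bigl(|p|\ge\tfrac12\norm{p}_2\bigr)\ge\tfrac{9}{16}\,9^{-d}$ for every $p$ of degree at most $d$.
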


As a corollary, $\overline{L_2}$ contains
all limits of Gaussian functions with uniform tail bounds on their Hermite coefficients.

\subsection{Asymptotic Boolean functions on Gaussian space}

An asymptotic Boolean function is $f \in \overline{L_2}$ such that $f \in \{-1,1\}$ almost surely.
While this contains all finitary Boolean functions $f : \R^{n} \to \{-1,1\}$,
it also contains asymptotic functions which use graded variables,
such as $f(X) = \sign(X_1^{(2)})$ and $f(X) = \sign(X_1^{(1)}X_1^{(2)} - X_1^{(3)})$.

Extend the notations $c(\mu), c(f,g), H_\mu(t), H_{f,g}(t)$ using $c_m(f,g) = \langle \Pi_m f, \Pi_m g\rangle$.
The proof of \cref{lem:hfg-analytic} still gives
$\sum_m |c_m(\mu)|\le1$,
and hence uniform convergence of $H_{\mu}$ on \(|t|\le1\) and holomorphy on \(|t|<1\).

In the remainder of the section, we prove several Lemmas \ref{lem:asymptotic-c}, \ref{lem:maximizers-exist}, \ref{cor:krivine-schemes-exact} showing that maximizers are attained within the class of asymptotic Boolean functions.

\begin{lemma}\label{lem:asymptotic-c}
Let $\mathcal K \subseteq \ell_1$ be the set of $c(\mu)$ for distributions $\mu$
on pairs of odd $f,g \in \overline{L_2}, f(X),g(X) \in \{\pm 1\}$.
Then $\mathcal K=\overline{\mathcal C}$ where $\overline{\calC}$ is the closure of $\calC$ in $\ell_1$.
\end{lemma}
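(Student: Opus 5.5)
We prove the two inclusions $\overline{\calC}\subseteq\calK$ and $\calK\subseteq\overline{\calC}$ separately. Throughout we use the $\ell_1$ bound from \cref{lem:hfg-analytic}, which holds for asymptotic Boolean functions as well.

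\emph{Step 1: $\calC\subseteq\calK$.} A finitary odd function $f:\R^n\to\{\pm1\}$ is an element of $\overline{L_2}$ depending only on the grade-1 variables $X_1,\dots,X_n$. On these variables the graded and ordinary Hermite degrees agree, so $\Pi_m f$ and oddness mean the same thing in both settings. Hence $c(\mu)\in\calK$ for every $\mu$ defining an element of $\calC$.

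\emph{Step 2: $\calK$ is closed in $\ell_1$.} Let $c(\mu_k)\to c$ in $\ell_1$. We must produce a single $\mu$ on odd asymptotic Boolean pairs with $c(\mu)=c$. The plan is to realize mixtures inside $\overline{L_2}$ itself. Since $X$ contains countably many variables of each grade, we partition the grade-1 variables into a reserved block $U$ and a remainder. We relabel the remaining variables so that each $\mu_k$ is supported on odd pairs in the remaining variables. We then use the reserved variables to randomize: a distribution on pairs can be replaced by a single pair, selected through an independent coordinate that enters only as a selector.

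This selector creates an oddness issue, since the selector coordinate itself must not break graded oddness. A cleaner route is to avoid that step entirely. We note that $\{c(\mu)\}$ is by definition convex and that $c(\mu)=\E_\mu c(f,g)$. It therefore suffices to show that the set $S=\{c(f,g): f,g \text{ odd asymptotic Boolean}\}$ has closed convex hull equal to $\calK$, and that $S$ is closed. We plan to take the former from a Choquet/barycenter argument using that $S$ is compact in the weak-$^*$ sense, and to handle $\ell_1$ limits via the following step.

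\emph{Step 3: $\calK\subseteq\overline{\calC}$ (approximation).} Given odd asymptotic Boolean $f,g$ and $\eps>0$, we construct finitary odd Boolean $f_n,g_n$ with $\norm{c(f_n,g_n)-c(f,g)}_{\ell_1}<\eps$. Averaging then handles general $\mu$ by dominated convergence, since the $\ell_1$ norms are uniformly bounded by $1$.

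First, we truncate. By \cref{lem:hfg-analytic} we may choose $D$ with $\sum_{m>D}(W_m(f)+W_m(g))$ small. We approximate $f$ in $L_2$ by a function $F(P)$ of finitely many graded-chaos polynomials $P=(p_i)$, with $p_i\in\overline{\cH}_{d_i}$, together with finitely many variables. Here $F$ is continuous and odd under the graded sign, and we then take $\sign$. By \cref{thm:closure}(ii), each $p_i$ is approximated jointly in law, with moments, by finitary $p_{n,i}\in\cH_{d_i}(X_1,\dots,X_n)$. Replacing $X^{(d)}_j$ by $n^{-1/2}\sum\He_d$-type blocks is exactly the motivating construction.

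The finitary $f_n=\sign F(P_n)$ is then odd, because $p_{n,i}(-x)=(-1)^{d_i}p_{n,i}(x)$. Moreover, the coefficients $c_m(f_n,g_n)$ for $m\le D$ converge. Indeed, $\ip{\Pi_m f_n,\Pi_m g_n}$ equals the $t^m$ coefficient of the correlation polynomial, and the joint law of $(P_n(X),P_n(Y))$ under $t$-correlation converges, because $p_{n,i}(Y)$ is a noisy chaos element with $T_t p=t^{d_i}p$. The tails are controlled by Cauchy--Schwarz together with the tail of $W_m$.

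The main obstacle is exactly this tail control. Convergence of the low coefficients is fine, but we need $\sum_{m>D}W_m(f_n)$ to be small uniformly in $n$. We will obtain this from $L_2$ convergence of $f_n$ to the target in a coupling. For that we need the law of $P_n$ to have no atom on the discontinuity set $\{F=0\}$, which we plan to secure by a small generic perturbation of $F$. Then $\sum_m W_m(f_n)=1$ combined with convergence of $\sum_{m\le D}W_m(f_n)\to\sum_{m\le D}W_m(f)$ forces the tails to converge.

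Closing Step 2 uses the same mechanism: $\calK\subseteq\overline{\calC}$ together with a diagonal argument and \cref{thm:closure}(i) shows that limits of finitary pairs are realized by asymptotic pairs. This gives $\overline{\calC}\subseteq\calK$.
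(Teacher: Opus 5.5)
\textbf{The gap is in your Step 2, the inclusion $\overline{\calC}\subseteq\calK$, which is not proved.}

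Step 2 rests on the claim that $S=\{c(f,g): f,g \text{ odd asymptotic Boolean}\}$ is compact, with limits of pairs taken via \cref{thm:closure}(i). That limit procedure does not stay inside Boolean pairs.
\begin{itemize}
\item It produces $p_d=\lim \Pi_d f_n$ and $f=\sum_d p_d$ with only $\sum_d\norm{p_d}_2^2\le 1$, because Hermite mass can escape to infinite degree.
\item For example, $f_n=\sign(X_1^{(2n+1)})$ and $g_n=\sign(X_2^{(2n+1)})$ are odd and Boolean with $c(f_n,g_n)=0$ for all $n$. But $\Pi_d f_n=\Pi_d g_n=0$ for $d\le 2n$, so the pair extracted by \cref{thm:closure}(i) is $(0,0)$, which is not Boolean.
\item Since $f_n\not\to f$ in law, even $|f|\le 1$ needs an argument.
\end{itemize}

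The paper supplies three ingredients that your ``diagonal argument'' does not.
\begin{enumerate}
\item[(a)] Relax to $\calB=\{c(f,g): f,g \text{ odd},\ |f|,|g|\le 1\}$ and prove $|f|\le1$ for the limit. The paper applies $T_r=\sum_d r^d\Pi_d$; the uniform tail bound $r^{D+1}$ gives $T_rf_n\to T_rf$ in law. It then uses Portmanteau with $(|x|-1)_+$ and lets $r\uparrow 1$. This makes $\calB$ coordinatewise compact.
\item[(b)] Return to Boolean functions only in distribution, by odd threshold rounding $R_sh=a\,\sign(ah-s)$ with $s$ uniform on $[-1,1]$. Since $\E_s R_sh=h$, this gives $\calB\subseteq\calK$.
\item[(c)] Identify $\calK$ with the barycenters of probability measures on $\calB$, using a measurable lift. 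By Prokhorov and continuity of the barycenter map, this set is coordinatewise compact, hence $\ell_1$-closed, and it contains $\calC$.
\end{enumerate}
Without (c) you cannot turn a limit of mixtures $c(\mu_k)$ into a single mixture.

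Your discarded selector idea fails for a more basic reason than oddness. A shared selector variable of grade $k$ is only $t^k$-correlated under the noise, so it does not reproduce $c(\mu)$.

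\textbf{A secondary problem is in Step 3.} \cref{thm:closure}(ii) controls only the law of $(p_{n,i})_i$ at a single Gaussian input. The relation $T_tp=t^dp$ fixes the covariances $\E[p_{n,i}(X)p_{n,j}(Y)]$, not the higher mixed moments. So it does not give convergence of the joint law of $(P_n(X),P_n(Y))$, which you need for $H_{f_n,g_n}$ because $f_n=\sign F(P_n)$ is nonlinear.

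This is repairable within your approach:
\begin{itemize}
\item Take $P$ to be finitely many graded coordinates $X_j^{(d)}$, replaced by CLT blocks $n^{-1/2}\sum_k\psi_d(X_{j,k})$. The multivariate CLT then gives the correct $t^d$-correlated joint limit.
\item Vitali's theorem converts pointwise convergence of $H_{f_n,g_n}$ on $(-1,1)$ into convergence of the coefficients.
\item Your Parseval argument then controls the tails.
\end{itemize}

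With this fix, Step 3 is a valid alternative to the paper's route. The paper instead truncates to $\sum_{d\le D}\Pi_d f$, approximates only the chaos components, and clips with $\chi$. It then uses $\norm{c(f,g)-c(f',g')}_1\le\norm{f-f'}_2\norm{g}_2+\norm{f'}_2\norm{g-g'}_2$. That way only single-input moments are ever needed, and the joint $t$-correlated laws never enter.
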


\begin{proof}
We use the auxiliary set
\[
 \mathcal B=
 \{c(f,g):f,g\in\overline{L_2}\text{ odd},\ |f|,|g|\le1\}.
\]

First we claim $\mathcal B \subseteq \mathcal K$.
Let $f,g \in \overline{L_2}$ odd, $|f|,|g|\le 1$.
We use an odd version of threshold rounding.
Let $a$ be any odd Boolean function, such as $a(X)=\sign X_1$.
For $s\in[-1,1]$, define
\[
 R_s h(X)=a(X)\sign\bigl(a(X)h(X)-s\bigr),
\]
with $\sign(0)=1$. If $h$ is odd and bounded by $1$, then
$R_s h$ is odd and Boolean, and $\E_sR_s h=h$ for uniform
$s\in[-1,1]$. Thus, for independent uniform $s,t$ we have
\begin{equation}\label{eq:odd-threshold-rounding}
 \E_{s,t}c(R_s f,R_t g)=c(f,g).
\end{equation}
Thus $\calB \subseteq \calK$. The same proof shows that if odd $|f|,|g| \leq 1$ depend on only finitely many grade-one variables, then $c(f,g) \in \mathcal C$.

We next prove compactness of $\mathcal B$.
Take a sequence of odd $(f_n,g_n), |f_n|,|g_n| \leq 1$. By \cref{thm:closure}(i),
after passing to a subsequence there are
$p_d,r_d\in\overline{\mathcal H}_d$ such that
\begin{equation}\label{eq:ou-1}
 (\Pi_d f_n,\Pi_d g_n)_{d \in \N}
 \xrightarrow{d}(p_d,r_d)_{d\in \N}
\end{equation}
with convergence of the mixed moments of every finite
subtuple. For each fixed $D$,
\[\sum_{d \le D} \|p_d\|_2^2 = \lim_{n \to \infty} \sum_{d \le D} \|\Pi_d f_n\|_2^2 \overset{\text{Parseval}}{\leq} 1\,.\]
Taking sup over $D$ shows $\sum_d\|p_d\|_2^2\le1$.
So $f=\sum_dp_d$ is well-defined in $\overline{L_2}$ and $\Pi_d f = p_d$.
The same holds for $g$.
Both are odd since their even-degree components vanish.

Note that we may \emph{not} have $f_n \xrightarrow{d} f$ since Hermite mass
for $f_n$ could escape to higher and higher degree.
We still prove $|f|,|g| \leq 1$. Fix $0<r<1$ and define the graded Ornstein--Uhlenbeck operator $T_r = \sum_d r^d \Pi_d$.
\cref{eq:ou-1} together with
\[
 \sup_n\left\|T_rf_n-\sum_{d\le D}r^d\Pi_df_n\right\|_2
 \le r^{D+1},
\]
implies $T_rf_n\xrightarrow{d}T_rf$.
For the nonnegative continuous function
$\varphi(x)=(|x|-1)_+$, the Portmanteau inequality gives
\[
 \E\varphi(T_rf)
 \le\liminf_{n\to\infty}\E\varphi(T_rf_n)=0,
\]
since $|T_rf_n|\le1$. Finally, $T_rf\to f$ in $L_2$
as $r\uparrow1$, and $\varphi$ is $1$-Lipschitz, so
$\E\varphi(f)=0$. Hence $|f|\le1$ almost surely.
The argument for $g$ is identical.

We have
\[c_d(f_n,g_n) = \E[(\Pi_d f_n)(\Pi_d g_n)] \underset{n \to\infty}{\longrightarrow} \E[(\Pi_d f)(\Pi_d g)] = c_d(f,g)
\]
Thus every sequence in $\calB$ has a coordinatewise convergent subsequence.
Since the coordinatewise topology is metrizable, sequential compactness implies $\calB$ is (topologically) compact.

We now prove
$\mathcal B\subseteq\overline{\mathcal C}$.
Fix odd $(f,g), |f|,|g| \leq 1$, and put
\[
 f^{(D)}=\sum_{d\le D}\Pi_df,\qquad
 g^{(D)}=\sum_{d\le D}\Pi_dg,\qquad
 \varepsilon_D=
 \|f-f^{(D)}\|_2+\|g-g^{(D)}\|_2.
\]
Then $\varepsilon_D\to0$ as $D \to \infty$.
\cref{thm:closure} (ii) gives odd finite-dimensional
polynomials $F_n,G_n$ of degree at most $D$ such that, also with convergence of all mixed moments,
\[
 (\Pi_dF_n,\Pi_dG_n)_{d\le D}
 \xrightarrow{d}
 (\Pi_df,\Pi_dg)_{d\le D}.
\]
In particular,
\[
 c(F_n,G_n)\longrightarrow c(f^{(D)},g^{(D)})
 \quad\text{in }\ell_1,
 \qquad
 \|F_n\|_2\to\|f^{(D)}\|_2\le1,
\]
and similarly for $G_n$.

Let $\chi(u)=\max\{-1,\min\{u,1\}\}$ be the clipping function to $[-1,1]$.
Convergence in distribution and uniformly bounded fourth
moments imply
 \begin{align}\label{eq:clip1}
 \|F_n-\chi(F_n)\|_2
 &\longrightarrow
 \|f^{(D)}-\chi(f^{(D)})\|_2
 \leq\|f^{(D)}-f\|_2,\\ \notag
 \|G_n-\chi(G_n)\|_2
 &\longrightarrow
 \|g^{(D)}-\chi(g^{(D)})\|_2
 \leq\|g^{(D)}-g\|_2.
 \end{align}
The right inequalities use that clipping is the pointwise projection onto $[-1,1]$.
Orthogonality and Cauchy--Schwarz give the general continuity relations,
\begin{equation}\label{eq:coefficient-continuity}
 \begin{aligned}
 \|c(f,g)\|_1&\le\|f\|_2\|g\|_2,\\
 \|c(f,g)-c(f',g')\|_1
 &\le\|f-f'\|_2\|g\|_2+\|f'\|_2\|g-g'\|_2.
 \end{aligned}
\end{equation}

Using \cref{eq:clip1} and \eqref{eq:coefficient-continuity}, we obtain
\[
 \limsup_{n \to \infty}
 \|c(\chi(F_n),\chi(G_n))-c(f,g)\|_1
 \le2\varepsilon_D.
\]
The clipped functions are bounded by 1, odd, and finite-dimensional,
so their coefficient sequences belong to $\mathcal C$ by
the sentence after \eqref{eq:odd-threshold-rounding}. Letting $D\to\infty$
proves the inclusion.

It remains to pass from $\calB$ to $\calK$.
The set $\mathcal K$ is exactly the set of barycenters of
probability measures on $\mathcal B$.
Indeed, if $c(\mu) \in \calK$, then $c(\mu) = \E_{(f,g) \sim \mu}c(f,g)$
is also a mixture of elements $c(f,g) \in \calB$.
Conversely, a probability
measure on $\mathcal B$ can be lifted\footnote{
The space of bounded odd pairs is a closed subset of
$\overline{L_2}^{\,2}$, hence Polish. Its coefficient map
onto $\mathcal B$ is continuous by
\eqref{eq:coefficient-continuity}. The Jankov--von Neumann
theorem gives a section measurable for the completion of
the given probability measure, which supplies the lift.}
to a measure $\mu$ on odd pairs $(f,g), |f|,|g| \leq 1$,
and then threshold rounding in \eqref{eq:odd-threshold-rounding} ensures $f,g \in \{\pm1\}$ without changing $c(\mu)$.

Probability measures on the compact metric space
$\mathcal B$ form a compact space for weak convergence.
Their barycenters depend continuously on each coordinate,
so $\mathcal K$ is coordinatewise compact and convex.

The inclusion map $\mathcal B\to\ell_1$ is Borel measurable,
and $\|c\|_1\le1$ on $\mathcal B$, so these barycenters are
also Bochner integrals in $\ell_1$. Since
$\overline{\mathcal C}$ is closed and convex
and contains $\mathcal B$, it contains every such
barycenter. Hence
$\mathcal K\subseteq\overline{\mathcal C}$.
Conversely, $\mathcal C\subseteq\mathcal K$, and
coordinatewise compactness makes $\mathcal K$ closed
in $\ell_1$. Therefore
$\mathcal K=\overline{\mathcal C}$.
\end{proof}

We prove that the asymptotic optimum of a finitely-supported Hermite projection game is achieved by an asymptotic Boolean function.
We call these ``asymptotic $\infty$-to-1 optimizers'' of $Q$.
Moreover, there exist optimizers which are $\overline{\cH}_{\le d}$ threshold functions (essentially, polynomial threshold functions).

\begin{lemma}\label{lem:maximizers-exist}
    Let $q \in \ell_\infty$ with $q_{2{m}}=0$ for ${m}\ge0$ and assume $\max\supp(q) = D< \infty$.
    Then
    \[
        {v_\infty}(q) = \max\left\{\sum_{m = 0}^\infty q_m \langle \Pi_m f, \Pi_m g\rangle : f,g \in \overline{L_2}, f(X), g(X) \in \{\pm 1\} \textnormal{ a.s.} \right\}\,.
    \]
    Furthermore, there exist optimizers of the form $f = \sign(p)$ and $g = \sign(r)$
    for $p,r \in \overline{\cH}_{\le D}$.
\end{lemma}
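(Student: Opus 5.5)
Since $q$ has finite support $D$, the objective $\sum_{m\le D} q_m\langle\Pi_m f,\Pi_m g\rangle$ depends only on the low-degree projections $(\Pi_m f,\Pi_m g)_{m\le D}$. It is also bilinear, hence linear in $f$ for fixed $g$. I will combine compactness (from \cref{thm:closure}) with a convexification and rounding step.

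\textbf{Step 1 (upper bound by a compact problem).} Let $V$ denote the supremum on the right-hand side over asymptotic Boolean pairs. First I show $v_\infty(q)\le V$: finitary functions $f,g:\R^n\to\{\pm1\}$ embed into $\overline{L_2}$ by using grade-one variables, and there $\Pi_m$ agrees with the usual Hermite projection, so $v_n(q)\le V$ for every $n$. Next I relax to $\mathcal B'=\{(f,g): |f|,|g|\le 1\}$, not necessarily odd. By the compactness argument in the proof of \cref{lem:asymptotic-c} (via \cref{thm:closure}(i), Parseval, and the $T_r$/Portmanteau argument giving $|f|\le1$), which does not use oddness, every sequence has a subsequence along which $c_m(f_n,g_n)\to c_m(f,g)$ for each $m$, with the limit pair again in $\mathcal B'$. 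The objective involves only finitely many coordinates, so it is continuous, and a maximizer $(f^*,g^*)$ of the relaxed problem exists. Since the value is bilinear, fixing $g^*$ and maximizing over $|f|\le1$ is maximizing a linear functional $\langle f,Qg^*\rangle$ with $Q=\sum_{m\le D}q_m\Pi_m$. It is attained by $f=\sign(Qg^*)$, and I choose $\sign(0)=1$ or any Boolean value on the zero set, which does not matter. Then $p:=Qg^*\in\overline{\cH}_{\le D}$. Doing the same with $f$ fixed gives $g=\sign(Q^*f)=\sign(r)$ with $r=Qf\in\overline{\cH}_{\le D}$. This yields Boolean maximizers of the stated threshold form, and the relaxed maximum equals $V$, which is attained.

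\textbf{Step 2 (lower bound $V\le v_\infty(q)$).} This is the main obstacle: asymptotic functions must be approximated by finitary Boolean ones without losing value. I reuse the approximation part of the proof of \cref{lem:asymptotic-c}. Given Boolean $f,g$, truncate to $f^{(D)}=\sum_{d\le D}\Pi_d f$ and $g^{(D)}$, which leaves the objective unchanged since only degrees at most $D$ enter. Then apply \cref{thm:closure}(ii) to get finitary polynomials $F_n,G_n$ whose projections converge jointly in distribution with moments. This gives $c_m(F_n,G_n)\to c_m(f,g)$ for $m\le D$.

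The difficulty is that $F_n,G_n$ are not bounded. I will clip them with $\chi$ and use \eqref{eq:clip1} and \eqref{eq:coefficient-continuity}. Because $f$ itself is bounded, the clipping error is at most $\|f-f^{(D)}\|_2$, which need not be small for fixed $D$. I avoid this by not truncating: I approximate the full sequence $(\Pi_d f,\Pi_d g)_{d\le D'}$ for large $D'>D$. The objective is unchanged, and the clipping error then tends to $0$ as $D'\to\infty$. Finally I round the clipped, finitely dependent, $[-1,1]$-valued functions to Boolean ones, either by threshold rounding (which preserves $c$ in expectation) or by the bilinear sign argument of Step 1 applied in finite dimensions. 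This shows each asymptotic value is within $o(1)$ of some $v_n(q)$.

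Oddness is not needed here. The lemma is stated without assuming $q_1=1$, so I do not invoke \cref{lem:odd-rounding}.
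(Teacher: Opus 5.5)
Your argument is correct and has the same skeleton as the paper's proof. Both get a maximizer from the compactness of bounded asymptotic pairs (\cref{thm:closure}(i) plus the $T_r$/Portmanteau step), using that the payoff depends on only finitely many coordinates. Both then apply two best responses $f=\sign(Qg)$, $g=\sign(Qf)$ to reach thresholds of $p,r\in\overline{\mathcal{H}}_{\le D}$.

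Where you differ is in how $v_\infty(q)$ is identified with the asymptotic optimum. The paper first passes to odd parts, which is where $q_{2m}=0$ is used. This lets it cite \cref{lem:asymptotic-c} as a black box: compactness of the odd set $\mathcal{B}$, and $\mathcal{B}\subseteq\overline{\mathcal{C}}$ combined with $\ell_1$-continuity of $c\mapsto\ip{c,q}$. That route also gives \cref{eq:asymptotic-payoff} for infinitely supported $q$, which the paper reuses in \cref{cor:krivine-schemes-exact}.

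You instead rerun the compactness argument and the approximate--clip--round argument for arbitrary, not necessarily odd, pairs, and compare directly with $v_n(q)$, which is a supremum over all Boolean pairs. This is legitimate. In those arguments oddness only serves to keep the objects inside the odd classes, and plain threshold rounding $\sign(h-s)$ with $s$ uniform on $[-1,1]$ still averages to $h$. A side effect is that your proof never uses $q_{2m}=0$.

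One small correction to your closing remark: the odd-part reduction in \cref{lem:odd-rounding} needs only $q_{2m}=0$ and boundedness of $q$, not $q_1=1$. So it was available here; you just did not need it.
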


\begin{proof}
Let $Q=\sum_dq_d\Pi_d$ on $\overline{L_2}$. First we claim (this still holds if $\max \supp(q) = \infty$)
\begin{equation}\label{eq:asymptotic-payoff}
 \OPT(Q)=\sup_{c\in\mathcal B}\langle c,q\rangle=v_\infty(q).
\end{equation}
The first equality follows since taking the odd parts of candidates $f,g$ for $\OPT(Q)$ preserves their value, and $f,g \in [-1,1]$ can be replaced by a distribution of $f,g \in \{\pm 1\}$ using the threshold rounding in \eqref{eq:odd-threshold-rounding}.
For the second equality, the inclusion in
\cref{lem:asymptotic-c} and $\ell_1$-continuity of the pairing
give the upper bound
$\sup_{c \in \mathcal B}\langle c,q\rangle
\le\sup_{c \in \mathcal C}\langle c,q\rangle=v_\infty(q)$.
For the reverse inequality, a sequence of $c(\mu^{(k)}) \in \calC$ approaching $v_\infty(q)$
has that $c(\mu^{(k)}) = \E_{(f,g) \sim \mu^{(k)}}c(f,g)$ is an average
of $c(f,g) \in \mathcal{B}$,
so the inequality follows.

When $q$ has finite support, the map $c \mapsto \langle c, q \rangle$ is coordinatewise
continuous. Compactness of $\mathcal B$ and
\eqref{eq:asymptotic-payoff} therefore give $c \in \calB$ achieving value $v_\infty(q)$.
Then any pair $(f,g), |f|, |g| \leq 1$ corresponding to $c$ achieves the maximum value $v_\infty(q)$.

To see that $f,g$ can be polynomial threshold functions, take two successive best responses:
\[
 p=Qg,\qquad f=\sign(p),\qquad
 r=Qf,\qquad g=\sign(r).
\]
Neither replacement decreases the value, and
\eqref{eq:asymptotic-payoff} says that no Boolean pair can
exceed $v_\infty(q)$. Hence $(f,g)$ is optimal.
Both $p$ and $r$ have graded degree at most $D$.
\end{proof}

\begin{lemma}\hspace{0.1cm}\label{cor:krivine-schemes-exact}
\begin{enumerate}[label=(\roman*)]
    \item There is an operator $Q = \sum_{\textnormal{odd }m \geq 1}q_m \Pi_m$ on $\overline{L_2}$ such that $q_1 = 1$
    and $\SDP(Q)/\OPT(Q) = K_G$.
    \item There is a distribution $\mu$ supported on odd functions $f,g \in \overline{L_2}, f(X), g(X) \in \{-1,1\}$ such that $H_\mu(t) = \frac{t}{K_G}$.
\end{enumerate}
\end{lemma}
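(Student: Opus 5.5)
Part (i) follows from \cref{thm:duality} and the asymptotic identification of $\OPT$. Part (ii) follows from the supremum side of the duality, now attained in the closed set $\calK$, together with a symmetrization step that removes the higher coefficients.

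For (i), I take $q \in \calQ$ attaining the infimum in \cref{eq:duality1}, so $v_\infty(q) = 1/K_G$, and set $Q = \sum_{\text{odd } m} q_m \Pi_m$ acting on $\overline{L_2}$. By \cref{eq:asymptotic-payoff}, whose proof does not need finite support, $\OPT(Q) = v_\infty(q) = 1/K_G$. For the SDP value, the Krivine-type lower bound of \cref{thm:duality} already shows $\SDP \geq 1$ in the limit. On $\overline{L_2}$ I would exhibit this directly with the vector solution $f(X) = g(X) = (X_1, X_2, \dots, X_k)/\norm{(X_1,\dots,X_k)}$ on grade-one variables. Its degree-one mass tends to $1$ as $k \to \infty$, and the contributions from degrees $m \geq 3$ are $O(1/k)$, since each $|q_m| \leq 1$ and the higher Hermite mass is $o(1)$. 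Hence $\SDP(Q) \geq 1$.

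The upper bound $\SDP(Q) \leq 1$ uses $|q_m| \leq 1$ together with Cauchy--Schwarz: for unit-norm vector functions,
\[
\sum_j \langle f_j, Q g_j\rangle \leq \sum_j \norm{f_j}\norm{g_j} \leq 1 .
\]
Therefore $\SDP(Q)/\OPT(Q) = K_G$ exactly. Since $\overline{L_2}$ is an infinite product, I may use a diverging number of grade-one variables inside a single space, so no $o(1)$ term remains.

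For (ii), the functional $c \mapsto c_1 - \sum_{\text{odd } m\ge3} |c_m|$ is $\ell_1$-continuous. By \cref{thm:duality}, its supremum over $\calC$ equals $1/K_G$, so there is a sequence in $\calC$ approaching this value. Since $\calK = \overline{\calC}$ is coordinatewise compact by \cref{lem:asymptotic-c}, a subsequence converges coordinatewise to some $c^* \in \calK$.

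The key issue is semicontinuity. Coordinatewise convergence gives $c_1 \to c_1^*$, and Fatou gives $\sum |c^*_m| \leq \liminf \sum |c^{(k)}_m|$. Hence $c_1^* - \sum_{m\ge3}|c_m^*| \geq 1/K_G$, and by \cref{lem:ub-easy} applied through $\overline{\calC}$ (continuity of the functional) equality holds. So $c^* = c(\mu_0)$ is a maximizer.

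To reach $H_\mu(t) = t/K_G$ exactly, I need to kill the higher coefficients. I would form a mixture $\mu$ that averages $\mu_0$ with "sign-flipped" copies, replacing $(f,g)$ by $(f, -g')$-type pairs that negate chosen $c_m$. For example, compose with the graded sign map on a fresh grade-$m$ variable, or use tensor tricks such as $f(X)\sign(Z)$, $g(Y)\sign(Z')$ with $Z$ a grade-$2\ell$ variable. The aim is for the combination to have $c_1 = c_1^* - \sum_{m\ge3} |c_m^*| = 1/K_G$ and $c_m = 0$ for $m \geq 3$.

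A cleaner route is to use convexity of $\calK$ directly. The set of odd $c$ with $\sum_{m\ge3}|c_m| \le c_1$ reachable by mixtures should contain the point $(1/K_G, 0, 0, \dots)$, since one can mix $c^*$ with elements whose higher coefficients cancel it at a total cost of $\sum_{m\ge3}|c^*_m|$ in $c_1$. The candidate elements are $c(h, \pm h')$ with $h$ of pure grade $m$, such as $\sign(X^{(m)}_1)$, padded suitably.

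This cancellation construction is the main obstacle. The available elements $c(\sign X^{(m)}, \sign X^{(m)})$ have many nonzero coefficients, not a single spike at degree $m$. The difficulty is showing that the convex set $\calK$ contains the exact point $(1/K_G,0,\dots)$ rather than merely points of the right functional value. If the explicit cancellation fails, I would instead argue by duality: if $(1/K_G)e_1 \notin \calK$, Hahn--Banach separation in $\ell_1$ against the compact convex set $\calK$ yields $q$ with $\sup_{\calK}\langle c,q\rangle < q_1/K_G$. I would then normalize $q$ to lie in $\calQ$, using oddness to ignore even coordinates and the ability to scale $f$ or $g$ by signs to control $|q_m|\le q_1$. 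This would contradict $\inf_{\calQ} v_\infty = 1/K_G$.
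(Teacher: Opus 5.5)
Part (i) is correct and is the paper's argument. You take a minimizer $q\in\calQ$, use the asymptotic identity $\OPT(Q)=v_\infty(q)$, bound $\SDP(Q)\le 1$ by the operator norm, and get $\SDP(Q)\ge 1$ from the vector solution $x/\norm{x}$.

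Part (ii) has a genuine gap. Your Fatou argument giving a maximizer $c^*$ of $c_1-\sum_{m\ge3}|c_m|$ in $\calK$ is correct, but it does no work toward the conclusion. A maximizer does not by itself place $K_G^{-1}e_1$ in $\calK$, and you never construct the cancelling mixtures; as you note yourself, $c(\sign X^{(m)}_1,\pm\sign X^{(m)}_1)$ is not a spike at degree $m$. Your fallback, separation, is the right strategy and is what the paper does, but it breaks at the normalization step. A separating $q$ only satisfies $q_1>0$, because $0\in\calK$. Nothing bounds $|q_m|$ by $q_1$, and sign flips cannot enforce this. The maps $f\mapsto-f$ and $g\mapsto-g$ send $c\mapsto -c$, and the graded sign map acts as $-1$ on odd functions. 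None of these acts on an individual coordinate $c_m$, so you cannot appeal to $\inf_{\calQ}v_\infty=K_G^{-1}$.

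The missing point is that you never need $q\in\calQ$, because Grothendieck's inequality applies to every operator $Q_b=\sum_m b_m\Pi_m$. For any finitely supported odd $b$ and every $n$,
\[
 \sup_{c\in\calK}\ip{c,b}\ge\sup_{c\in\calB}\ip{c,b}=v_\infty(b)\ge\OPT(Q_b^{(n)})\ge K_G^{-1}\SDP(Q_b^{(n)})\ge K_G^{-1}\bigl(b_1-O(\norm{b}_\infty/n)\bigr).
\]
The last step uses the same vector solution $x/\norm{x}$, whose Hermite mass outside degree one is $O(1/n)$. Hence $\sup_{\calK}\ip{c,b}\ge K_G^{-1}b_1$ for all such $b$.

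$\calK$ is convex and compact in the coordinatewise topology, whose continuous functionals are exactly the finitely supported sequences. So if $K_G^{-1}e_1\notin\calK$, some such $b$ would strictly separate it from $\calK$ (even coordinates can be zeroed), which contradicts the display. Therefore $c(\mu)=K_G^{-1}e_1$ for some $\mu$, i.e.\ $H_\mu(t)=t/K_G$.

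If you want to stay within $\calQ$, the operation that works is degree dilation, not sign flips. Replace every grade-$d$ variable by a fresh grade-$dm$ variable, with $m$ odd. This keeps pairs odd and moves $c_k$ to position $km$, so $v_\infty(q)\ge v_\infty\bigl((q_{km})_{k\ge1}\bigr)$. For finitely supported odd $q$, choose $m$ with $|q_m|=\norm{q}_\infty$ and use $v_\infty(q)=v_\infty(-q)$. This gives $v_\infty(q)\ge|q_m|/K_G\ge q_1/K_G$.
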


\begin{proof}
\medskip
\noindent(i)
Choose $q\in\mathcal Q$ attaining the infimum in
\cref{thm:duality}, so $v_\infty(q)=K_G^{-1}$.
For $Q=\sum_dq_d\Pi_d$,
\eqref{eq:asymptotic-payoff} gives $\OPT(Q)=K_G^{-1}$.
Its operator norm is at most $1$, so $\SDP(Q)\le1$.
The standard SDP construction $f(X) = g(X) = (X_1, \dots, X_n)/\sqrt{X_1^2 + \cdots + X_n^2}$ with $n \to \infty$
shows $\SDP(Q)\ge q_1=1$.
Therefore $\SDP(Q)/\OPT(Q)=K_G$.

\medskip\noindent(ii)
For every finitely supported real sequence $b$ supported
on odd degrees, Grothendieck's inequality gives
\[
 \sup_{c\in\mathcal B}\langle c,b\rangle
 =v_\infty(b)
 \ge\OPT(Q_b)
 \ge K_G^{-1}\SDP(Q_b)
\]
where $Q_b = \sum_m b_m \Pi_m$.
By the standard SDP construction, $\SDP(Q_b) \geq b_1$. Therefore, for all such $b$,
\[
 \sup_{c\in\mathcal B}\langle c,b\rangle\ge K_G^{-1} b_1.
\]
Because $\mathcal K$ is coordinatewise compact and convex,
finite-dimensional separation implies
$K_G^{-1} e_1\in\mathcal K$.
By the definition of $\mathcal K$, there is a distribution
$\mu$ on odd asymptotic Boolean pairs with
$c(\mu)=K_G^{-1}e_1$. Hence $H_\mu(t)=t/K_G$ on $[-1,1]$.
\end{proof}

\section{Overview of the rigorous certificates}
\label{sec:certificate}

\subsection{Certifying the lower bound}

To prove a lower bound on $K_G$,
we prove an upper bound on $\OPT(Q)$ for a Hermite projection game $Q = \sum_{m} q_m \Pi_m$.\footnote{
The lower bound of \cite{saha2026new} can be interpreted as certifying the tight inequality $\OPT(\Pi_1 - 0.5\Pi_3) \leq \frac{11}{6\pi}$.}
Set $q_m=0$ except at the following degrees; every decimal is exact:
\begin{equation}\label{eq:coefficients}
\begin{array}{r|r@{\qquad}r|r}
 m&q_m&m&q_m\\\hline
 1&1&13&0.072282834573\\
 3&-0.654509751718&15&-0.004420811526\\
 5&-0.159925966134&17&-0.058106583958\\
 7&0.047785078787&19&-0.041412929910\\
 9&0.086465878916&21&0.016668224599\\
 11&0.076371662489&&
\end{array}
\end{equation}

\begin{theorem}\label{thm:lower}
${v_\infty}(q) \le \frac{1000}{1773}$. Therefore, $K_G \geq 1.773$.
\end{theorem}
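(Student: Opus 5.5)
The plan is to combine the reduction to asymptotic Boolean optimizers (\cref{lem:maximizers-exist}) with an infinite family of linear constraints that every pair of odd Boolean functions satisfies. Together these turn the bound ${v_\infty}(q)\le\frac{1000}{1773}$ into a semi-infinite linear program. I would then certify that program by an explicit dual solution checked numerically.

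\textbf{Step 1: reduction to a statement about coefficient sequences.} Since $q$ is odd-supported with $\max\supp(q)=21$, \cref{lem:maximizers-exist} gives
\[
{v_\infty}(q)=\max_{(f,g)}\sum_{\textnormal{odd } m\le 21} q_m c_m(f,g),
\]
over odd $f,g\in\overline{L_2}$ with values in $\{\pm1\}$. It therefore suffices to find valid inequalities for $c(f,g)$ whose combination yields $\sum_m q_m c_m\le \frac{1000}{1773}$.

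\textbf{Step 2: the constraints.} I would use three families.
\begin{itemize}
\item Parseval: $\sum_m W_m(f)\le 1$ and $\sum_m W_m(g)\le1$.
\item Cauchy--Schwarz at each level: $|c_m|\le \sqrt{W_m(f)W_m(g)}$, equivalently $\left(\begin{smallmatrix}W_m(f)&c_m\\ c_m&W_m(g)\end{smallmatrix}\right)\psdgeq 0$.
\item Gaussian noise-stability (Borell-type) inequalities, applied to the kernels $H_{f,g}$, $H_{f,f}$, $H_{g,g}$. Odd functions are balanced, so for every $t\in[0,1]$ one has $H_{f,g}(t)=\sum_m c_m t^m\le \frac{2}{\pi}\arcsin t$. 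By applying this to $(f,-g)$, also $-H_{f,g}(t)\le\frac2\pi\arcsin t$. The analogous bounds hold for the weight generating functions of $f$ and $g$.
\end{itemize}
At $t\to 0$ these recover the level-one inequality $W_1\le 2/\pi$. At intermediate $t$ they couple the sign pattern $(+,-,-,+,\dots)$ of the $q_m$ across degrees. The tails $m>21$ enter the constraints only through nonnegative weights, so they can be handled by the crude bound $|\sum_{m>21}c_mt^m|\le t^{23}\cdot(\text{remaining mass})$.

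\textbf{Step 3: weak duality.} I would exhibit nonnegative multipliers: a finite measure $\nu$ on a grid of $t\in[0,1]$ for each stability constraint, scalars for Parseval, and PSD multipliers for the $2\times2$ blocks. They are chosen so that the Lagrangian inequality
\[
\sum_m q_mc_m\le \tfrac{1000}{1773}
\]
follows termwise. These multipliers would be produced numerically from the LP (this is exactly the primal/dual machinery, now in the dual direction). Rigorous verification then reduces to finitely many explicit polynomial and $\arcsin$ inequalities, each checked by interval arithmetic. Values of $t$ between grid points would be controlled by Lipschitz bounds on $\arcsin$ and on the polynomials $\sum_m c_mt^m$, whose coefficients are bounded via $\sum|c_m|\le1$. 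Having established ${v_\infty}(q)\le\frac{1000}{1773}$, the consequence $K_G\ge 1.773$ is immediate from \cref{thm:duality}, because $q\in\calQ$ ($q_1=1$, all $|q_m|\le1$, even coefficients vanish).

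\textbf{Main obstacle.} I expect the hard part to be that these generic constraints may be too weak. The Saha et al.\ bound for $\Pi_1-0.5\Pi_3$ is already tight at $\frac{11}{6\pi}$, and a degree-21 game will likely have a relaxation gap. If so, my fallback is to exploit the optimizer structure from \cref{lem:maximizers-exist}. One can take $f=\sign(Qg)$, so the value equals $\E|Qg|$ with $Qg\in\overline{\cH}_{\le 21}$. I would then split into cases on the weight profile $(W_1(g),W_3(g),\dots)$, discretized into thousands of boxes. In each box I would bound $\E|p|$ for $p=\sum_m q_m\Pi_m g$, using $\E|p|\le \sqrt{\E p^2}$ refined by the fourth-moment/Paley--Zygmund-type inequality $\E|p|\le (\E p^2)^{3/2}/(\E p^4)^{1/2}$ together with hypercontractive control of the higher chaoses. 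The level-one part is handled through its Gaussianity. This produces the large case-by-case numerical certificate that must be verified mechanically.
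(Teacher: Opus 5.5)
There is a genuine gap: the relaxation in your Step 2 is too weak, so the multipliers in Step 3 cannot exist. Let $s_m=\frac{2}{\pi}[t^m]\arcsin t\ge 0$. Consider the formal point $W_m(f)=W_m(g)=c_m$ with $c_m=s_m$ for $m=1$ and $m\ge 25$, $c_{23}=s_{23}+\sum_{m=3}^{21}s_m$, and $c_m=0$ otherwise. Then $\sum_m c_m=\frac{2}{\pi}\arcsin 1=1$, Cauchy--Schwarz holds with equality, and
\[
 \sum_m c_m t^m=\tfrac{2}{\pi}\arcsin t-\sum_{m=3}^{21}s_m\bigl(t^m-t^{23}\bigr)\in\bigl[0,\tfrac{2}{\pi}\arcsin t\bigr]\qquad(0\le t\le 1).
\]
So every Parseval, Cauchy--Schwarz and Borell constraint you list is satisfied, yet $\sum_m q_m c_m=\frac{2}{\pi}\approx 0.6366>\frac{1000}{1773}\approx 0.5640$. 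By weak duality no nonnegative combination of your constraints yields the target. The failure is by more than $0.07$, not a small relaxation gap; the same point gives $\frac{2}{\pi}$ for $\Pi_1-\frac12\Pi_3$ instead of the true $\frac{11}{6\pi}$. The reason is structural. Borell-type inequalities only bound nonnegative combinations of the $W_m$ (or $|H_{f,g}|$) from above, so they can never force $W_3$ to be large. But the negative $q_3,q_5$ bite precisely because a function with $W_1$ near $\frac{2}{\pi}$ must be close to a halfspace, and therefore carries $W_3\approx\frac{1}{3\pi}$.

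Your fallback does not close the gap either. The displayed fourth-moment inequality is reversed: H\"older gives $\E p^2\le(\E|p|)^{2/3}(\E p^4)^{1/3}$, i.e.\ $\E|p|\ge(\E p^2)^{3/2}/(\E p^4)^{1/2}$. For general random variables, no information about $\E p^2,\E p^4$ gives an upper bound on $\E|p|$ better than $\|p\|_2$. Fixing the weight profile of $g$ does not determine the law of $Qg$ beyond its Gaussian level-one part, and hypercontractive estimates lose constant factors. By contrast, any valid certificate must be accurate to about $0.4\%$, since $v_\infty(q)\ge 1/K_G>0.5618$ by \cref{thm:upper}. What is missing is a sharp handle on the nonlinear quantity $\sup_g\E|Qg|$, which the paper obtains in \cref{thm:reduction} in the following steps. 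It writes the payoff through $h=(f+g)/2$, $k=(f-g)/2$ (so $h^2+k^2=1$). It absorbs a linear part of the payoff with cycle inequalities for $N=7,9$, which use weights across several correlations $\cos(\pi j/N)$ and a combinatorial maximum $B_N$. It rotates so that $\Pi_1h$ is a multiple of $z_1$ and splits into cases by $x=\sqrt{\pi/2}\,\|\Pi_1h\|_2$. It then conditions on the remaining coordinates and dominates the degree-shifted diagonal forms by fixed PSD matrices. The residual is thereby bounded by $\sup F(M)$ over moment vectors of one-dimensional functions $u,v:\R\to[-1,1]$. Finally, this convex one-dimensional problem is discretized into a finite $\pm1$ quadratic program and certified by min-cut bounds with branch and bound.
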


The proof of \cref{thm:lower} is in \cref{sec:lower}.
Proposition \ref{thm:reduction} reduces estimating $\OPT(Q)$ to a 44-dimensional optimization problem, which is then reduced to checking a finite number of inequalities numerically.
The code for the numerical portions is available at \url{https://github.com/christophsjones/RealGrothendieck}.

\subsection{Certifying the upper bound}\label{sec:upper-rounding}

To prove an upper bound on $K_G$, we estimate $c_1(\mu) - \sum_{\text{odd }m \geq 3}|c_m(\mu)|$ for a mixture of rounding functions $\mu$.

\paragraph{Parameters.} The rounding scheme is parametrized by $31$ polynomials, each with a weight associated to it, whose coefficients and weights are specified in \cref{root:recipe}. The additional weight for $i=14$ corresponds to a function specified below. For $i\in\{0,\ldots,31\}$, let $\lambda_i$ be the \emph{normalized} weight. For $i\neq 14$, let $P_i$ be the $i$-th polynomial normalized by the sum of the absolute values of its coefficients (if this sum is smaller than $1$, $P_i$ is left unnormalized), so that
\begin{equation}\label{root:normalization}
 \sum_{i=0}^{31}\lambda_i=1,\qquad\lambda_i>0,\qquad
 \sum_{m\ge1}|[t^m]P_i(t)|\le1\quad(i\ne14)
\end{equation}
where $[t^m]$ denotes the coefficient of the monomial $t^m$ for the $i$-th polynomial. 
Define
\begin{equation}\label{root:threshold}
 F(w,x_1,x_2,x_3):=\sign\left(w+\sum_{j=1}^3r(x_j)\right),
\end{equation}
where
\begin{equation}\label{eq:upper-threshold-polynomial}
\begin{aligned}
 r(s):=&-0.002608895910895478\psi_1(s)-0.0999107408341197\psi_3(s)\\
 &-0.020910560460516325\psi_5(s)-0.005853637316670958\psi_7(s)\\
 &+0.0057627068821593744\psi_9(s)+0.0013019893459673402\psi_{11}(s)\\
 &-0.0019418422133255495\psi_{13}(s)+0.0006843699264111792\psi_{15}(s).
\end{aligned}
\end{equation}
For $k\in\{0,1,\ldots,2047\}$, set $y=(2k+1)/512$ and take the increasing list
of real roots in $x$ of
\begin{equation}\label{root:staircase-polynomial}
\begin{aligned}
0={}&-0.072399902677\psi_1(y)+0.073767456734\psi_3(y)\\
 &-0.71483345535\psi_1(x)+0.008648628127\psi_1(x)\psi_2(y)\\
 &+0.043678832526\psi_2(x)\psi_1(y)-0.004614640469\psi_2(x)\psi_3(y)\\
 &+0.036990218101\psi_3(x)+0.001978721807\psi_3(x)\psi_2(y)\\
 &+0.036193792294\psi_4(x)\psi_1(y)+0.131004390897\psi_5(x)\\
 &-0.011093407361\psi_7(x).
\end{aligned}
\end{equation}
There are five roots for $0\le k\le1052$, three for $1053\le k\le1535$,
and one for $1536\le k<2048$. For each $k$, round these roots
to the nearest multiples of $10^{-12}$, then apply the numerical corrections
specified in Appendix~\ref{app:upper-bound}. Denote the resulting numbers, in increasing order, by
$r_{k,1},r_{k,2},\ldots$. Define
\[
 F_*(x,y):=
 \begin{cases}
  1+2\displaystyle\sum_{a:\,r_{k,a}\le x}(-1)^a,
       &k/256\le y<(k+1)/256,\quad 0\le k<2048,\\
  -\sign(x),&y\ge8,\\
  -F_*(-x,-y),&y<0.
 \end{cases}
\]

\paragraph{The rounding algorithm.}
The input is a family of unit vectors $x_1,\ldots,x_m,y_1,\ldots,y_n$
as in \cref{eq:kg}; write $t=\langle x_a,y_b\rangle$ for a row--column pair.
\begin{enumerate}[label=\arabic*.,leftmargin=*,itemsep=2pt]
\item Sample one index $i\in\{0,\ldots,31\}$ with probability $\lambda_i$,
common to all row and column positions.
\item For $i\ne14$, take inner products of the following vectors with
one shared Gaussian vector whose coordinates are independent $\mathcal N(0,1)$:
\[
\begin{aligned}
 x_a&\longmapsto
 \bigoplus_{{m}\ge1}\sqrt{|[t^{m}]P_i(t)|}\,x_a^{\otimes {m}}
 \oplus\left(\sqrt{1-\sum_{{m}\ge1}|[t^{m}]P_i(t)|},0\right),\\
 y_b&\longmapsto
 \bigoplus_{{m}\ge1}\sign([t^{m}]P_i(t))\sqrt{|[t^{m}]P_i(t)|}\,y_b^{\otimes {m}}
 \oplus\left(0,\sqrt{1-\sum_{{m}\ge1}|[t^{m}]P_i(t)|}\right).
\end{aligned}
\]
The direct sums range over ${m}$ with $[t^{m}]P_i\ne0$.
The images are unit vectors with cross inner product $P_i(t)$, since
\[
 \|x_a^{\otimes {m}}\|=\|y_b^{\otimes {m}}\|=1,\qquad
 \langle x_a^{\otimes {m}},y_b^{\otimes {m}}\rangle=t^{m},\qquad
 \sum_{{m}\ge1}|[t^{m}]P_i|+1-\sum_{{m}\ge1}|[t^{m}]P_i|=1.
\]
\item For $i=0,\ldots,13,15,16$, repeat step 2 with $-t$ in place of
$P_i$ and three further independent Gaussian vectors; apply $F$ to
the four coordinates at each position.
For $i=14$, use the displayed maps with $t^3,t^5$ in place of $P_i$
and two independent Gaussian vectors; apply $F_*(x,y)$ on rows and
$F_*(-x,y)$ on columns.
For $17\le i\le31$, apply $\sign$ to the coordinate from step 2.
Each Gaussian vector is shared across all positions, and the output is
one sign per input vector.
\end{enumerate}

\paragraph{Analysis.}
For each $i\ne14$, on the graded Gaussian space of Section~\ref{sec:asymptotic} put
\[
\begin{aligned}
 W(X)&:=\sum_{{m}\ge1}\sqrt{|[t^{m}]P_i|}\,X_4^{({m})}
       +\sqrt{1-\sum_{{m}\ge1}|[t^{m}]P_i|}\,X_5,\\
 W'(X)&:=\sum_{{m}\ge1}\sign([t^{m}]P_i)\sqrt{|[t^{m}]P_i|}\,X_4^{({m})}
       +\sqrt{1-\sum_{{m}\ge1}|[t^{m}]P_i|}\,X_6.
\end{aligned}
\]
Let $\mu$ choose, with probability $\lambda_i$, the pair
\[
(f(X),g(X))=
\begin{cases}
 \bigl(F(W(X),X_1,X_2,X_3),F(W'(X),-X_1,-X_2,-X_3)\bigr),
     &i=0,\ldots,13,15,16,\\
 \bigl(F_*(X_1^{(3)},X_1^{(5)}),F_*(-X_1^{(3)},X_1^{(5)})\bigr),&i=14,\\
 (\sign W(X),\sign W'(X)),&17\le i\le31.
\end{cases}
\]
All nonzero degrees of $P_i$ are odd, and $F,F_*$ are odd almost everywhere;
thus these pairs are graded odd. Define
\[
 H_i:=H_{f,g}\quad\text{for the pair indexed by }i,\qquad
 H:=H_\mu=\sum_{i=0}^{31}\lambda_iH_i,\qquad c_m:=[t^m]H=c_m(\mu).
\]
For graded $t$-correlated $(X,Y)$,
\[
 \E[W(X)W'(Y)]=P_i(t),\qquad
 \E[X_j(-Y_j)]=-t\ (1\le j\le3),\qquad
 \E[X_1^{({m})}Y_1^{({m})}]=t^{m}\ ({m}=3,5).
\]
Different coordinate pairs are independent, and each coordinate is standard Gaussian.
Consequently the algorithm satisfies
\[
 \E[\text{row sign at }x_a\cdot\text{column sign at }y_b\mid i]
 =H_i(\langle x_a,y_b\rangle).
\]
By \cref{eq:correlation} and \cref{lem:hfg-analytic,lem:asymptotic-c},
\[
 \sum_{m\ge0}|[t^m]H_i|\le1,\qquad H_i(-t)=-H_i(t),\qquad
 c(\mu)\in\operatorname{conv}(\mathcal B)\subseteq\overline{\mathcal C}.
\]
In particular, the series for $H_i,H$ converge absolutely and uniformly on
$|t|\le1$ and are holomorphic on $|t|<1$.

\begin{theorem}\label{thm:upper}
For the rounding scheme above,
\[
 c_1-\sum_{\substack{m\ge3\\m\ \mathrm{odd}}}|c_m|
 >0.561831528950464991>\frac{1000000}{1779893}.
\]
In particular, $K_G<1.779893$.
\end{theorem}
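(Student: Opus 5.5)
The plan is to compute, or rigorously bound, the Taylor coefficients $c_m=\sum_i\lambda_i[t^m]H_i$ of the mixture and then assemble the final inequality. The second part of \cref{thm:upper} follows from the first via \cref{lem:ub-easy}: the construction gives $c(\mu)\in\overline{\calC}$, and the functional $c\mapsto c_1-\sum_{\text{odd }m\ge3}|c_m|$ is $\ell_1$-continuous. So it suffices to show that some $c\in\overline{\calC}$ beats $1000000/1779893$, and then $K_G^{-1}\ge$ that value by \cref{thm:duality}. Since $\sum_m|c_m|\le1$, I will split the sum into a head $3\le m\le M$ (say $M\approx 60$) and a tail. The head coefficients are bounded in interval arithmetic. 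For the tail I use the bound $\sum_{m>M}|c_m|\le\sum_i\lambda_i\sum_{m>M}|[t^m]H_i|$, and control it per component by Cauchy--Schwarz as in \cref{lem:hfg-analytic}: $\sum_{m>M}|c_m(f,g)|\le(\sum_{m>M}W_m(f))^{1/2}(\sum_{m>M}W_m(g))^{1/2}$. Each weight tail equals $1-\sum_{m\le M}W_m$, which in turn is computable from the head.

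Each component is handled by its type.

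\emph{Components $17\le i\le31$.} These are sign functions of correlated Gaussians, so Sheppard's formula gives $H_i(t)=\frac{2}{\pi}\arcsin(P_i(t))$. Its Taylor coefficients come from composing the arcsin series, with explicit positive coefficients, with the polynomial $P_i$. Since the arcsin coefficients decay like $m^{-3/2}$ and $|P_i(t)|\le\sum|[t^m]P_i|\le1$, the tails can be bounded analytically. If $\sum|[t^m]P_i|<1$ strictly, the tail decays geometrically.

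\emph{Components $0\le i\le16$, $i\ne14$.} Condition on the three coordinates. Then $H_i(t)=\E\bigl[\Phi\text{-type kernel}\bigr]$, obtained as follows: $W,W'$ are $P_i(t)$-correlated and the $X_j,-Y_j$ are $(-t)$-correlated, so $H_i(t)=G(P_i(t),-t)$, where $G(\rho,s)$ is the correlation of $\sign(w+R)$ and $\sign(w'+R')$ with $R=\sum_j r(x_j)$. I expand $G$ via the Hermite coefficients of the threshold function: $\sign(w+R)$ has a Hermite expansion in $w$ with coefficients $\E_x[\text{Gaussian density derivatives at }-R(x)]$. So $G(\rho,s)=\sum_{k,\ell}\rho^k s^\ell a_{k,\ell}$, where the coefficients $a_{k,\ell}=\sum_{\ell\text{-level}}\E[\phi^{(k-1)}(R)\cdots]$ reduce to one-dimensional integrals of $r$ against Gaussian density derivatives. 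These integrals are computed numerically with certified quadrature error bounds, and the result is substituted into $P_i$.

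\emph{Component $i=14$.} $F_*$ is an explicit piecewise-constant function on the plane (in $x$ for each strip of $y$) with certified breakpoints. Its Hermite coefficients in $\psi_a(x)\psi_b(y)$ are therefore sums of exactly computable terms: an antiderivative of $\psi_a\phi$ evaluated at the breakpoints $r_{k,a}$, times strip integrals of $\psi_b\phi$. The graded degree of $\psi_a(X^{(3)})\psi_b(X^{(5)})$ is $3a+5b$, so $c_m=\sum_{3a+5b=m}\hat f_{a,b}\hat g_{a,b}$. The region $|y|\ge8$ contributes negligibly and is bounded crudely.

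Finally I sum the coefficients with their weights $\lambda_i$ in interval arithmetic, add the tail bounds, and check that the result exceeds $0.561831528950464991$. The main obstacle is certifying the tails and quadrature errors tightly enough. The margin is only about $10^{-6}$ relative to the target, so the Cauchy--Schwarz tail bound must be applied with $M$ large enough that $\sum_{m\le M}W_m$ is certified to within roughly $10^{-7}$. I would first try a large $M$, exploiting the fast decay coming from the polynomial thresholds, and fall back on analytic decay estimates (smoothness of $\sign$ composed with Gaussian convolution) only for the components where this fails. Beyond that, the work is a large but mechanical numeric verification.
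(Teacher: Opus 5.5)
Your proposal has a genuine gap: the tail control fails by more than two orders of magnitude. Every rounding function in the mixture is the sign of a non-degenerate graded polynomial. So it has positive Gaussian surface area, and its Hermite weight tail $\sum_{m>M}W_m(f)$ decays only like $M^{-1/2}$; for a halfspace it is about $0.5\,M^{-1/2}$. Your premise of ``fast decay coming from the polynomial thresholds'' is therefore false.

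Consequently your per-component bound $\sum_{m>M}|c_m(f,g)|\le(\sum_{m>M}W_m(f))^{1/2}(\sum_{m>M}W_m(g))^{1/2}$ is of order $10^{-2}$ even at $M=681$, and worse at $M\approx60$. Compare the budget. The components $i\in\{0,15,16\}$ carry about $97\%$ of the weight. The total slack for all tails is only about $4.6\times10^{-5}$: the paper certifies $c_1-\sum_{m=2}^{681}|c_m|>0.5618773799$ against the target $0.5618315290$.

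The same problem hits $i=17$, whose weight is about $0.026$. Every listed $P_i$ has $\sum_m|[t^m]P_i|=1$ exactly, so your geometric-decay criterion never fires. The $m^{-3/2}$ decay of the $\arcsin$ coefficients then only yields a tail of order $0.026\,M^{-1/2}$, again far above budget.

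What is small is not $\|\Pi_m f\|$ but the correlation $\langle\Pi_m f,\Pi_m g\rangle$. For the heavy components the $W$-inputs are never perfectly correlated: $|P_i(e^{\mathrm{i}\theta})|<1$ on $|t|=1$ except at $t=\pm\mathrm{i}$ (e.g.\ $P_{15}(1)\approx0.81$). The $X$-dependence enters only through the $W$-average $q_a(\sum_j r(X_j))$, which is smooth. So the kernel $H_i$ is smooth on the unit circle, and its Taylor coefficients decay fast.

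That is the missing idea, and it is what the paper's tail lemmas implement.
\begin{itemize}
\item \emph{Heavy components.} In Lemma~\ref{heavy:lemma}, $\widetilde H_i$ is shown to be $C^4$ on $\mathbb T$. Parseval plus Cauchy--Schwarz against $m^{-8}$ over odd $m>681$ then give $\sum_{m>681}|c_m|\le\|\mathcal D^4\sum_i\lambda_i\widetilde H_i\|_{L_2(\mathbb T)}/\sqrt{14\cdot681^7}\approx7\times10^{-6}$.
\item \emph{What certifying that norm requires.} It needs Mehler/Gram bounds on arcs where $|P_i|<0.99$. It needs analytic continuation with complex Gaussian densities and score-function bounds near $t=\pm\mathrm{i}$, where $|P_i|=1$. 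It also needs a $C^6$ surrogate $\widetilde r$ with bounded derivatives, at cost $4\sqrt{\Pr(F\ne\widetilde F)}$.
\item \emph{Components $i=17,18$.} Here $\sup_{|z|\le R}|\Re P_i(z)|<\eta<1$ for some $R>1$. So $\arcsin\circ P_i$ is analytic on a larger disk, and Cauchy's estimate gives geometric decay (Claim~\ref{other:claim-hyperplane}).
\item \emph{Small-weight components.} A ``captured mass'' bound like yours, one minus the head of a positive coefficient majorant, is used only for components of weight below $5\times10^{-4}$.
\end{itemize}

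Even the fourth-derivative bound would be about $3\times10^{-2}$ at $M=60$. So the head must be certified far beyond your $M\approx60$; the paper goes to degree $681$.

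A smaller inaccuracy: $C[a,b]$ does not reduce to one-dimensional integrals as you claim. The expression $\phi^{(k)}(r(x_1)+r(x_2)+r(x_3))$ does not factor over $j$. The paper obtains factorization from the Fourier representation of $q_a$ in \eqref{head:smoothed-sign}, which turns $e^{\mathrm{i}s\sum_j r(X_j)}$ into $\prod_j B_{\alpha_j}(s)$, followed by a certified outer quadrature in $s$.
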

\begin{proof}\label{root:assembly}
Since the even coefficients vanish, the bounds \cref{eq:upper-certified-head},
\cref{eq:upper-certified-nonadditive}, \cref{eq:upper-certified-rare},
\cref{heavy:final-number} give
\begin{align*}
 c_1-\sum_{m\ge2}|c_m|
 &\ge c_1-\sum_{m=2}^{681}|c_m|
       -\lambda_{14}\sum_{m>681}|[t^m]H_{14}|
       -\sum_{i=17}^{31}\lambda_i\sum_{m>681}|[t^m]H_i|\\
 &\quad-\sum_{i=1}^{13}\lambda_i\sum_{m>681}|[t^m]H_i|
       -\sum_{m>681}\left|[t^m]\sum_{i\in\{0,15,16\}}\lambda_iH_i\right|\\
 &>0.5618773799453352-0.000003878150168295\\
 &\quad-0.000034097111321496-0.000007875733380418\\
 &=0.561831528950464991>\frac{1000000}{1779893}.
\end{align*}
Since $c(\mu)\in\overline{\mathcal C}$ and
$c\mapsto c_1-\sum_{m\ge2}|c_m|$ is continuous in $\ell_1$,
\cref{eq:duality1} yields
\[
 K_G\le\left(c_1-\sum_{m\ge2}|c_m|\right)^{-1}
 <\frac{1779893}{1000000}.
\]
\end{proof}

\begin{center}
{\refstepcounter{table}\label{root:recipe}%
Table~\thetable: Weights and polynomial coefficients before normalization.}
\end{center}
\begingroup\scriptsize
\setlength{\arraycolsep}{4pt}
\[
\begin{array}{r|r|rrr}
i&\text{weight}&t&t^3&t^5\\\hline
0&0.002425733108742487&0.946&-0.054&0\\
1&0.00000701460381171968&0&-1&0\\
2&0.0000037499635434410743&0.05&-0.95&0\\
3&0.000009920220049838783&0.07500000000000001&-0.925&0\\
4&0.00006987843003609716&0.1&-0.9&0\\
5&0.00012479539456651385&0.2&-0.8&0\\
6&0.00004671038151698264&0.225&-0.775&0\\
7&0.0004249077174049799&0.35000000000000003&-0.6499999999999999&0\\
8&0.00018043956905586483&0.050170753056963675&-0.7540766094444311&0.19575263749860508\\
9&0.00000901084741869339&0.01741176476977869&-0.9229528872821551&0.05963534794806634\\
10&0.00010362576993944439&0.03393956633363427&-0.655490081870693&0.3105703517956727\\
11&0.000006378679383008387&-0.0036112846919679523&-0.9881992863232164&-0.008189428984815658\\
12&0.000050102169538197746&0.0009803141322194&-0.5234190673248801&0.4756006185429005\\
13&0.000013977989256200987&0.0008970753092140672&-0.4493368582298268&0.5497660664609592\\
14&0.000035687962816869&\text{--}&\text{--}&\text{--}\\
15&0.7470177879401771&0.9029200000000001&-0.09707999999999994&0\\
16&0.22290213190118147&0.90294&-0.09706000000000004&0\\
\end{array}
\]
\[
\begin{array}{r|r|rrr}
i&\text{weight}&t&{m}&t^{m}\\\hline
17&0.026477721663446113&0.8859999999999999&3&-0.1140000000000001\\
18&0.00000665613397460388&0.96&3&-0.040000000000000036\\
19&0.0000009997984826272776&0.1&13&-0.9\\
20&0.000000390115436625487&0.525&13&-0.475\\
21&0.0000012374573661838554&0.5499999999999999&13&-0.45000000000000007\\
22&0.0000011172456644699994&0.7&13&-0.30000000000000004\\
23&0.0000004608585343191249&0.725&13&-0.275\\
24&0.000004808038589224191&0.7999999999999999&13&-0.20000000000000007\\
25&0.00005442381630242658&0.9249999999999999&13&-0.07500000000000007\\
26&0.000008538350059191554&0.95&13&-0.050000000000000044\\
27&0.000000336030804208466&0.1&15&0.9\\
28&0.00000030196245224993114&0.375&15&0.625\\
29&0.0000012754666931590235&0.4&15&0.6\\
30&0.000005988121471482638&0.6749999999999999&15&0.32500000000000007\\
31&0.000003892292284216072&0.7&15&0.30000000000000004\\
\end{array}
\]
\endgroup

\paragraph{Acknowledgments.} GPT-6 Astra was used as described in the text, and additionally it assisted with all proofs in the manuscript.

S.H. is supported by NSF Grant CCF AF 2448108.  G.M.\ is supported by the European Research Council through an ERC Starting Grant (Grant agreement No.~101077455, ObfusQation) and funded by the Deutsche Forschungsgemeinschaft (DFG, German Research Foundation) under Germany's Excellence Strategy - EXC 2092 CASA - 390781972.

{\small
\bibliographystyle{alpha}
\let\KGoriginalbibitem\bibitem
\def\KGredreference{assaad2026weak}
\renewcommand{\bibitem}[2][]{%
    \par\def\KGcurrentreference{#2}%
    \ifx\KGcurrentreference\KGredreference\color{red}\else\normalcolor\fi
    \if\relax\detokenize{#1}\relax
        \KGoriginalbibitem{#2}%
    \else
        \KGoriginalbibitem[#1]{#2}%
    \fi}
\bibliography{references}

@article{AcinGisinToner2006,
  author  = {Ac{\'i}n, Antonio and Gisin, Nicolas and Toner, Benjamin},
  title   = {{Grothendieck}'s constant and local models for noisy
             entangled quantum states},
  journal = {Physical Review A},
  volume  = {73},
  number  = {6},
  pages   = {062105},
  year    = {2006},
  doi     = {10.1103/PhysRevA.73.062105}
}

@article{AlonNaor2006,
  author  = {Alon, Noga and Naor, Assaf},
  title   = {Approximating the cut-norm via {Grothendieck}'s inequality},
  journal = {SIAM Journal on Computing},
  volume  = {35},
  number  = {4},
  pages   = {787--803},
  year    = {2006},
  doi     = {10.1137/S0097539704441629}
}

@article{KhotNaor2012,
  author  = {Khot, Subhash and Naor, Assaf},
  title   = {{Grothendieck}-type inequalities in combinatorial
             optimization},
  journal = {Communications on Pure and Applied Mathematics},
  volume  = {65},
  number  = {7},
  pages   = {992--1035},
  year    = {2012},
  doi     = {10.1002/cpa.21398}
}

@article{saha2026new,
  title={New Lower and Upper Bounds for the {G}rothendieck Constant},
  author={Saha, Rahul and Li, Alan and Xue, Anton and Chaudhuri, Swarat and Klivans, Adam and Kothari, Pravesh K and Meka, Raghu},
  journal={arXiv preprint arXiv:2608.11158},
  year={2026}
}

@article{jones2026regularity,
  title={Regularity lemma for {G}aussian polynomials},
  author={Jones, Chris},
  journal={To appear},
  year={2026}
}

@inproceedings{raghavendra2009towards,
  title={Towards computing the {G}rothendieck constant},
  author={Raghavendra, Prasad and Steurer, David},
  booktitle={Proceedings of the Twentieth Annual ACM-SIAM Symposium on Discrete Algorithms},
  pages={525--534},
  year={2009},
  organization={SIAM}
}

@article{naor2014krivine,
  title={Krivine schemes are optimal},
  author={Naor, Assaf and Regev, Oded},
  journal={Proceedings of the American Mathematical Society},
  volume={142},
  number={12},
  pages={4315--4320},
  year={2014}
}

@article{herry2024limit,
  title={Limit distributions for polynomials with independent and identically distributed entries},
  author={Herry, Ronan and Malicet, Dominique and Poly, Guillaume},
  journal={arXiv preprint arXiv:2412.06749},
  year={2024}
}

@article{HaleTrefethen2008,
  author={Hale, Nicholas and Trefethen, Lloyd N.},
  title={New Quadrature Formulas from Conformal Maps},
  journal={SIAM Journal on Numerical Analysis},
  volume={46},
  number={2},
  pages={930--948},
  year={2008},
  doi={10.1137/07068607X}
}

@book{grothendieck1956resume,
  title={R{\'e}sum{\'e} de la th{\'e}orie m{\'e}trique des produits tensoriels topologiques},
  author={Grothendieck, Alexandre},
  volume={2},
  year={1956},
  publisher={Soc. de Matem{\'a}tica de S{\~a}o Paulo}
}

@article{pisier2012grothendieck,
  title={{Grothendieck}'s theorem, past and present},
  author={Pisier, Gilles},
  journal={Bulletin of the American Mathematical Society},
  volume={49},
  number={2},
  pages={237--323},
  year={2012}
}

@inproceedings{braverman2013grothendieck,
  title={The {G}rothendieck constant is strictly smaller than {K}rivine's bound},
  author={Braverman, Mark and Makarychev, Konstantin and Makarychev, Yury and Naor, Assaf},
  booktitle={Forum of Mathematics, Pi},
  volume={1},
  pages={e4},
  year={2013},
  organization={Cambridge University Press}
}

@article{goemans1995improved,
  title={Improved approximation algorithms for maximum cut and satisfiability problems using semidefinite programming},
  author={Goemans, Michel X and Williamson, David P},
  journal={Journal of the ACM (JACM)},
  volume={42},
  number={6},
  pages={1115--1145},
  year={1995},
  publisher={ACM New York, NY, USA}
}

@article{feige2002optimality,
  title={On the optimality of the random hyperplane rounding technique for {MAX} {CUT}},
  author={Feige, Uriel and Schechtman, Gideon},
  journal={Random Structures \& Algorithms},
  volume={20},
  number={3},
  pages={403--440},
  year={2002},
  publisher={Wiley Online Library}
}

@article{jones2026grothendieck,
  title={The {G}rothendieck Constant is Strictly Larger than {D}avie-{R}eeds' Bound},
  author={Jones, Chris and Malavolta, Giulio},
  journal={arXiv preprint arXiv:2603.30039},
  year={2026}
}

@article{heilman2026lower,
  title={A Lower Bound for {G}rothendieck's Constant},
  author={Heilman, Steven},
  journal={arXiv preprint arXiv:2603.22616},
  year={2026}
}

@article{heilman2026upper,
  title={An upper bound on {G}rothendieck's constant},
  author={Heilman, Steven},
  journal={arXiv preprint arXiv:2606.00247},
  year={2026}
}

@article{parisi1979infinite,
  title={Infinite number of order parameters for spin-glasses},
  author={Parisi, Giorgio},
  journal={Physical Review Letters},
  volume={43},
  number={23},
  pages={1754},
  year={1979},
  publisher={APS}
}

@article{mourrat2025inverting,
  title={Un-inverting the {Parisi} formula},
  author={Mourrat, Jean-Christophe},
  journal={Annales de l'Institut Henri Poincare (B) Probabilites et statistiques},
  volume={61},
  number={4},
  pages={2709--2720},
  year={2025},
  organization={Institut Henri Poincar{\'e}}
}

@article{assaad2026weak,
  title={Weak limits of multiple Wiener integrals: representation, extraction, and realization},
  author={Assaad, Obayda Julien},
  journal={arXiv preprint arXiv:2608.12492v2},
  year={2026},
  url={https://arxiv.org/abs/2608.12492v2}
}
}

\appendix

\clearpage
\section{Numerical summary of the primal/dual run}
\label{sec:numerics}

We report the statistics used for one representative execution of the primal/dual search.
We have attempted to use more reproducible methods for this run instead of pushing for the absolute tightest bounds on $K_G$.

\paragraph{Runtime statistics.}
The maximum degree parameter was $D=15$.
The initialization was $Q = \Pi_1$.
There were 92 update iterations for $Q$. The final response pool had $|\calR| = 274$ rounding function pairs $(f,g)$. A pair was added to $\calR$ if they significantly exceeded the current maximum value of $Q$ on $\calR$.
A larger pool of 13{,}069 asymptotic Boolean functions (see \cref{sec:asymptotic}) was maintained to heuristically estimate the $\infty$-to-1 norm of $Q$.
The search took 190 minutes using GPT-6 Astra.
After 60 minutes, the algorithm began iteration 82. After 115 minutes, the algorithm began iteration 90. The remaining 75 minutes were spent searching for $\infty$-to-1 optimizers and updating $Q$ only two more times.

The final candidate $\mu$ was computed after the iteration using \cref{eq:objective-mu}.
The LP was equivalent to maximizing $c_1(\mu)$ subject to $c_3 = c_5 = \cdots = c_{15} = 0$ (a useful variant would be to increase $D$ in this calculation in order to minimize more low-degree coefficients of $\mu$, but in this run we kept $D = 15$).
Furthermore, in order to measure the progress of upper bounds throughout the iteration, we additionally ran \cref{eq:objective-mu} at each iteration. The change in the heuristic upper and lower bounds is plotted in \cref{fig:optq-growth}.

\begin{figure}[ht]
    \centering
    \includegraphics[width=\linewidth]{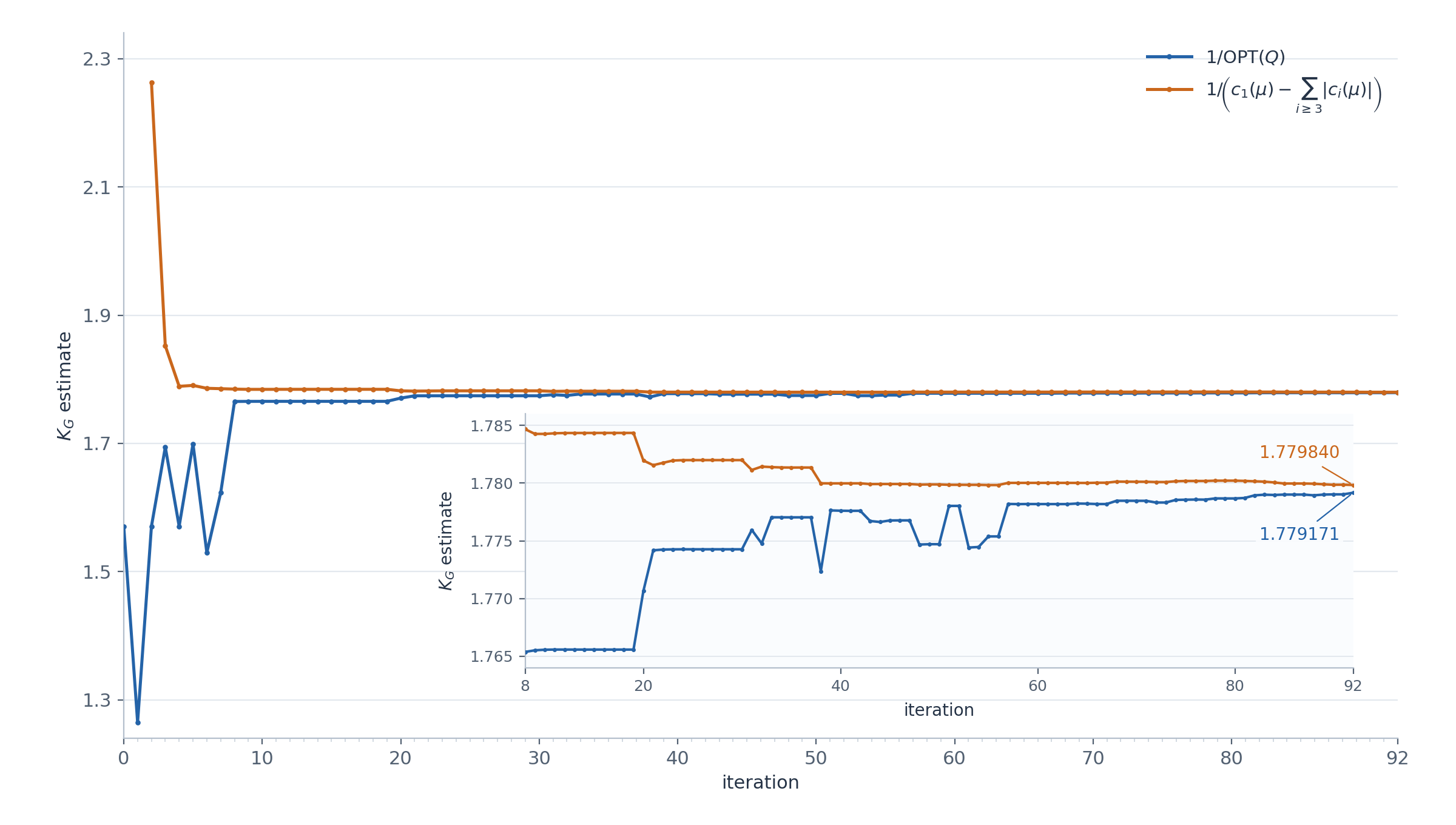}
    \caption{Primal/dual estimates of $K_G$ by iteration. The value $\OPT(Q^{(t)})$ is estimated using the final pool $\calR$ by the formula $\OPT(Q^{(t)}) \approx \max_{(f,g) \in \calR} \langle f, Q^{(t)} g\rangle$. The estimate of $\sum_{i \geq 3}|c_i(\mu)|$ is only computed up to degree at least 63. At the end of the iteration, sources contributing to the remaining gap include a nonzero duality gap, only optimizing cancellations in $\mu$ up to degree 15, and numerical integration error.}
    \label{fig:optq-growth}
\end{figure}

\paragraph{Outputs.}
The final Hermite projection game has degree $15$ with
$Q=\sum_{\substack{\text{odd }1 \leq m \leq 15}}q_m\Pi_m$
where the coefficients $q_m$ are in Table~\ref{tab:q15-coefficients}.
\begin{table}[htbp]
\centering\small
\begin{tabular}{r r@{\qquad}r r}
$m$ & $q_m$ & $m$ & $q_m$ \\ \hline
1 & $ 1$                  &  9 & $ 0.13649017557899745$  \\
3 & $-0.879274051265278$   & 11 & $-0.049432632669210025$ \\
5 & $ 0.5207487584184273$  & 13 & $-0.10368806013060777$  \\
7 & $-0.3110788699180306$  & 15 & $ 0.12636021488829785$  \\
\end{tabular}
\caption{Coefficients for the candidate degree-\(15\) Hermite game produced by the run.}
\label{tab:q15-coefficients}
\end{table}

During the final iteration, when $Q$ was selected to minimize the value of the pool via \cref{eq:objective-q}, the maximum value among the pool functions was $\max_{(f,g) \in \calR} \langle f, Qg \rangle = 0.5619674$.
A subsequent search for new responses to $Q$ found functions with slightly larger value,
\[
  \max_{(f,g)\in\mathcal R}\langle f,Qg\rangle
  =0.5620595,
  \qquad \frac1{0.5620595}=1.779171\ldots.
\]
Thus, the duality gap at termination was still nonzero.
The $\infty$-to-1 calculations were getting long enough that it was very expensive to continue the iteration.

The final mixture of rounding functions $\mu$ is supported on 8 pairs of rounding functions $(f,g)$ each with the form $\sign(p(X))$ for a graded multivariate polynomial $p$ with graded-degree at most 15.
The mixture weights are given in \cref{tab:rounding-mixture-weights}.
All 8 rounding pairs have value approximately 0.5619674 in the final game $Q$,
the maximum among functions in the pool (as expected by complementary slackness).
The numerical estimates are $c_1(\mu)\approx0.5619674$ and
$\sum_{\text{odd }i = 3}^{63}|c_i(\mu)|\approx0.0001118$.
The resulting estimated upper bound is
\[
    c_1(\mu) - \sum_{i=3}^{63}|c_i(\mu)| \approx 0.5618556, \qquad \frac{1}{0.5618556} = 1.7798167\ldots.
\]

\begin{table}[htbp]
\centering\small
\begin{tabular}{r r l}
$i$ & $w_i$ & Function family \\ \hline
1  & $0.07659708908805163$  & Graded linear threshold \\
2 & $0.38070941881309583$  & Graded linear threshold \\
3 & $0.3422214571899304$   & Graph threshold, ${n}=8$ \\
4 & $0.027934013874205684$ & Bivariate threshold, ${n}=2$ \\
5 & $0.04989420667246783$  & Bivariate threshold, ${n}=2$ \\
6 & $0.05838336532409958$  & Symmetric polynomial threshold, ${n}=32$ \\
7 & $0.017645044351596674$ & Symmetric polynomial threshold, ${n}=32$ \\
8 & $0.04661540468655233$  & Symmetric polynomial threshold, ${n}=8$ \\
\end{tabular}
\caption{Weights and descriptions of the final rounding mixture. ${n}$ is the number of graded variables used in the function. Full descriptions are available at \url{https://github.com/christophsjones/RealGrothendieck}.}
\label{tab:rounding-mixture-weights}
\end{table}

\section{Lower bound certificate}
\label{sec:lower}

\subsection{The game and its one-variable reduction}\label{sec:reduction}

In this Section, we prove Theorem \ref{thm:lower}.  For measurable $f,g\colon\R^{n}\to[-1,1]$, write
\begin{equation}\label{eq:game}
 {Q=Q^{({n})}}=\sum_mq_m\Pi_m,
 \qquad
 v_{n}(q)={\OPT(Q)}=\sup_{f,g}|\langle f,{Q}g\rangle|.
\end{equation}
By a standard discretization argument to approximate $Q$ by finite matrices \cite[Theorems 2.4 and 2.5]{pisier2012grothendieck}, we have $K_G \geq \frac{1}{v_\infty(q)}$.
It remains to upper bound $v_\infty(q)$.

Define coefficients $q_m$ from \cref{eq:coefficients}.
For $N=7,9$, choose rational weights $w^{(N)}\in\R^{(N-1)/2}$ and put
\begin{align}
 B_N&:=\max_{\sigma\in\{-1,1\}^N}
 \frac1N\sum_{r=0}^{N-1}\sum_{j=1}^{(N-1)/2}
 w_j^{(N)}\sigma_r\sigma_{r+j},\qquad
 \sigma_{r+N}=-\sigma_r,\label{eq:rotation-constant}\\
 \Gamma_m&:=\sum_{N\in\{7,9\}}\sum_{j=1}^{(N-1)/2}
 w_j^{(N)}\cos^m(\pi j/N),\qquad B=B_7+B_9.
 \label{eq:rotation-coefficients}
\end{align}
Global sign reversal leaves the objective unchanged, so one may fix $\sigma_0=1$.
Define, with all other entries zero,
\begin{equation}\label{eq:residual-weights}
 \alpha_m:=q_m-\Gamma_m\quad(m=3,5,\ldots,21),\qquad
 \beta_m:=-q_m-\Gamma_m\quad(m=1,3,\ldots,21).
\end{equation}

Set $\psi:=(\psi_0,\ldots,\psi_{21})^T$ and call
$M:=\binom{\E[\psi u]}{\E[\psi v]}\in\R^{44}$ \textbf{feasible} when
$u,v\colon\R\to[-1,1]$ are measurable. Index its coordinates by $0,\ldots,43$
and set, for $p=0,1$ and $0\le j\le10$,
\begin{equation}\label{eq:hk-moments}
 (h_p(M))_j:=\frac{M_{2j+p}+M_{22+2j+p}}2,\qquad
 (k_p(M))_j:=\frac{M_{2j+p}-M_{22+2j+p}}2.
\end{equation}
Choose symmetric rational $11\times11$ matrices $H_p,K_p$, vectors
$\ell_h,\ell_k\in\mathbb Q^{11}$, and rational $t,c_h,c_k,C$.
Set $\varepsilon:=10^{-7}$ and
$D_{p,r}(a):=\diag(a_{2j+p+r})_{j=0}^{10}$. Require
\begin{equation}\label{eq:rotation-tail}
 \Gamma_m\ge0\quad(m>21\text{ odd}),\qquad 1-\Gamma_1>0.
\end{equation}
For each $(A,a,\lambda,c)\in\{(H,\alpha,\ell_h,c_h),
(K,\beta,\ell_k,c_k)\}$, assume, $\forall$ $p=0,1$ and $1\le r\le21$,
\begin{equation}\label{eq:matrix-comparisons}
 A_p\succeq0,\qquad
 A_p+\varepsilon I_{11}\succeq D_{p,r}(a),\qquad
 \begin{pmatrix}
 A_1+\varepsilon I_{11}-D_{1,0}(a)&\lambda\\
 \lambda^T&c+\varepsilon
 \end{pmatrix}\succeq0.
\end{equation}
Here $\succeq0$ means positive semidefinite, and $I_{11}$ is the $11\times11$ identity matrix.
Unsubscripted norms of finite vectors are Euclidean; $\|z\|_1=\sum_j|z_j|$.
Define
\begin{align}
F(M)&:=\sum_{p=0}^1\bigl(h_p(M)^TH_ph_p(M)+k_p(M)^TK_pk_p(M)\bigr)
 +t(h_1(M))_0-2\ell_h^Th_1(M)-2\ell_k^Tk_1(M),
 \label{eq:moment-quadratic}\\
 \Phi(x)&:=B+c_h+c_k+3\varepsilon
 +(1-\Gamma_1)\frac2\pi x^2-t\sqrt{2/\pi}\,x.
 \label{eq:scalar-correction}
\end{align}
Write $F(M)=M^TRM+\ell^TM$ with $R$ symmetric. By the invertible
change of variables \cref{eq:hk-moments}, $R\succeq0$ is equivalent to
$H_p,K_p\succeq0$ for $p=0,1$.

\begin{proposition}[One-variable criterion]\label{thm:reduction}
Suppose a finite family of closed intervals $I$ covers $[0,1]$,
each with dimension-independent parameters satisfying
\cref{eq:rotation-tail}, \cref{eq:matrix-comparisons}, and
\begin{align}
 F(M)&\le C\quad\text{for every feasible }M,\label{eq:one-dimensional}\\
 C+\max_{x\in\partial I}\Phi(x)&\le {v_*}
 =\frac{1000}{1773}-10^{-9}.
 \label{eq:endpoint}
\end{align}
Here $\partial I$ denotes the interval endpoints.
Then $v_{n}(q)\le {v_*}$ for every ${n}\ge1$, hence Theorem \ref{thm:lower}. 
\end{proposition}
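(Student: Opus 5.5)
The plan is to decompose the bilinear form $\langle f,Qg\rangle$ into a ``rotation'' part, handled combinatorially by the constants $B_N$, and a ``residual'' part, handled by the semidefinite certificates, which reduces to one-variable moment vectors. First I would symmetrize. By \cref{lem:odd-rounding} and the threshold rounding in \eqref{eq:odd-threshold-rounding} we may take $f,g$ odd with values in $[-1,1]$. Put $h=(f+g)/2$ and $k=(f-g)/2$, so that
\[
\langle f,Qg\rangle=\langle h,Qh\rangle-\langle k,Qk\rangle=\sum_m q_mW_m(h)-\sum_m q_mW_m(k).
\]
Then write $q_m=\Gamma_m+\alpha_m$ on the $h$-side for $m\ge3$, and $-q_m=\Gamma_m+\beta_m$ on the $k$-side. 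The $h$-side degree-one term is split off as $\Gamma_1W_1(h)+(1-\Gamma_1)W_1(h)$. This matches \cref{eq:residual-weights}.

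\textbf{Rotation part.} The rotation part $\sum_m\Gamma_m\bigl(W_m(h)+W_m(k)\bigr)$ should equal $\sum_{N,j}w^{(N)}_j\bigl(\langle h,T_{\cos(\pi j/N)}h\rangle+\langle k,T_{\cos(\pi j/N)}k\rangle\bigr)$, where $T_\rho$ is the noise operator. The tail condition $\Gamma_m\ge0$ for $m>21$ in \cref{eq:rotation-tail} lets us keep the whole series rather than truncating it. Next, realize $T_{\cos\theta}$ by Gaussian rotations. Take independent copies $X,Y$ and set $Z_r=\cos(\pi r/N)X+\sin(\pi r/N)Y$. Oddness of $f,g$ gives $Z_{r+N}=-Z_r$ and $\sigma_{r+N}=-\sigma_r$ for the values $\sigma_r=f(Z_r)$ or $g(Z_r)$. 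Averaging over the $N$ cyclic shifts shows the rotation part equals an expectation of $\frac1N\sum_{r,j}w_j\sigma_r\sigma_{r+j}$, up to recombining $h,k$ back into $f,g$ (via $\langle h,Th\rangle+\langle k,Tk\rangle=\tfrac12(\langle f,Tf\rangle+\langle g,Tg\rangle)$). Pointwise this is at most $B_N$; for $[-1,1]$-valued inputs we first pass to $\pm1$ values by multilinearity in each $\sigma_r$. This yields the constant $B=B_7+B_9$.

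\textbf{Residual part and one-variable reduction.} The remaining terms are $\sum_{m\ge3}\alpha_mW_m(h)+\sum_m\beta_mW_m(k)+(1-\Gamma_1)W_1(h)$. Rotate so that $\Pi_1h$ points along $X_1$, and let $x\in[0,1]$ be a normalized size of $\E[X_1h]$. Write the degree-one mass as $W_1(h)=\frac2\pi x^2$, so the $(1-\Gamma_1)$ term in $\Phi$ appears; the scale $\sqrt{2/\pi}=\E|X_1|$ is the largest possible first coefficient. Then compare each residual weight $\alpha_mW_m(h)$ with quadratic expressions in the one-dimensional Hermite moments $M$ of the conditional expectations $u,v$ of $f,g$ given $X_1$.

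The matrix inequalities \cref{eq:matrix-comparisons} are exactly what this comparison needs. The condition $A_p+\varepsilon I\succeq D_{p,r}(a)$ bounds the residual contribution of higher-variable chaos components (degree split $r$) by a quadratic form in the one-variable moments. The bordered matrix absorbs the linear terms $\lambda$ at cost $c+\varepsilon$. The multiplier $t$ adds $t(h_1)_0$ to $F$ and subtracts $t\sqrt{2/\pi}\,x$ in $\Phi$, so the two cancel exactly. This step yields
\[
\langle f,Qg\rangle\le F(M)+\Phi(x)\le C+\Phi(x).
\]
Since $\Phi$ is a convex quadratic in $x$ (because $1-\Gamma_1>0$), its maximum on each covering interval $I$ is attained at an endpoint. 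Then \cref{eq:endpoint} gives $v_n(q)\le v_*$.

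The main obstacle I expect is the middle step: justifying rigorously that every chaos component not captured by the one-variable moments is dominated, degree by degree, by the diagonal comparisons $D_{p,r}(a)$. In particular one must track how the mass $W_m$ splits into a ``pure $X_1$'' part and a ``mixed/orthogonal'' part indexed by $r$. I would attempt this via the Hermite expansion in $X_1$ with coefficients given by functions of the orthogonal variables, applying the matrix bounds fiberwise and then averaging.
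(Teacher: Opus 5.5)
The outer structure of your proposal matches the paper: the odd reduction, $h=(f+g)/2$, $k=(f-g)/2$, the rotation part bounded by $B$ through $\sigma_r$ and oddness, $W_1(h)=\frac2\pi x^2$, and convexity of $\Phi$. Your multilinearity remark for $[-1,1]$-valued $\sigma_r$ is also fine.

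\textbf{The gap is the one-variable step, and the object you name there is the wrong one.} You take $M$ to be the Hermite moments of $u=\E[f\mid X_1]$, $v=\E[g\mid X_1]$ and assert $\langle f,Qg\rangle\le F(M)+\Phi(x)$. Conditioning on $X_1$ kills every chaos component that involves the other coordinates. Those components carry residual weights $\alpha_m,\beta_m$ of either sign. So \eqref{eq:rotation-tail} and \eqref{eq:matrix-comparisons}, which are all this step can use, cannot bound them by a quadratic in the averaged moments.

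Here is a concrete failure. The choice $w^{(N)}=0$, $H_p=K_p=I_{11}$, $\ell_h=\ell_k=0$, $t=c_h=c_k=0$ satisfies both hypotheses, since $|q_m|\le1$. Take $f=\sign(X_2X_3X_4)$ and $g=-f$. Then $h=0$, so $x=0$, and $u=v=0$, so your inequality reads $\langle f,Qg\rangle\le F(0)+\Phi(0)=3\varepsilon$. But $W_1(f)=0$, $W_3(f)=(2/\pi)^3$, $W_5(f)=\frac12(2/\pi)^3$, and $q_m\le0.0865$ for $m\ge7$. Hence $\langle f,Qg\rangle=-\sum_mq_mW_m(f)>0.13$.

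\textbf{What the step needs is the fiberwise moment vector.}
\begin{itemize}
\item Split the coordinates as $(S,Y)$ with $S=X_1$. Let $M(Y)$ be the moments of the slices $f(\cdot,Y),g(\cdot,Y)$; this vector is feasible for every $Y$.
\item Hermite-expand $h_p(M(Y))$ and $k_p(M(Y))$ in $Y$, with coefficient vectors $\widehat u_p(\eta),\widehat v_p(\eta)$.
\item Coordinate $j$ at index $\eta$ sits in total degree $2j+p+|\eta|$. So the residual equals exactly $\sum_{\eta,p}\bigl[\widehat u_p(\eta)^TD_{p,|\eta|}(\alpha)\widehat u_p(\eta)+\widehat v_p(\eta)^TD_{p,|\eta|}(\beta)\widehat v_p(\eta)\bigr]$.
\item Bound the terms with $1\le|\eta|\le21$ by $H_p+\varepsilon I$ and $K_p+\varepsilon I$. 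Bound those with $|\eta|>21$ using $H_p,K_p\succeq0$. Note that $D_{0,0}=0$.
\item Use the bordered matrix, tested on $(z,-1)$, only for $\eta=0$, $p=1$.
\item Parseval in $Y$ reassembles these bounds into $\E_Y$ of the quadratic forms in $h_p(M(Y))$ and $k_p(M(Y))$. It turns the linear terms into $-2\ell_h^T\E_Yh_1(M(Y))-2\ell_k^T\E_Yk_1(M(Y))$.
\item Bessel's inequality bounds the $\varepsilon\|z\|^2$ errors by $\varepsilon$ in total. Together with the constants $c_h+c_k+2\varepsilon$ this gives the $3\varepsilon$ in $\Phi$.
\item Since $\E_Y(h_1(M(Y)))_0=\sqrt{2/\pi}\,x$, the $t$-terms cancel after averaging.
\end{itemize}
The result is $\langle f,Qg\rangle\le\E_YF(M(Y))+\Phi(x)$, and \eqref{eq:one-dimensional} is applied to each $M(Y)$ before averaging.

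Because $R\succeq0$, Jensen gives $F(\E_YM(Y))\le\E_YF(M(Y))$, which is why your averaged version is too strong. In the example above, $F(M(Y))=(K_0)_{00}=1$ for every $Y$: the mass living in the other variables is charged to the $S$-constant coordinate of each slice, and averaging over $Y$ erases it. Your closing remark about applying the bounds fiberwise and then averaging points the right way. But what gets averaged must be $F(M(Y))$, not $M$.
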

\begin{proof}
Since ${Q}$
is self-adjoint with odd range, successive replacements of
$f=\sign({Q}g)$ and $g=\sign({Q}f)$ cannot decrease the OPT value.
Use $\sign(z_1)$ wherever the argument vanishes; both responses are
then odd almost everywhere.
Put $h:=(f+g)/2$, $k:=(f-g)/2$; then $h^2+k^2=1$.  

Rotate the domain so that $\Pi_1h(z)$ is a nonnegative multiple of $z_1$.
Then $\|\Pi_1h\|_2=\E[z_1h]\le\E|z_1|=\sqrt{2/\pi}$, so
$x:=\sqrt{\pi/2}\|\Pi_1h\|_2\in[0,1]$ and $\Pi_1h=\sqrt{2/\pi}xz_1$.
Choose an interval $I\ni x$ and use its associated parameters.

For each $N$ and independent $U,V\sim\gamma_{n}$, set
$\sigma_r:=f(\cos(r\pi/N)U+\sin(r\pi/N)V)$.
Oddness gives $\sigma_{r+N}=-\sigma_r$.
The correlation identity \cref{eq:correlation} gives
\begin{equation}\label{eq:bnbound}
 \sum_jw_j^{(N)}\sum_{m\text{ odd}}\cos^m(\pi j/N)\|\Pi_m f\|_2^2
 =\E\!\left[\frac1N\sum_{r=0}^{N-1}\sum_j
w_j^{(N)}\sigma_r\sigma_{r+j}\right]
\stackrel{\cref{eq:rotation-constant}}{\le}\E[B_N]
=B_N.
\end{equation}
All series converge absolutely by Parseval. Average for $f,g$ and sum
over $N$, using $(\|\Pi_m(f)\|_2^2+\|\Pi_m g\|_2^2)/2=\|\Pi_m h\|_2^2+\|\Pi_m k\|_2^2$.
Combining this with
$\langle f,{Q}g\rangle=\sum_mq_m(\|\Pi_m h\|_2^2-\|\Pi_m k\|_2^2)$
and $q_1=1$, $q_m=0$ for $m>21$ from \cref{eq:coefficients}, gives
\begin{equation}\label{eq:payoff-residual}
\begin{aligned}
 \langle f,{Q}g\rangle
 &=\sum_{m\,\rm odd}\Gamma_m(\|\Pi_m h\|_2^2+\|\Pi_m k\|_2^2)+\sum_{m\,\rm odd}\bigl((q_m-\Gamma_m)\|\Pi_m h\|_2^2+(-q_m-\Gamma_m)\|\Pi_m k\|_2^2\bigr)\\
&\stackrel{\cref{eq:bnbound}\wedge\cref{eq:rotation-coefficients}\wedge\cref{eq:residual-weights}\wedge\cref{eq:rotation-tail}}{\le} B+(1-\Gamma_1)\frac2\pi x^2
 +\sum_m\bigl(\alpha_m\|\Pi_m h\|_2^2+\beta_m\|\Pi_m k\|_2^2\bigr).
 \end{aligned}
\end{equation}

Write the Gaussian coordinates as independent
$S\sim\gamma_1$, $Y\sim\gamma_{{n}-1}$, with $Y$ the empty tuple if ${n}=1$.
Let $M(Y)$ be the feasible moments of $f(\cdot,Y),g(\cdot,Y)$. For
$p=0,1$ and $\eta\in\{0,1,\ldots\}^{{n}-1}$, set
\[
\begin{aligned}
 u_p(Y)&:=h_p(M(Y)),&v_p(Y)&:=k_p(M(Y)),\\
 \widehat u_p(\eta)&:=\E[u_p(Y)\psi_\eta(Y)],&
 \widehat v_p(\eta)&:=\E[v_p(Y)\psi_\eta(Y)].
\end{aligned}
\]
The last sum in \cref{eq:payoff-residual} equals
\[
\sum_{\eta}\sum_{p=0}^1[\widehat u_p(\eta)^TD_{p,|\eta|}(\alpha)\widehat u_p(\eta)
+\widehat v_p(\eta)^TD_{p,|\eta|}(\beta)\widehat v_p(\eta)].
\]
Also $(h_1(M(Y)))_0=(M_1(Y)+M_{23}(Y))/2=\E_S[Sh(S,Y)]$, so
$\E_Y(h_1(M(Y)))_0=\sqrt{2/\pi}\,x$.
In the Hermite expansion in $Y$, coordinate $j$ of $h_p(M),k_p(M)$ at
index $\eta$ has total degree $2j+p+|\eta|$ in $(S,Y)$.
For $1\le|\eta|\le21$, apply
$D_{p,|\eta|}(\alpha)\preceq H_p+\varepsilon I_{11}$ and
$D_{p,|\eta|}(\beta)\preceq K_p+\varepsilon I_{11}$ from \cref{eq:matrix-comparisons}.
For $|\eta|>21$, both diagonal matrices vanish and $H_p,K_p\succeq0$ by the same equation.
When $\eta=0$ and $p=0$, $D_{0,0}(\alpha)=D_{0,0}(\beta)=0$.
For $\eta=0$ and $p=1$, the last inequality of \cref{eq:matrix-comparisons},
applied to $(z,-1)$, gives the following bounds for
$z=\widehat u_1(0)$ and $z=\widehat v_1(0)$, respectively:
\[
z^TD_{1,0}(\alpha)z
\leq z^TH_1z
-2\ell_h^Tz+c_h+\varepsilon+\varepsilon\|z\|_2^2
,\quad
z^TD_{1,0}(\beta)z
\leq z^TK_1z
-2\ell_k^Tz+c_k+\varepsilon+\varepsilon\|z\|_2^2.
\]
Parseval adds these bounds. The constants cost $c_h+c_k+2\varepsilon$,
and the quadratic errors cost at most
\[
 \varepsilon\E_Y\sum_p(\|h_p(M(Y))\|^2+\|k_p(M(Y))\|^2)
 \le\varepsilon\E(h^2+k^2)=\varepsilon
\]
by Bessel's inequality. 
Combining these estimates into \cref{eq:payoff-residual} then gives
\begin{equation}\label{eq:slice-bound}
 \langle f,{Q}g\rangle\le\E_YF(M(Y))+\Phi(x)
 \stackrel{\cref{eq:one-dimensional}}{\le} C+\Phi(x).
\end{equation}
By \cref{eq:rotation-tail}, $\Phi$ is convex. Since $x\in I$,
$C+\Phi(x)\le C+\max_{\partial I}\Phi\le {v_*}$ by \cref{eq:endpoint}, proving the claim.
\end{proof}

\subsection{Finite certification}\label{sec:computation}
Since $R\succeq0$, we reduce $F\le C$ on feasible moments to an
integer quadratic in signs.

\paragraph{Reduction to finitely many signs.}
Partition $[-9,9]$ into $J=[j/2,(j+1)/2]$, $-18\le j\le17$,
and divide each $J$ into $n_J$ panels of length $h_J=1/(2n_J)$.
On a panel $[a,b]$, put $\tau=(s-a)/(b-a)$ and
\[
 (\lambda_0,\lambda_1,\lambda_2):=((1-\tau)^2,2\tau(1-\tau),\tau^2),
 \quad(c_0(p),c_1(p),c_2(p)):=
 \left(p(a),2p\left(\frac{a+b}{2}\right)-\frac{p(a)+p(b)}2,p(b)\right).
\]
The interpolant is $I_2p=\sum_\nu\lambda_\nu c_\nu(p)$, with
$\lambda_\nu\ge0$. Form $V$ from the columns
$c_\nu(\psi)\int_a^b\lambda_\nu\,d\gamma_1$, merging those at shared
endpoints (positive multiples of the same $\psi(s)$) and ordering by
associated points. Set $\mathcal B=\operatorname{diag}(V,V)$; it has
$44$ rows and ${N_{\mathrm{sign}}}=4\sum_Jn_J+2$ columns.

For feasible $M$, set ${\xi=2RM+\ell}$, $p_i=\psi^T\xi^{[i]}$,
and $\sigma_j=\sign((\mathcal B^T\xi)_j)$; brackets indicate
$22$-coordinate blocks, or $22\times22$ matrix blocks. Convexity, $\lambda_\nu\ge0$, and the
error bound \cref{eq:analytic-error} below give
\begin{equation}\label{eq:finite-signs}
 F(M)-F(\mathcal B\sigma)
 \le\xi^TM-\|\mathcal B^T\xi\|_1
 \le\sum_i\int|p_i|\,d\gamma_1-\|\mathcal B^T\xi\|_1
 \le E.
\end{equation}

\paragraph{The analytic error.}
Quadratic interpolation satisfies, for $p\in C^3([0,1])$,
\begin{equation}\label{eq:interpolation-error}
 \int_0^1|p-I_2p|\,ds\le\frac1{96}\int_0^1|p'''|\,ds.
\end{equation}
Indeed, choose $g'''=|p'''|$ and put $w(s)=\sign(1/2-s)$.
The Lagrange remainder for $g\pm p$ gives $|p-I_2p|\le w(g-I_2g)$.
For $H(s)=u^2(3-4u)/24$, $u=\min(s,1-s)$, we have
$H\in C^2$, $H=H'=0$ at the endpoints, $-H'''=w$ almost everywhere,
and $0\le H\le1/96$. Three integrations by parts against $g-I_2g$,
which vanishes at the endpoints, bound the left side by $\int_0^1H|p'''|$, proving the claim.
On a panel in $J$, scaling contributes $h_J^3$ and comparison of
Gaussian densities contributes at most $e^{h_J\max_J|s|}$.

Let $T=\{|s|>9\}$ and let $\mathcal P_0$ partition $\R$ into
the $36$ intervals $J$ and two exterior intervals, up to endpoints. Put
\[
 G_L=\int_L\psi\psi^T\,d\gamma_1,\qquad
 Q_J=\int_J\psi'''(\psi''')^T\,d\gamma_1,
\]
and, for $Q\in\R^{22\times22}$, $Q\succeq0$, and a finite measurable partition $\mathcal P$, define
\[
 \mathcal A(Q;\mathcal P)
 =\sum_{i=1}^2\sqrt{(\ell^{[i]})^TQ\ell^{[i]}}
 +2\sum_{i,j=1}^2\sum_{L\in\mathcal P}
\sqrt{\gamma_1(L)\Tr((R^{[ij]})^TQR^{[ij]}G_L)}.
\]
Splitting each feasible moment over $\mathcal P$ and applying the
triangle inequality gives
$\sum_i\|Q^{1/2}\xi^{[i]}\|\le\mathcal A(Q;\mathcal P)$, since
Cauchy--Schwarz gives, for a real matrix $R_0$ with $22$ columns and measurable $h\colon\R\to[-1,1]$,
\[
 \left\|R_0\int_L\psi h\,d\gamma_1\right\|^2
 \le\gamma_1(L)\Tr(R_0G_LR_0^T)\qquad(|h|\le1).
\]
Another Cauchy--Schwarz inequality bounds the derivative and tail
integrals. Thus the uniform error allowance is any rational $E$ with
\begin{equation}\label{eq:analytic-error}
 E\ge\sqrt{\gamma_1(T)}\mathcal A(G_T;\{\R\})
 +\sum_J\frac{e^{h_J\max_J|s|}h_J^3}{96}
           \sqrt{\gamma_1(J)}\mathcal A(Q_J;\mathcal P_0),
\end{equation}
where $G_\R=I_{22}$.
The polynomial integrands have degree at most $42$; their Gaussian
moments follow from \cref{heavy:gaussian-moments}, using
reflection and infinite-endpoint limits for the tails.
The verifier uses equivalent Hermite identities.

\paragraph{Integer rounding and the target.}
Bars denote rounding $\mathcal B$ to multiples of $2^{-42}$ and
$R,\ell$ to multiples of $2^{-40}$, with $\overline R$ symmetric.
With entrywise absolute values, set
$b_j=\sum_k|\overline{\mathcal B}_{jk}|$, choose rational
$\delta_j\ge\sum_k|\mathcal B_{jk}-\overline{\mathcal B}_{jk}|$,
and put $b_+=b+\delta$.
The bounds $|\overline{\mathcal B}\sigma|\le b$ and
$|\mathcal B\sigma-\overline{\mathcal B}\sigma|\le\delta$
give, by expansion, the uniform input-error allowance
\begin{equation}\label{eq:input-error}
 E_{\mathrm{input}}\ge
 \delta^T|\overline R|(2b+\delta)+b_+^T|R-\overline R|b_+
 +|\overline\ell|^T\delta+|\ell-\overline\ell|^Tb_+.
\end{equation}
After substituting $\sigma_j^2=1$, multiply the rounded quadratic by
$2^{44}$ and round its coefficients to nearest integers, obtaining
$P=a_0+\sum_ja_j\sigma_j+\sum_{j<k}a_{jk}\sigma_j\sigma_k$.
Sign monomials have absolute value one, so the sum of absolute
coefficient-rounding errors bounds the uniform error. Let
$E_{\mathrm{graph}}$ be this sum divided by $2^{44}$, including the
once-rounded total diagonal.
Thus, with $\Delta=E_{\mathrm{input}}+E_{\mathrm{graph}}$,
$|F(\mathcal B\sigma)-2^{-44}P(\sigma)|\le\Delta$.
Equation~\eqref{eq:finite-signs} gives
$F(M)\le E+\Delta+2^{-44}P(\sigma)$, so it suffices to certify
\begin{equation}\label{eq:finite-target}
 \max_{\sigma\in\{-1,1\}^{N_{\mathrm{sign}}}}P(\sigma)
 \le U:=\lfloor2^{44}(C-E-\Delta)\rfloor.
\end{equation}

\paragraph{Graph bounds and sign fixing.}
Use vertices $j^+,j^-$ and terminals of signs $+1,-1$.
For each nonzero $a_{jk}$, join equal copies if $a_{jk}>0$ and
opposite copies otherwise, with capacity $|a_{jk}|$ on both edges.
Join $j^+$ to terminal $\sign(a_j)$ and $j^-$ to the opposite
terminal, with capacity $|a_j|$; omit zero capacities.
A cut assigns vertex signs and sums capacities across opposite signs.
For consistent cuts, the signs are $(\sigma_j,-\sigma_j)$, giving
\begin{equation}\label{eq:cut}
 \max_\sigma P(\sigma)
 \le a_0+\sum_j|a_j|+\sum_{j<k}|a_{jk}|-\min_x\operatorname{cap}(x),
\end{equation}
with equality when minimizing over consistent cuts.
Represent each edge by opposite directed arcs. Check flow nonnegativity,
capacity bounds, and conservation at nonterminals. Summing conservation
shows that a flow and cut of equal value certify a minimum cut.

A minimum cut $x$ also fixes opposite pairs. Replacing each
$x_{j^\pm}$ by $-x_{j^\mp}$ preserves capacity.
For consistent $z$, sort $z$, $x$, and this reflected cut coordinatewise into $a\le y\le b$.
Sorting cannot increase total capacity: each edge's disagreements become
the absolute difference of its endpoint counts of positive signs.
By minimality,
\[
 \operatorname{cap}(y)\le\operatorname{cap}(z)+2\operatorname{cap}(x)
 -\operatorname{cap}(a)-\operatorname{cap}(b)\le\operatorname{cap}(z).
\]
The coordinatewise median satisfies $y_{j^+}=-y_{j^-}$ and
$y_{j^\pm}=x_{j^\pm}$ whenever $x_{j^+}=-x_{j^-}$.
Thus some maximizing assignment obeys all fixings. Substitute them,
retaining constants, and split remaining signs as needed. Check both
children and require every terminal bound to be at most $U$.

\paragraph{Rectangle certificates.}
Checked invariance of $P$ under grid reflection combined with negation
of both sign blocks allows the first block's sign at zero to be fixed:
$\sigma_{j_0}=1$.
For supplied nonzero $v_1,v_2\in\mathbb Z^{N_{\mathrm{sign}}}$, put
$u_k=v_k^T\sigma/\|v_k\|_1\in[-1,1]$.
Recursively split $[-1,1]^2$ at rational interior points, retaining
both closed children. On a leaf $\mathcal R=\prod_k[l_k,h_k]$, add
\begin{equation}\label{eq:rectangle-quadratic}
 2^{44}\sum_{k=1}^2\mu_k(u_k-l_k)(h_k-u_k),\qquad\mu_k\ge0.
\end{equation}
The correction is nonnegative on $\mathcal R$.
Expand with $\sigma_j^2=1$ and round each coefficient contribution,
including the separate constants
$-2^{44}\mu_kl_kh_k$ and
$-2^{44}\mu_k\|v_k\|^2/\|v_k\|_1^2$, to a nearest integer.
For each $k$, charge half the number of nonintegral
contributions, rounded upward, to cover their absolute rounding errors.
Apply graph bounds and sign fixing to the augmented polynomial over
all signs with $\sigma_{j_0}=1$, unrestricted by $\mathcal R$.
Its bound plus the charges must be at most $U$.

Alternatively, supplied integers $a,b$ exclude a leaf if, for $w=av_1+bv_2$,
\[
 \min_{(s_1,s_2)\in\mathcal R}
 \bigl(a\|v_1\|_1s_1+b\|v_2\|_1s_2\bigr)
 >w_{j_0}+\sum_{j\ne j_0}|w_j|.
\]
The left side is evaluated at the minimizing corner; the right side
is $\max_{\sigma:\sigma_{j_0}=1}w^T\sigma$.
The complete cover and checked leaves prove \cref{eq:finite-target}.

\paragraph{Acceptance and reproduction.}\label{sec:reproduction}
The infinite part of \cref{eq:rotation-tail} follows from the
rational inequalities
\begin{equation}\label{eq:tail-test}
 \eta_N:=w_1^{(N)}-
 \sum_{j=2}^{(N-1)/2}\max\{0,-w_j^{(N)}\}>0\qquad(N=7,9).
\end{equation}
Indeed, for $m\ge0$ and $c_j=\cos(\pi j/N)$,
$\sum_jw_j^{(N)}c_j^m\ge c_1^m\eta_N>0$, since $0<c_j/c_1\le1$.

The exact rows and subdivisions are in
{\path{data/lower_rows.json}}; proof trees and verifier
are in {\path{lower/certificates/} and \path{lower/verify.py}, respectively}.
The supplied checks verify coverage of $[0,1]$, rotation maxima,
\cref{eq:tail-test}, $1-\Gamma_1>0$, all reduction matrices,
the unshifted ${R\succ0}$, regenerated error allowances and integer inputs,
the finite certificates, and \cref{eq:endpoint}.
Analytic calculations use outward-rounded $384$-bit intervals and
positive $LDL^T$ pivots for matrix inequalities; {\path{lower/check-tail.py}}
checks \cref{eq:tail-test} with exact rational arithmetic.
All $35$ rows pass, with largest payoff bound
$0.564015352833<{v_*}$, rounded upward.
Hence \cref{eq:finite-signs} and \cref{eq:finite-target} give $F\le C$;
\cref{thm:reduction} and {\cref{thm:duality}} prove \cref{thm:lower}.

{Install the README's dependencies,
and run from the repository root:
\begin{verbatim}
python verify.py --lower --workers 4
\end{verbatim}
This command includes the exact tail check and writes
\path{runs/complete-*/ACCEPTED.json}}
only on success;
\texttt{--data-only} checks input identities and structure, not the proof.

\section{Upper bound certificate}\label{app:upper-bound}
For the Hermite integrals below, $W,X_1,X_2,X_3$ are independent standard
Gaussians and $X=(X_1,X_2,X_3)$. Define
\begin{equation}\label{root:native}
 C[a,b]:=\sum_{|\alpha|=b}
 \bigl(\E[F(W,X)\psi_a(W)\psi_\alpha(X)]\bigr)^2
 \quad(a,b\ge0,\ \alpha\in\{0,1,\ldots\}^3).
\end{equation}
The one-coordinate case of \cref{eq:correlation} and
$\psi_a(-x)=(-1)^a\psi_a(x)$ give
\begin{align}
 H_i(t)&=\sum_{a,b\ge0}C[a,b]P_i(t)^a(-t)^b
 \quad(i=0,\ldots,13,15,16),\label{root:additive-kernel}\\
 H_{14}(t)&=\sum_{a,b\ge0}(-1)^a
 \bigl(\E[F_*(X_1,X_2)\psi_a(X_1)\psi_b(X_2)]\bigr)^2t^{3a+5b}.
 \label{eq:upper-staircase-kernel}
\end{align}
For $17\le i\le31$, rotational symmetry of two independent standard
Gaussians gives the sign-disagreement probability $\arccos(P_i(t))/\pi$, hence
\begin{equation}\label{root:linear}
 H_i(t)=1-\frac2\pi\arccos P_i(t)
       =\frac2\pi\arcsin P_i(t)\quad(17\le i\le31).
\end{equation}
Equation~\eqref{root:odd-hermite} and Parseval give
\[
 C[a,b]\ge0,\qquad \sum_{a,b}C[a,b]=\E F^2=1,\qquad
 C[a,b]=0\quad(a+b\text{ even}).
\]

\subsection{Numerical bounds for the coefficients up to degree 681}
\label{head:section}

\begingroup

\theoremstyle{plain}
\newtheorem{headclaim}{Claim}
\begin{lemma}
The coefficients of $H$ satisfy
\begin{equation}\label{eq:upper-certified-head}
 c_1-\sum_{m=2}^{681}|c_m|>0.5618773799453352.
\end{equation}
\end{lemma}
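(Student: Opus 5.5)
The bound is a finite computation: compute enclosures of each $c_m$, $1\le m\le 681$, and sum. Since $c_m=\sum_i\lambda_i[t^m]H_i$ with $\lambda_i$ given rationals, it suffices to produce, for each $i$ and each odd $m\le 681$, a rigorous interval $[\underline{c}_{i,m},\overline{c}_{i,m}]\ni[t^m]H_i$. Then $c_1\ge\sum_i\lambda_i\underline c_{i,1}$ and $|c_m|\le\max(|\sum_i\lambda_i\underline c_{i,m}|,|\sum_i\lambda_i\overline c_{i,m}|)$ for odd $m\ge3$. Even coefficients vanish by the oddness established above, so they contribute nothing. The key point is that cancellation happens only after mixing, since the individual $|[t^m]H_i|$ are not small. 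The enclosures therefore must be tight enough, roughly $10^{-14}$ relative, that the mixture's cancellation survives.

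The three families are handled separately.
\begin{itemize}
\item For $17\le i\le31$, \cref{root:linear} gives $H_i=\frac2\pi\arcsin P_i$ with $P_i$ a binomial $at+bt^m$. I would expand the Taylor series $\arcsin z=\sum_k\binom{2k}{k}\frac{z^{2k+1}}{4^k(2k+1)}$ and compose it exactly with the binomial using rational/interval arithmetic. Only terms with $2k+1\le 681$ matter, so this is a finite exact computation up to the rounding of $\pi$.
\item For the additive kernels $i\in\{0,\dots,13,15,16\}$, \cref{root:additive-kernel} reduces everything to the finitely many relevant quantities $C[a,b]$ with $a+b\le681$ (actually $a\cdot\deg+b\le681$), followed by exact polynomial composition in $P_i(t)^a(-t)^b$. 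The tail $\sum_{a+b>681}$ does not affect coefficients of degree $\le681$, because $P_i$ has no constant term.
\item For $i=14$, \cref{eq:upper-staircase-kernel} similarly needs the squared coefficients $\E[F_*\psi_a(X_1)\psi_b(X_2)]$ for $3a+5b\le681$.
\end{itemize}

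The main work, and the expected obstacle, is rigorously enclosing the Hermite coefficients of the threshold functions $F$ and $F_*$.

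For $F_*$, the function is piecewise constant on strips $k/256\le y<(k+1)/256$, with finitely many explicit breakpoints $r_{k,a}$ in $x$. Its coefficient factorizes strip by strip into $\int\psi_b$ over the $y$-strip times $\sum_a\pm\int_{r_{k,a}}\psi_a\,d\gamma_1$. These are closed-form incomplete Gaussian–Hermite integrals, since $\int_u^\infty\psi_a\,d\gamma_1=\psi_{a-1}(u)\varphi(u)/\sqrt a$. I would evaluate them in high-precision interval arithmetic, using stable recurrences for $\psi_a$ at large $a$ (up to about $227$) and exact rational breakpoints. That is why the roots were rounded to $10^{-12}$ and corrected.

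For $F$, I would condition on $X$. Since $F=\sign(W+\sum_j r(X_j))$, we get
\[
\E_W[F\psi_a(W)]=\tfrac{2}{\sqrt a}\psi_{a-1}(-s)\varphi(s)\quad\text{for } a\ge1,\qquad s=\sum_j r(X_j),
\]
with the $a=0$ term equal to $1-2\Phi(-s)$. This leaves a smooth integral over $X\in\R^3$ against $\psi_\alpha(X)$, and one then needs the sum of squares over $|\alpha|=b$. I would compute it by Gaussian quadrature in three dimensions, bound the quadrature error rigorously via analyticity (Bernstein-ellipse estimates), and exploit the symmetry in $X_1,X_2,X_3$ to reduce cost. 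Convergence is helped because $\varphi(s)$ decays.

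Since only the $C[a,b]$ with $C\ge0$ and $\sum C=1$ enter, the truncation of any auxiliary expansions can be controlled via Parseval. The residual error can be absorbed into the slack between the computed value and $0.5618773799453352$.
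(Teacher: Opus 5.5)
Your outline matches the paper's architecture. You treat $17\le i\le31$ through the $\arcsin$ series, and the staircase kernel through the closed forms $\int_x^\infty\He_j\phi=\phi(x)\He_{j-1}(x)$ on the rectangles where $F_*$ is constant. You reduce the additive kernels to the shared array $C[a,b]$ with $a+b\le681$, using $P_i(0)=0$. The genuine difference is the core step, computing $C[a,b]=\sum_{|\alpha|=b}\bigl(\E[q_a(\sum_jr(X_j))\psi_\alpha(X)]\bigr)^2$.

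You propose tensor-product quadrature in $\R^3$. The paper instead uses $q_a(v)=\frac{2}{\pi\sqrt{a!}}\int_0^\infty e^{-s^2/2}s^{a-1}\Im(\mathrm{i}^ae^{\mathrm{i}sv})\,ds$. Under this representation $e^{\mathrm{i}s\sum_jr(X_j)}$ factors, so each coefficient becomes a one-dimensional $s$-integral of $\prod_{j=1}^3B_{\alpha_j}(s)$. Here $B_j(s)=\mathrm{i}^{-j}\E[e^{\mathrm{i}sr(X_1)}\psi_j(X_1)\mathbf 1_{\{|X_1|\le73/10\}}]$ is itself a one-dimensional integral. After Gauss--Legendre quadrature in $s$, the sum over the $\binom{b+2}{2}$ multi-indices collapses to $\sum_{\nu,\mu}U_{a\nu}U_{a\mu}[z^b]\bigl(\sum_j\hat B_{\nu j}\hat B_{\mu j}z^j\bigr)^3$, computed with length-$2048$ FFTs.

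This buys two things.
\begin{itemize}
\item \emph{Cost.} The paper needs $768\cdot682$ one-dimensional integrals plus one FFT cube per pair $(\nu,\mu)$. Your route needs a three-dimensional grid, presumably of order $10^{10}$ points, since the paper already uses $1600$ positive nodes per axis in one dimension. On that grid you must generate all $\approx4.5\cdot10^9$ odd coefficients $(a,\alpha)$ before squaring.
\item \emph{A clean error argument.} The Hale--Trefethen bound is applied once to an $L_2$-valued integrand, which bounds the $\ell_2$ error of the whole coefficient array. A square-difference inequality then converts this to an $\ell_1$ bound on $C$.
\end{itemize}

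Your approach is viable in principle, but it needs three ingredients you leave unspecified:
\begin{itemize}
\item a vector-valued polyellipse bound, instead of billions of per-coefficient intervals;
\item an explicit truncation of the $X$-domain with its own tail estimate, since Bernstein-ellipse bounds are for finite intervals (the paper passes to $F_{\rm tr}$);
\item an efficient way to form the sums over $|\alpha|=b$.
\end{itemize}

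Your accuracy target is also misjudged, though in the harmless direction: $10^{-14}$ enclosures of the individual $[t^m]H_i$ are unnecessary. Put $P_{i,+}(t)=\sum_m|[t^m]P_i|t^m$. The majorant $\sum_m|[t^m]P_i(t)^a(-t)^b|\le P_{i,+}(1)^a\le1$ shows that the total coefficient error through degree $681$ is at most $(\sum_i\lambda_i)\sum_{a+b\le681}|\hat C[a,b]-C[a,b]|$. So you face a single $\ell_1$ budget of about $1.9\cdot10^{-6}$, the gap between the computed value $0.5618792\ldots$ and the stated constant. The paper spends almost all of it on the crude truncation bound $2\sqrt{3\operatorname{erfc}(73/(10\sqrt2))}$. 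The binding constraint is therefore handling truncation at least that carefully, not floating-point precision.

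Finally, the relevant range is $a+b\le681$ for every $i$ except $i=1$, where $P_1=-t^3$. If your parenthetical $a\cdot\deg+b\le681$ means $\deg P_i$, it would discard terms that are needed.
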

\begin{proof}
Define
\[
 F_{\rm tr}(W,X):=F(W,X)\prod_{j=1}^3\mathbf1_{\{|X_j|\le73/10\}},
 \qquad C_{\rm tr}[a,b]:=\sum_{|\alpha|=b}
 \left(\E[F_{\rm tr}(W,X)\psi_a(W)\psi_\alpha(X)]\right)^2,
\]
and $\operatorname{erfc}(x):=2\pi^{-1/2}\int_x^\infty e^{-t^2}\,dt$.
Choose $\hat C[a,b]$ as in Claim~\ref{head:claim-numerical}.
By Parseval and \cref{head:square-difference},
\begin{equation}\label{head:cutoff}
\begin{aligned}
 \sum_{a,b\ge0}|C_{\rm tr}[a,b]-C[a,b]|
 &\le\|F-F_{\rm tr}\|_2(\|F\|_2+\|F_{\rm tr}\|_2)\\
 &\le2\sqrt{3\operatorname{erfc}(73/(10\sqrt2))}
 <1.858282199124\cdot10^{-6},
\end{aligned}
\end{equation}
since $\|F_{\rm tr}\|_2\le\|F\|_2=1$ and
\[
 \|F-F_{\rm tr}\|_2^2
 =\Pr\!\left(\max_j|X_j|>73/10\right)
 \le\sum_{j=1}^3\Pr(|X_j|>73/10)
 =3\operatorname{erfc}(73/(10\sqrt2)).
\]
For square-summable real vectors $x,y$,
\begin{equation}\label{head:square-difference}
\begin{aligned}
 \sum_j|x_j^2-y_j^2|
 &\le\|x-y\|_2\|x+y\|_2\le\|x-y\|_2(\|x\|_2+\|y\|_2)\\
 &\le\|x-y\|_2(2\|x\|_2+\|x-y\|_2).
\end{aligned}
\end{equation}
For $i\ne14$, set $P_{i,+}(t):=\sum_{m\ge1}|[t^m]P_i(t)|t^m$.
Then
\begin{equation}\label{root:coefficient-majorant}
\begin{aligned}
 |[t^m](P_i(t)^a(-t)^b)|&\le[t^m](P_{i,+}(t)^at^b),\\
 \sum_{m\ge0}[t^m](P_{i,+}(t)^at^b)&=P_{i,+}(1)^a\le1,\qquad
 [t^m](P_{i,+}(t)^at^b)=0\quad(m<a+b).
\end{aligned}
\end{equation}
Only $a+b\le681$ contributes through degree $681$, by
\cref{root:coefficient-majorant}. That equation,
\cref{head:joint-plane}, and \cref{head:cutoff} give
\begin{equation}\label{head:head-error}
\begin{aligned}
 &\sum_{m=1}^{681}\left|[t^m]
 \sum_{i\in\{0,\ldots,13,15,16\}}\lambda_i
 \sum_{a+b\le681}(\hat C[a,b]-C[a,b])P_i(t)^a(-t)^b\right|\\
 &\quad\le\left(\sum_{i\in\{0,\ldots,13,15,16\}}\lambda_i\right)
             \sum_{a+b\le681}|\hat C[a,b]-C[a,b]|\\
 &\quad<\frac{1315400222548137891645}{1351351351351351364726}
 \left(6.078879564946870\cdot10^{-8}+1.858282199124\cdot10^{-6}\right)\\
 &\quad<1.868016346073\cdot10^{-6},
\end{aligned}
\end{equation}
where the fraction is the sum of the indicated weights.
For \cref{root:coefficient-majorant}, induction from $P_i^0=P_{i,+}^0=1$
uses
\[
 |[t^m]P_i^a|\le\sum_{{j}=0}^m|[t^{j}]P_i|\,|[t^{m-{j}}]P_i^{a-1}|
 \le[t^m]P_{i,+}^a.
\]
Since $P_{i,+}(0)=0$,
\[
 [t^m]P_{i,+}^a=0\quad(m<a),\qquad
 [t^m](P_{i,+}^at^b)=
 \begin{cases}[t^{m-b}]P_{i,+}^a,&m\ge b,\\0,&m<b.\end{cases}
\]
The coefficient sum equals $P_{i,+}(1)^a\le1$.
The triangle inequality and \cref{head:finite-margin} yield
\[
\begin{aligned}
 c_1-\sum_{m=2}^{681}|c_m|
 &>0.56187924796168130219-1.868016346073\cdot10^{-6}\\
 &=0.56187737994533522919>0.5618773799453352.
\end{aligned}
\]
\end{proof}

\begin{headclaim}\label{head:claim-numerical}
There exist nonnegative rational numbers $\hat C[a,b]$, $a+b\le681$, with
\begin{equation}\label{head:joint-plane}
 \sum_{a+b\le681}|\hat C[a,b]-C_{\rm tr}[a,b]|
 \le6.078879564946870\cdot10^{-8}
\end{equation}
and
\begin{equation}\label{head:finite-margin}
\begin{aligned}
 &[t]\left(\sum_{i\in\{0,\ldots,13,15,16\}}\lambda_i
   \sum_{a+b\le681}\hat C[a,b]P_i(t)^a(-t)^b
   +\lambda_{14}H_{14}(t)+\sum_{i=17}^{31}\lambda_iH_i(t)\right)\\
 &\quad-\sum_{m=2}^{681}\left|[t^m]\left(
   \sum_{i\in\{0,\ldots,13,15,16\}}\lambda_i
   \sum_{a+b\le681}\hat C[a,b]P_i(t)^a(-t)^b
   +\lambda_{14}H_{14}(t)+\sum_{i=17}^{31}\lambda_iH_i(t)
   \right)\right|\\
 &\qquad>0.56187924796168130219.
\end{aligned}
\end{equation}
\end{headclaim}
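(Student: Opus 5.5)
The claim is purely computational, so I would prove it by a rigorous interval-arithmetic computation of the truncated coefficients $C_{\rm tr}[a,b]$. The rationals $\hat C[a,b]$ will be the midpoints of certified enclosures. The proof then has two parts: an error budget giving \cref{head:joint-plane}, and an exact rational evaluation of the left side of \cref{head:finite-margin}.

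\emph{Reduction to one-dimensional integrals.} Since $F(W,X)=\sign(W+\sum_j r(X_j))$, conditioning on $X$ gives
\[
\E_W[F(W,X)\psi_a(W)]=\phi_a\bigl(-\textstyle\sum_j r(X_j)\bigr),
\]
where $\phi_0(s)=1-2\Phi_{\rm G}(s)$ and, for $a\ge1$, $\phi_a(s)=2\varphi(s)\psi_{a-1}(s)/\sqrt a$. Here $\varphi$ and $\Phi_{\rm G}$ are the standard Gaussian density and distribution function; this uses the usual identity $\int_s^\infty \psi_a\,d\gamma_1=\varphi(s)\psi_{a-1}(s)/\sqrt a$. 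So $C_{\rm tr}[a,b]$ is the squared $L_2$ norm of the degree-$b$ Hermite part of the function
\[
X\mapsto \phi_a\bigl(-\textstyle\sum_j r(X_j)\bigr)\prod_j\mathbf 1_{\{|X_j|\le 7.3\}}
\]
on the cube $[-7.3,7.3]^3$. I would compute it by Gauss--Hermite (or Gauss--Legendre on the truncated cube) tensor quadrature. Because the truncation makes the integrand compactly supported, I would use a high-order quadrature with a rigorous remainder bound, obtained from analyticity of $\phi_a\circ r$ on a Bernstein ellipse (cf.\ Hale--Trefethen). The Hermite projection by total degree $b$ decouples coordinatewise: the coefficients $\E[\,\cdot\,\psi_{\alpha_1}\psi_{\alpha_2}\psi_{\alpha_3}]$ are computed on the grid, squared, and summed over $|\alpha|=b$. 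All of this would be done in outward-rounded ball arithmetic.

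\emph{Error budget for \cref{head:joint-plane}.} Write $\hat C=|\tilde x|^2$ with $\tilde x$ the computed Hermite coefficient vector for each $a$. Then \cref{head:square-difference} bounds $\sum|\hat C-C_{\rm tr}|$ by $\|\tilde x-x\|_2(2\|x\|_2+\|\tilde x-x\|_2)$, summed over $a\le 681$. So it suffices to bound the $\ell_2$ quadrature error per $a$. I would make it uniform in $a$ by using a Bessel-type bound: the quadrature error in the projection is at most the $L_2$ quadrature error of the integrand. The coefficients with large $a+b$ have tiny mass, because $\sum_{a,b}C[a,b]=1$. I would therefore truncate the computation at some moderate $a,b$ and set $\hat C[a,b]=0$ beyond it. The discarded mass is bounded rigorously by $1-\sum_{\text{computed}}\hat C$ minus the enclosure error, and it must fit inside the $6\cdot 10^{-8}$ budget.

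\emph{Finite margin \cref{head:finite-margin}.} With $\hat C$ rational, the expression $\sum_i\lambda_i\sum_{a+b\le 681}\hat C[a,b]P_i(t)^a(-t)^b$ is an explicit polynomial. Its coefficients through degree $681$ are computed exactly, or with rigorous enclosures, by repeated polynomial multiplication truncated at degree $681$. The remaining terms are handled separately:
\begin{enumerate}[label=(\roman*)]
\item $H_{14}$ is given by \cref{eq:upper-staircase-kernel}. Its coefficients $\E[F_*\psi_a(X_1)\psi_b(X_2)]$ are exact one-dimensional integrals over the staircase cells $k/256\le y<(k+1)/256$ with explicit root endpoints, so they admit closed forms in $\varphi$ and $\Phi_{\rm G}$ evaluated at the rounded roots.
\item The arcsine kernels \cref{root:linear} are handled by composing the Taylor series of $\arcsin$ with $P_i$, again truncated at degree $681$.
\end{enumerate}
One then evaluates $c_1-\sum_{m=2}^{681}|c_m|$ as an interval and checks that its lower endpoint exceeds $0.56187924796\ldots$.

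The main obstacle is the first step, namely certified quadrature accuracy of about $10^{-8}$ in $\ell_1$ across roughly $10^5$ coefficients. The integrand $\phi_0$ has a steep transition along the surface $\sum r(X_j)=0$, which limits analyticity. I would first try subdividing each coordinate at the preimages where $r$ is large, and using the exact one-dimensional $W$-integration to keep the integrand smooth. If that fails, I would fall back on adaptive cubature with interval Taylor-model remainders.
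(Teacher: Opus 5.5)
Your error budget breaks at the truncation step. You cannot set $\hat C[a,b]=0$ beyond a ``moderate'' cutoff, and the justification that large $a+b$ has tiny mass because $\sum_{a,b}C[a,b]=1$ does not follow. Summability says nothing about the rate, and for a sign function the Hermite spectrum decays only polynomially.

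Already in the $W$ direction you have $\sum_bC[a,b]=\E_X[q_a(v)^2]$, where $q_a(v)^2=4\varphi(v)^2\psi_{a-1}(v)^2/a$ and $v=\sum_jr(X_j)$ is concentrated near $0$ (variance about $0.03$). There $\varphi\psi_{a-1}^2$ averages to about $1/(2\pi\sqrt a)$. Heuristically, then:
\begin{itemize}
\item each odd layer $a+b=m$ near $m=681$ carries mass of order $m^{-3/2}\approx10^{-5}$;
\item a cutoff $K$ discards mass of order $K^{-1/2}-681^{-1/2}$.
\end{itemize}
Either way this is orders of magnitude above the budget $6.08\cdot10^{-8}$ in \cref{head:joint-plane}. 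Your proposed certificate for the discarded mass, $1-\sum_{\rm computed}\hat C$, cannot work even in principle. It also counts all mass with $a+b>681$, which is of order $10^{-2}$. Compare the paper's staircase kernel, which leaves about $3.5\%$ of its mass beyond its degree cutoff; this is why the paper needs separate tail lemmas. So \cref{head:joint-plane} forces you to compute every $C_{\rm tr}[a,b]$ with $a+b\le681$, with total $\ell_1$ error below $6\cdot10^{-8}$.

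Once that is forced, your direct three-dimensional tensor quadrature lacks the idea that makes the computation feasible. The integrand $\phi_a(-\sum_jr(x_j))$ does not factor over $x_1,x_2,x_3$. So each of the $682$ values of $a$ needs a full $N^3$ grid, with $N$ in the high hundreds or thousands. That grid must then be contracted against roughly $\binom{685}{4}/2\approx4.5\cdot10^9$ multi-indices $(a,\alpha)$ with $a+|\alpha|\le681$ odd, all in certified arithmetic. The count is not $10^5$, because each $C[a,b]$ aggregates $\binom{b+2}{2}$ squared coefficients.

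The paper avoids this with three steps:
\begin{itemize}
\item It uses the Fourier representation $q_a(v)=\frac{2}{\pi\sqrt{a!}}\int_0^\infty e^{-s^2/2}s^{a-1}\Im(\mathrm{i}^ae^{\mathrm{i}sv})\,ds$, under which $e^{\mathrm{i}s\sum_jr(X_j)}=\prod_je^{\mathrm{i}sr(X_j)}$ factorizes.
\item Each coefficient then becomes $\int_0^\infty e^{-s^2/2}s^{a-1}\prod_jB_{\alpha_j}(s)\,ds$, involving only one-dimensional integrals $B_j(s)$.
\item After Gauss--Legendre in $s$ ($768$ nodes on $[0,32]$ plus an explicit tail bound), the sum over $|\alpha|=b$ collapses to $\sum_{\nu,\mu}U_{a\nu}U_{a\mu}[z^b]\bigl(\sum_j\hat B_{\nu j}\hat B_{\mu j}z^j\bigr)^3$. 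This is evaluated by length-$2048$ FFTs with explicit floating-point error bounds.
\end{itemize}

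You have also misdiagnosed the main obstacle. After integrating out $W$, the functions $\phi_a$ are entire, so there is no steep transition limiting analyticity. What must be controlled is the degree-$681$ oscillation and the growth of $e^{\mathrm{i}sr(z)}$ off the real axis, which the paper handles by adaptive bisection on Bernstein ellipses.

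Your treatment of the remaining terms in \cref{head:finite-margin} matches the paper: exact staircase integrals for $H_{14}$, the arcsine series for $i\ge17$, and truncated polynomial arithmetic.
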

\begin{proof}
\smallskip\noindent\textbf{1. Express the Hermite coefficients as integrals.}
Set $\phi(z):=e^{-z^2/2}/\sqrt{2\pi}$ and, for real $s$,
\[
 B_j(s):={\mathrm{i}}^{-j}\E[e^{{\mathrm{i}}sr(X_1)}\psi_j(X_1)\mathbf1_{\{|X_1|\le73/10\}}],
 \qquad B(s):=(B_j(s))_{j=0}^{681},\quad {\mathrm{i}}^2=-1.
\]
For $a+|\alpha|\le681$ odd,
\begin{equation}\label{head:coefficient-integral}
 \E[F_{\rm tr}(W,X)\psi_a(W)\psi_\alpha(X)]
 =\frac{2(-1)^{(a+|\alpha|-1)/2}}{\pi\sqrt{a!}}
       \int_0^\infty e^{-s^2/2}s^{a-1}\prod_{j=1}^3B_{\alpha_j}(s)\,ds.
\end{equation}
For even $a+|\alpha|$ the coefficient vanishes by
$F_{\rm tr}(-W,-X)=-F_{\rm tr}(W,X)$ and \cref{root:odd-hermite}.
Reflection gives $\overline{B_j(s)}={\mathrm{i}}^j(-1)^j {\mathrm{i}}^jB_j(s)=B_j(s)$,
and Bessel's inequality gives $\|B(s)\|_2\le1$.
To prove \cref{head:coefficient-integral}, for $a\ge0$ and $v\in\R$ use
\begin{equation}\label{head:smoothed-sign}
\begin{aligned}
 q_a(v)&:=\E[{\sign}(W+v)\psi_a(W)]
       =\frac{q_0^{(a)}(v)}{\sqrt{a!}},\qquad
 q_0(v)=2\int_0^v\phi(u)\,du,\\
 q_a^{(l)}(v)&=\frac{2(-1)^{a+l-1}}{\sqrt{a!}}
                  \phi(v)\He_{a+l-1}(v)
        \quad(a,l\ge0,\ a+l\ge1),\\
 q_a(v)&=\frac2{\pi\sqrt{a!}}\int_0^\infty
          e^{-s^2/2}s^{a-1}\Im({\mathrm{i}}^ae^{{\mathrm{i}}sv})\,ds.
\end{aligned}
\end{equation}
Conditioning on $X$ and independence give
\[
\begin{aligned}
 \E[F_{\rm tr}\psi_a(W)\psi_\alpha(X)]
 &=\E\!\left[q_a\!\left(\sum_jr(X_j)\right)
       \psi_\alpha(X)\prod_j\mathbf1_{\{|X_j|\le73/10\}}\right],\\
 \E\!\left[e^{{\mathrm{i}}s\sum_jr(X_j)}\psi_\alpha(X)
       \prod_j\mathbf1_{\{|X_j|\le73/10\}}\right]
 &={\mathrm{i}}^{|\alpha|}\prod_jB_{\alpha_j}(s).
\end{aligned}
\]
Fubini follows from $\E|\psi_\alpha(X)|\le1$ and the integrability of
$e^{-s^2/2}s^{a-1}$ for $a\ge1$; for $a=0$, use
$|\sin(s\sum_jr(X_j))|/s\le3\sup_{|x|\le73/10}|r(x)|$ when
$\max_j|X_j|\le73/10$.
The first two identities in \cref{head:smoothed-sign} follow from
\[
 (\phi\He_{a-1})'=-\phi\He_a,\qquad
 \frac2{\sqrt{a!}}\int_{-v}^\infty\He_a(u)\phi(u)\,du
 =\frac{2(-1)^{a-1}}{\sqrt{a!}}\phi(v)\He_{a-1}(v)\quad(a\ge1).
\]
For the Fourier identity, both sides at $a=0$ vanish at $v=0$, and
\[
 \frac d{dv}\left(\frac2\pi\int_0^\infty e^{-s^2/2}
                 \frac{\sin(sv)}s\,ds\right)
 =\frac2\pi\int_0^\infty e^{-s^2/2}\cos(sv)\,ds=2\phi(v).
\]
The bounds $|\sin(sv)|/s\le|v|$ and
$|\partial_v^a(\sin(sv)/s)|\le s^{a-1}$ for $a\ge1$, multiplied by
$e^{-s^2/2}$, are integrable in $s$, locally uniformly for real $v$.

\smallskip\noindent\textbf{2. Define the finite approximation $\hat C[a,b]$.}
Construct $\hat C$ by the operations below, with fixed summation orders
and fixed choices of separate or combined multiplication and addition.
Use the same resulting rational array in \cref{head:finite-margin}.
Let $s_\nu,\omega_\nu$, $1\le\nu\le768$, be the 24-point
Gauss--Legendre nodes and weights on each $[k,k+1]$, $0\le k\le31$.
Their weighted sum equals the integral on that interval for every polynomial
of degree at most $47$. Set
\[
 U_{a\nu}:=\omega_\nu e^{-s_\nu^2/2}s_\nu^{a-1}/\sqrt{a!}
 \qquad(0\le a\le681).
\]
In interval arithmetic, each operation encloses the exact values by
rounding the lower endpoint downward and the upper endpoint upward,
at the stated precision.
For $[a,b]\subset[0,73/10]$, set $h:=(b-a)/2$ and
\begin{equation}\label{head:inner-M}
\begin{aligned}
 M:={}&\frac{(100/99)^{681/2}(10000/199)^{1/4}}{\sqrt{2\pi}}
 \exp\left(-\frac{50}{199}\max(0,a-h/4)^2+50(3h/4)^2\right)\\
 &\quad\cdot\exp\left(24h\sum_{k=1}^{15}
 \frac{|r^{(k)}((a+b)/2)|}{(k-1)!}(5h/4)^{k-1}\right).
\end{aligned}
\end{equation}
Starting with $[0,73/10]$, accept an interval only when its
512-bit interval evaluation proves
$8(b-a)M2^{-64}<10^{-10}(b-a)/(73/5)$; otherwise bisect it.
Use the 32-point
Gauss--Legendre rule, exact for polynomials of degree at most $63$,
on each interval. Index its positive nodes by $j$;
for the node $x$ in $[a,b]$, define
\[
 v_j:=2\phi(x)(\psi_m(x))_{m=0}^{681}
 \int_a^b\prod_{\substack{y\ne x\\y\text{ a quadrature node in }[a,b]}}
               \frac{t-y}{x-y}\,dt.
\]
The integral is the quadrature weight. Reflection supplies the factor two:
${\mathrm{i}}^{-m}(e^{{\mathrm{i}}sr(x)}+(-1)^me^{-{\mathrm{i}}sr(x)})=2\Re({\mathrm{i}}^{-m}e^{{\mathrm{i}}sr(x)})$.
At $512$ bits, enclose the exact quadrature nodes
and use these enclosures to compute $v_j$, $U_{a\nu}$, $4/\pi^2$, and
$\Re({\mathrm{i}}^{-m}e^{{\mathrm{i}}s_\nu r(x)})$. Round the midpoint of each resulting
enclosure to the nearest IEEE 754 binary64 number, which has at most
$53$ significant binary digits; at a tie, choose the number whose last
stored binary digit is zero. Bound its conversion error against the
whole enclosure. Hats denote computed values whose stored real and
imaginary components are interpreted as exact rationals.
Rounding an exact real result $x$ to the nearest binary64 number has
absolute error at most $2^{-53}|x|+2^{-1074}$, provided that $x$ is
within the representable range.
Compute $\hat B_{\nu m}$ by summing the products of $(\hat v_j)_m$
with the stored values of $\Re({\mathrm{i}}^{-m}e^{{\mathrm{i}}s_\nu r(x)})$, using at most
one multiplication and one addition per term. Round each operation,
or round a multiplication and its following addition together. Write
$\hat B_{\nu\cdot}:=(\hat B_{\nu m})_{m=0}^{681}$.
Let $\hat U_{a\nu}$ denote the conversion before omitting entries
below $10^{-18}$, and, for each
$(\nu,\mu)$, set $g(z):=\sum_{j=0}^{681}\hat B_{\nu j}\hat B_{\mu j}z^j$
with exact products. For odd $a+b\le681$, compute $\hat C[a,b]$ from
\begin{equation}\label{head:computed-sum}
 {\widehat{\frac4{\pi^2}}}
 \sum_{\mu:\hat U_{a\mu}\ge10^{-18}}\hat U_{a\mu}
 \left(\sum_{\nu:\hat U_{a\nu}\ge10^{-18}}\hat U_{a\nu}[z^b]
 \left(\sum_{j=0}^{681}\hat B_{\nu j}\hat B_{\mu j}z^j\right)^3\right).
\end{equation}
Round the products defining $g$ and append zeros to obtain $2048$ entries.
The exact formula approximated by the forward transform, two multiplications
for each cube, and the inverse transform is
\begin{equation}\label{head:discrete-transform}
 [z^b]g(z)^3=\frac1{2048}\sum_{k=0}^{2047}
  g(e^{-2\pi {\mathrm{i}}k/2048})^3e^{2\pi {\mathrm{i}}kb/2048}\quad(0\le b\le2043),
\end{equation}
since $\deg(g^3)\le2043$. First reverse the eleven binary digits of
the indices $0,\ldots,2047$. For $k=1,\ldots,11$, split each consecutive
block of $2^k$ entries into halves and pair their $j$th entries,
$0\le j<2^{k-1}$, applying $(a,b)\mapsto(a+wb,a-wb)$ with
$w=e^{-2\pi {\mathrm{i}}j/2^k}$. Obtain $\hat w$ by rounding the real and imaginary
parts of $w$ from their 256-bit enclosure midpoints, with error $<2^{-48}$
against each enclosure.
The inverse conjugates these roots and ends with division by $2048$.
Take real parts, sum first over $\nu$ and then over $\mu$ in
\cref{head:computed-sum}, and multiply by ${\widehat{\frac4{\pi^2}}}$ with the
same rounding convention; replace negative results
by zero and set $\hat C[a,b]=0$ when $a+b$ is even.

\smallskip\noindent\textbf{3. Bound the error in $\hat C$.}
For $s>0$ and real vectors $v,w$ of equal dimension,
\[
 \sum_{a=0}^{681}e^{-s^2}s^{2a-2}/a!\le s^{-2},\qquad
 \|v^{\otimes3}-w^{\otimes3}\|_2
 \le3\|v-w\|_2\max(\|v\|_2,\|w\|_2)^2.
\]
Together with $\|B(s)\|_2\le1$, the 512-bit bound
$\sum_{\nu=1}^{768}\omega_\nu/s_\nu<11.017652258306741$, and
Claim~\ref{head:claim-quadrature}, these inequalities give
\[
\begin{aligned}
 &\left\|\left(
 \E[F_{\rm tr}\psi_a(W)\psi_\alpha(X)]
 -\frac{2(-1)^{(a+|\alpha|-1)/2}}\pi
       \sum_{\nu=1}^{768}U_{a\nu}\prod_{j=1}^3\hat B_{\nu\alpha_j}
 \right)_{\substack{a+|\alpha|\le681\\a+|\alpha|\ \mathrm{odd}}}\right\|_2\\
 &\quad\le2.917598329143\cdot10^{-10}+8.228746\cdot10^{-18}\\
 &\qquad+\frac6\pi\sum_{\nu=1}^{768}\frac{\omega_\nu}{s_\nu}
 (5.604655228362\cdot10^{-12})(1+5.604655228362\cdot10^{-12})^2\\
 &\quad<4.096939258130\cdot10^{-10}.
\end{aligned}
\]
Since $\|F_{\rm tr}\|_2\le1$, \cref{head:square-difference} yields
\[
\begin{aligned}
 &\sum_{\substack{a+b\le681\\a+b\ \mathrm{odd}}}\left|
 C_{\rm tr}[a,b]-\frac4{\pi^2}\sum_{|\alpha|=b}
 \left(\sum_{\nu=1}^{768}U_{a\nu}
             \prod_{j=1}^3\hat B_{\nu\alpha_j}\right)^2\right|\\
 &\quad<2(4.096939258130\cdot10^{-10})
             +(4.096939258130\cdot10^{-10})^2<8.193878518\cdot10^{-10}.
\end{aligned}
\]
\begin{equation}\label{head:nonnegative-squares}
\begin{aligned}
 &\sum_{|\alpha|=b}\left(\sum_{\nu=1}^{768}U_{a\nu}
                      \prod_{j=1}^3\hat B_{\nu\alpha_j}\right)^2\\
 &\quad=\sum_{\nu,\mu=1}^{768}U_{a\nu}U_{a\mu}[z^b]
       \left(\sum_{j=0}^{681}\hat B_{\nu j}\hat B_{\mu j}z^j\right)^3.
\end{aligned}
\end{equation}
Thus Claim~\ref{head:claim-arithmetic} proves \cref{head:joint-plane}:
\[
 \sum_{a+b\le681}|\hat C[a,b]-C_{\rm tr}[a,b]|
 \le8.193878518\cdot10^{-10}+5.996940779766870\cdot10^{-8}
 =6.078879564946870\cdot10^{-8}.
\]

\smallskip\noindent\textbf{4. Evaluate the finite coefficient inequality.}
For $1\le m\le681$, \cref{eq:upper-staircase-kernel} and \cref{root:linear} give
\[
\begin{aligned}
 [t^m]H_{14}(t)&=\sum_{\substack{a,b\ge0\\3a+5b=m}}(-1)^a\frac{u_{ab}^2}{a!b!},\\
 [t^m]H_i(t)&=\frac2\pi\sum_{k=0}^{340}
 \frac{\binom{2k}{k}}{4^k(2k+1)}[t^m]P_i(t)^{2k+1}
 \qquad(17\le i\le31),
\end{aligned}
\]
where $u_{ab}$ is defined in \cref{other:staircase-moments} and
$\E[F_*(X_1,X_2)\psi_a(X_1)\psi_b(X_2)]=u_{ab}/\sqrt{a!b!}$.
To evaluate $u_{ab}$, define $J_j(x):=\int_x^\infty\He_j(t)\phi(t)\,dt$ and use
\begin{equation}\label{head:hermite-integrals}
\begin{aligned}
 \int_x^y\psi_j(t)\phi(t)\,dt
 &=\frac{J_j(x)-J_j(y)}{\sqrt{j!}}
 &&(-\infty\le x<y\le\infty),\\
 J_0(x)&=\tfrac12\operatorname{erfc}(x/\sqrt2),&
 J_j(x)&=\phi(x)\He_{j-1}(x)\quad(j\ge1),\\
 J_j(-\infty)&=\mathbf1_{\{j=0\}},&J_j(+\infty)&=0.
\end{aligned}
\end{equation}
These identities are proved in Claim~\ref{other:claim-staircase}; evaluate them
at $384$ bits on the coordinate intervals of the rectangles where $F_*$ is constant.
For the remaining polynomial arithmetic, retain only degrees through 681:
$P_i(0)=0$ implies $[t^m]P_i^{2k+1}=0$ for $m\le681$, $k>340$, and
products of nonnegative powers cannot lower degree.
Interval evaluation at $256$ bits, using the rational $\hat C[a,b]$, bounds the
first term in \cref{head:finite-margin} below by
$0.561894649254995327808$ and the subtracted sum above by
$1.540129331402561534750\cdot10^{-5}$. Hence
\[
 0.561894649254995327808-1.540129331402561534750\cdot10^{-5}
 >0.56187924796168130219.
\]
\end{proof}

\begin{headclaim}\label{head:claim-quadrature}
The quadratures in step 2 satisfy
\begin{equation}\label{head:outer-error}
\begin{aligned}
 &\left\|\left(
 \E[F_{\rm tr}\psi_a(W)\psi_\alpha(X)]
 -\frac{2(-1)^{(a+|\alpha|-1)/2}}\pi
       \sum_{\nu=1}^{768}U_{a\nu}\prod_{j=1}^3B_{\alpha_j}(s_\nu)
 \right)_{\substack{a+|\alpha|\le681\\a+|\alpha|\ \text{odd}}}\right\|_2\\
 &\qquad<2.917598329143\cdot10^{-10}+8.228746\cdot10^{-18},
\end{aligned}
\end{equation}
and, for every $1\le\nu\le768$,
\begin{equation}\label{head:B-error}
 \|\hat B_{\nu\cdot}-B(s_\nu)\|_2<5.604655228362\cdot10^{-12}.
\end{equation}
\end{headclaim}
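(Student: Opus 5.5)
The claim has two parts, and I would prove them separately. The first is the outer quadrature error \eqref{head:outer-error}: replacing the integral in \eqref{head:coefficient-integral} by the 768-node Gauss--Legendre sum with the exact $B(s_\nu)$. The second is the inner error \eqref{head:B-error}: replacing $B(s_\nu)$ by its floating-point approximation $\hat B_{\nu\cdot}$.

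For \eqref{head:outer-error}, I would split the error into two pieces.
\begin{itemize}
\item \emph{Tail.} This is the contribution of $s>32$. It is bounded using $\|B(s)\|_2\le1$, hence $\|B(s)^{\otimes3}\|_2\le1$, together with the Gaussian tail $\int_{32}^\infty e^{-s^2/2}s^{a-1}\,ds/\sqrt{a!}$. Summed in $\ell_2$ over $a\le681$, this gives something astronomically small; I expect it is the $8.2\cdot10^{-18}$ term, possibly merged with omitted-entry effects.
\item \emph{Quadrature on $[k,k+1]$.} For fixed $a$, the integrand $s\mapsto e^{-s^2/2}s^{a-1}\prod_jB_{\alpha_j}(s)$ extends to an entire function. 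Each $B_j(s)$ is an expectation of $e^{\mathrm{i}sr(x)}$ over the bounded set $|x|\le73/10$. So $\|B(s)\|_2\le\sup_{|x|\le 7.3}e^{|\Im s|\,|r(x)|}$ on a complex neighbourhood, by Bessel applied to the complexified exponential. I would then use the standard Bernstein-ellipse bound for $24$-point Gauss--Legendre on each unit interval: error $\le \frac{64}{15}\,\frac{M_\rho\,\rho^{-48}}{\rho^2-1}$ per interval, with $M_\rho$ the sup over the ellipse.
\end{itemize}
Taking the $\ell_2$ norm over the vector index $(a,\alpha)$ is done by bringing the norm inside the contour integral (Minkowski). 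The relevant quantity is then $\bigl(\sum_a |e^{-z^2/2}z^{a-1}|^2/a!\bigr)^{1/2}\|B(z)\|^3$. The first factor is at most about $|z|^{-1}e^{(|z|^2-\Re z^2)/2}$, in analogy with the displayed real bound $\sum_a e^{-s^2}s^{2a-2}/a!\le s^{-2}$. Near $s=0$, the small-$a$ factors $s^{a-1}$ (with $a=0$) need care. Here I would use the $a=0$ representation $\sin(sv)/s$, which is entire, so the integrand is still analytic and bounded. A choice of $\rho$ (perhaps $\rho\approx3$ on each interval), evaluated with rigorous interval arithmetic, should give $2.9176\cdot10^{-10}$.

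For \eqref{head:B-error}, there are three error sources.
\begin{itemize}
\item \emph{Quadrature in $x$.} On each accepted subinterval $[a,b]$ of $[0,7.3]$ we use 32-point Gauss--Legendre. The integrand $x\mapsto2\phi(x)\psi_m(x)\Re(\mathrm{i}^{-m}e^{\mathrm{i}sr(x)})$, taken in $\ell_2$ over $m\le681$, is bounded on a Bernstein ellipse of $[a,b]$ by the quantity $M$ in \eqref{head:inner-M}. I would read off its factors as follows: $(100/99)^{681/2}$ controls the growth of $\sum_m|\psi_m(z)|^2$ via the Mehler kernel at a parameter $0.99$; the Gaussian factors bound $|\phi(z)|$ with the Mehler weight; and the last exponential bounds $|e^{\mathrm{i}sr(z)}|$ for $s\le32$ using the Taylor expansion of $r$ about the midpoint, with radius $5h/4$. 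The acceptance test $8(b-a)M2^{-64}<10^{-10}(b-a)/(73/5)$ then makes the total error over $[0,7.3]$ at most $10^{-10}/2$ after the factor two from reflection. Here $2^{-64}$ is the ellipse decay $\rho^{-64}$ with $\rho=2$, since the semi-axes are $5h/4$ and $3h/4$.
\item \emph{Rounding.} The binary64 rounding of $v_j$ and of the phases, and the fused multiply-adds, are bounded by the stated $2^{-53}|x|+2^{-1074}$ model. This uses a standard running-sum error analysis with $\|v\|$ controlled by Bessel.
\end{itemize}
Adding the three contributions gives $5.6\cdot10^{-12}$.

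The main obstacle is the analytic majorant $M$: in particular, justifying the complex bound on $\sum_{m\le681}|\psi_m(z)|^2$ via Mehler's formula, and the Taylor control of $\Im r(z)$ on the ellipse. I would first try to prove $\sum_m \rho^{2m}|\psi_m(z)|^2\le$ the Mehler kernel at $\rho^2=0.99$ evaluated at $(z,\bar z)$, and then truncate. Everything after that is mechanical interval verification.
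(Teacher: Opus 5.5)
\textbf{Verdict: genuine gap.} The architecture matches the paper, but the outer quadrature step fails as you wrote it, and the inner accounting does not reach the stated constant.

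\textbf{The outer bound on $\|B\|$ is unusable.} Your tail estimate beyond $s=32$ and your Gauss--Legendre ellipse argument with Minkowski are the paper's structure. So is your reading of $M$: Mehler at $0.99$ is an identity at $(z,\bar z)$, together with Taylor control of $\Im r$ and $2^{-64}$ from $\rho=2$. The failure is the bound $\|B(s)\|_2\le\sup_{|x|\le 73/10}e^{|\Im s||r(x)|}$. The polynomial $r$ is large at the truncation point: evaluating its Hermite expansion gives $r(73/10)\approx 39$, with the degree-$13$ and degree-$15$ terms dominating. On the ellipse with parameter $\rho$ over $[k,k+1]$, $|\Im s|$ reaches $(\rho-\rho^{-1})/4$. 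So your majorant for $\|B\|_2^3$ is about $e^{29.5(\rho-\rho^{-1})}$, while $24$ nodes gain only $\rho^{-48}$. Writing $\rho=e^u$, the net exponent is about $59\sinh u-48u>0$ for every $u>0$. Hence no choice of $\rho$, and certainly not $\rho\approx 3$, gives anything useful.

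\textbf{How to repair it.} Keep the Gaussian weight that Bessel gives you, $\|B(z)\|_2^2\le\E[e^{2|\Im z||r(X_1)|}\mathbf 1_{\{|X_1|\le 73/10\}}]$, and use the thin ellipse $\rho=2$, where $|\Im z|\le 3/8$. Then $\phi$ tames the blow-up of $r$ near $7.3$, and the moment is $<2.2623$; the paper computes this on cells of width $10^{-3}$.

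With that repair, your coefficient-side majorant $\frac2\pi|z|^{-1}e^{(\Im z)^2}\|B(z)\|_2^3$ becomes a valid alternative route. Apply it on the ellipse boundary, which stays at distance $\ge 1/8$ from $0$ when $k=0$. The entries are entire because $\prod_j B_{\alpha_j}(0)=0$ for $|\alpha|$ odd, so the maximum principle covers the interior. This route should give a smaller constant than the paper's.

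The paper instead applies the quadrature bound to the $L_2(W,X)$-valued map $s\mapsto\frac2\pi\sin(s(W+\sum_j r(X_j)))/s$ (times the indicators). It uses $|\sin(sv)/s|\le|v|e^{3|v|/8}$ uniformly in $s$, so it never needs to stay away from $s=0$. It then combines Minkowski, independence and the four moments $\E e^{3|W|/4}$, $\E W^2e^{3|W|/4}$, $\E[e^{3|r|/4}\mathbf 1]<2.2623$ and $\E[r^2e^{3|r|/4}\mathbf 1]<2023.74$, and projects by Bessel. The large second moment is why its constant is $2.9176\cdot 10^{-10}$.

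\textbf{The inner bound does not add up to $5.6\cdot 10^{-12}$.} The acceptance test only guarantees $2\sum_{[a,b]}8(b-a)M2^{-64}<10^{-10}$; the factor $2$ from reflection gives $10^{-10}$, not $10^{-10}/2$. Either value already exceeds $5.6\cdot 10^{-12}$. The paper instead evaluates this sum over the $50$ accepted cells and obtains $<3.288380170604\cdot 10^{-12}$.

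The floating-point term is $(2^{-48}+\gamma_{3200}(1+2^{-48}))\sum_j(\|v_j\|_2+\|v_j-\hat v_j\|_2)$. It needs the computed sum of absolute summands, $\sum_j\|v_j\|_2<6.4543$. Bessel does not control this sum: it controls $\|B(s)\|_2$, and the dot product has cancellation.

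The target $5.604655228362\cdot 10^{-12}$ is exactly $3.288\cdot 10^{-12}+3.05\cdot 10^{-16}+(2^{-48}+\gamma_{3200})\cdot 6.4543$, rounded up. So neither the a priori acceptance bound nor any cruder bound on $\sum_j\|v_j\|_2$ leaves room to prove the claim.
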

\begin{proof}
For an interval $[a,b]$ of length $\ell$, consider the ellipse
\[
 z=\frac{a+b}{2}+\frac{b-a}{4}
       (2e^{{\mathrm{i}}\theta}+2^{-1}e^{-{\mathrm{i}}\theta}),\qquad0\le\theta\le2\pi.
\]
If the integrand is analytic inside and on this ellipse and has norm
at most $M$ there, the $n$-point Gauss--Legendre error is at most
\begin{equation}\label{head:gauss-error}
 \frac\ell2\frac{64M}{15(1-2^{-2})2^{2n}}
 =\frac{128}{45}\ell M2^{-2n}\le8\ell M2^{-2n}.
\end{equation}
The scalar bound is \cite[Theorem~2.1]{HaleTrefethen2008}.
For complex Euclidean or Gaussian ${L_2}$ spaces, apply it after bounded
complex-linear functionals $v$ and use
$\|x\|=\sup_{\|v\|\le1}|v(x)|$, with
$\|v\|:=\sup_{\|x\|\le1}|v(x)|$.

\smallskip\noindent\textbf{The outer quadrature.}
Apply \cref{head:gauss-error} in Gaussian ${L_2}$ to
\[
 s\longmapsto\frac2\pi\frac{\sin(s(W+\sum_{j=1}^3r(X_j)))}s
                  \prod_{j=1}^3\mathbf1_{\{|X_j|\le73/10\}}.
\]
Its singularity at zero is removable. Since $r$ is bounded on
$[-73/10,73/10]$ and $\E e^{s|W|}<\infty$ for every real $s$,
the displayed function is analytic in Gaussian ${L_2}$.
On the ellipses for $[k,k+1]$, $|\Im s|\le3/8$,
so $|\sin(sv)/s|=|v\int_0^1\cos(tsv)\,dt|\le|v|e^{3|v|/8}$ for real $v$.
\[
\begin{aligned}
 \E[e^{3|r(X_1)|/4}\mathbf1_{\{|X_1|\le73/10\}}]&<2.262302379287,\\
 \E[r(X_1)^2e^{3|r(X_1)|/4}\mathbf1_{\{|X_1|\le73/10\}}]&<2023.738286188,\\
 \E e^{3|W|/4}&\le2e^{9/32},\qquad
 \E W^2e^{3|W|/4}\le(25/8)e^{9/32}.
\end{aligned}
\]
By Bessel's inequality, retaining the Hermite coefficients with
$a+|\alpha|\le681$ and $a+|\alpha|$ odd cannot increase the ${L_2}$ norm.
Independence and Minkowski's inequality therefore bound the contribution
of the quadrature on $[0,32]$ to \cref{head:outer-error} by
\[
\begin{aligned}
 &8\cdot32\cdot2^{-48}\frac{2\sqrt2}{\pi}e^{9/64}
 \left(\frac54(2.262302379287)^{3/2}
       +3\sqrt{2023.738286188}\,(2.262302379287)\right)\\
 &\qquad<2.917598329143\cdot10^{-10}.
\end{aligned}
\]
For the first two moments, partition $[0,73/10]$ into intervals of length
$10^{-3}$. On each $[a,b]$, let $R$ be the upper endpoint of the modulus
of the 512-bit interval polynomial evaluation $r([a,b])$ and sum
\[
\begin{aligned}
 2\int_a^b e^{3|r(x)|/4}\phi(x)\,dx&\le2(b-a)\phi(a)e^{3R/4},\\
 2\int_a^b r(x)^2e^{3|r(x)|/4}\phi(x)\,dx&\le2(b-a)\phi(a)R^2e^{3R/4}.
\end{aligned}
\]
The factor two uses oddness of $r$ and evenness of $\phi$.
For $s\ge32$, Bessel's inequality and a geometric sum give
\[
 \sum_{|\alpha|\le681-a}\left|\prod_jB_{\alpha_j}(s)\right|^2\le1,
 \qquad
 \sum_{a=0}^{681}\frac{s^{2a}}{a!}
 \le\frac{s^{1362}}{681!}\sum_{j=0}^{681}(681/s^2)^j
 \le\frac{1024s^{1362}}{343\cdot681!}.
\]
Minkowski's inequality bounds the integral over $s>32$ by
\begin{equation}\label{head:fourier-tail}
\begin{aligned}
 &\left\|\left(\frac2{\pi\sqrt{a!}}\int_{32}^\infty
 e^{-s^2/2}s^{a-1}\prod_{j=1}^3B_{\alpha_j}(s)\,ds
 \right)_{\substack{a+|\alpha|\le681\\a+|\alpha|\ \text{odd}}}\right\|_2\\
 &\quad\le\frac{64}{\pi\sqrt{343\cdot681!}}
              \int_{32}^\infty e^{-s^2/2}s^{680}\,ds
 \le\frac{64e^{-512}32^{681}}{345\pi\sqrt{343\cdot681!}}
 <3\cdot10^{-18}<8.228746\cdot10^{-18}.
\end{aligned}
\end{equation}
The second inequality follows by integrating
\[
 -\frac d{ds}(e^{-s^2/2}s^{679})
 =e^{-s^2/2}s^{680}(1-679/s^2)
 \ge\frac{345}{1024}e^{-s^2/2}s^{680}\quad(s\ge32).
\]

\smallskip\noindent\textbf{The computation of $\hat B$.}
On the ellipse over $[a,b]$, $|z-(a+b)/2|\le5h/4$,
$|\Im z|\le3h/4$, and $|\Re z|\ge\max(0,a-h/4)$.
Mehler's identity and the Taylor expansion of $r$ give
\[
\begin{aligned}
 \sum_{j=0}^{681}|\psi_j(z)|^2
 &\le(100/99)^{681}\sum_{j\ge0}(99/100)^j|\psi_j(z)|^2\\
 &=(100/99)^{681}(10000/199)^{1/2}
   e^{(99/199)(\Re z)^2+99(\Im z)^2},\\
 |\Im r(z)|&\le|\Im z|\sum_{k=1}^{15}
 \frac{|r^{(k)}((a+b)/2)|}{(k-1)!}|z-(a+b)/2|^{k-1}.
\end{aligned}
\]
Multiplying the square root of the first bound by
$|\phi(z)e^{{\mathrm{i}}sr(z)}|\le(2\pi)^{-1/2}
 e^{-(\Re z)^2/2+(\Im z)^2/2+32|\Im r(z)|}$ yields
$\|(\phi(z)\psi_j(z)e^{{\mathrm{i}}sr(z)})_{j=0}^{681}\|_2\le M$ for $0\le s\le32$,
with $M$ as in \cref{head:inner-M}. The bisection accepts 50 intervals,
giving 1600 positive nodes and
\begin{equation}\label{head:inner-total}
 2\sum_{[a,b]}8(b-a)M2^{-64}<3.288380170604\cdot10^{-12}.
\end{equation}
The sum is over the accepted intervals. By reflection and
\cref{head:gauss-error}, it bounds
\[
 \left\|B(s)-\left(\sum_x(v_j)_m
       \Re({\mathrm{i}}^{-m}e^{{\mathrm{i}}sr(x)})\right)_{m=0}^{681}\right\|_2
 \quad(0\le s\le32),
\]
where $j$ indexes $x$ as in step 2.
Put $\gamma_k:=k2^{-53}/(1-k2^{-53})$ for integers $0\le k<2^{53}$.
For an $n$-term dot product, $2n<2^{53}$, using at most $n$
multiplications and $n$ additions with the stated rounding and no overflow,
the error is at most $\gamma_{2n}$ times the sum of the absolute
values of its summands, apart from the additive $2^{-1074}$ rounding errors.
Rounding a multiplication and addition together does not increase this bound.
For every quadrature node and $0\le m\le681$, the input enclosures verify
\[
 \left|\widehat{\Re({\mathrm{i}}^{-m}e^{{\mathrm{i}}s_\nu r(x)})}
             -\Re({\mathrm{i}}^{-m}e^{{\mathrm{i}}s_\nu r(x)})\right|<2^{-48}.
\]
The same 512-bit evaluation gives
\[
 \sum_j\|v_j-\hat v_j\|_2<3.050177189723512\cdot10^{-16},
 \qquad \sum_j\|v_j\|_2<6.454332802795005.
\]
\begin{align}
 \|\hat B_{\nu\cdot}-B(s_\nu)\|_2
 \le{}&3.288380170604\cdot10^{-12}+\sum_j\|v_j-\hat v_j\|_2\notag\\
 &+(2^{-48}+\gamma_{3200}(1+2^{-48}))
       \sum_j(\|v_j\|_2+\|v_j-\hat v_j\|_2)+10^{-290}\notag\\
 &<5.604655228362\cdot10^{-12}.\label{head:B-error-detail}
\end{align}
\end{proof}

\begin{headclaim}\label{head:claim-arithmetic}
\begin{equation}\label{head:arithmetic-error}
 \sum_{\substack{a+b\le681\\a+b\ \mathrm{odd}}}
 \left|\hat C[a,b]-\frac4{\pi^2}
 \sum_{\nu,\mu=1}^{768}U_{a\nu}U_{a\mu}[z^b]
 \left(\sum_{j=0}^{681}\hat B_{\nu j}\hat B_{\mu j}z^j\right)^3\right|
 \le5.996940779766870\cdot10^{-8}.
\end{equation}
\end{headclaim}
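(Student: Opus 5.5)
The claim is a floating-point error bound: compare each computed $\hat C[a,b]$ with the exact rational expression built from the same stored inputs $\hat U_{a\nu},\hat B_{\nu j}$. I would split the error into four sources, in the order the computation in step 2 performs them:
\begin{enumerate}[label=(\arabic*)]
\item dropping the terms with $\hat U_{a\nu}<10^{-18}$;
\item rounding the products defining $g$ and the FFT evaluation of $[z^b]g^3$ via \cref{head:discrete-transform};
\item the rounded summations over $\nu$, then $\mu$, and the multiplication by $\widehat{4/\pi^2}$, together with the error of $\widehat{4/\pi^2}$ against $4/\pi^2$;
\item the final clipping of negative values to zero.
\end{enumerate}
Source (4) can only reduce the error. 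The exact target is a nonnegative sum of squares by \cref{head:nonnegative-squares}, so replacing a negative computed value by $0$ moves it toward the target.

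For (1), I would use Cauchy--Schwarz and $\|\hat B_{\nu\cdot}\|_2\le 1+5.6\cdot10^{-12}$ (from \cref{head:B-error} and $\|B(s)\|_2\le1$). These give $\sum_b|[z^b]g_{\nu\mu}^3|\le \|g_{\nu\mu}\|_{\ell_1}^3\le(\|\hat B_{\nu\cdot}\|_2\|\hat B_{\mu\cdot}\|_2)^3$. The omitted pairs then contribute at most $\frac4{\pi^2}(1+\epsilon)^6\sum_a\big(\sum_\nu|\hat U_{a\nu}|\big)\big(\sum_{\nu:\hat U_{a\nu}<10^{-18}}|\hat U_{a\nu}|\big)$. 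This is a finite quantity over $682\times768$ stored numbers that the verifier evaluates exactly, and it is clearly tiny.

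For (2), I would treat $g$ as having exactly rational coefficients $\tilde g_j=\mathrm{fl}(\hat B_{\nu j}\hat B_{\mu j})$, with $|\tilde g_j-\hat B_{\nu j}\hat B_{\mu j}|\le 2^{-53}|\hat B_{\nu j}\hat B_{\mu j}|+2^{-1074}$. Cubing is Lipschitz in $\ell_1$: $\|g^3-\tilde g^3\|_1\le 3\|g-\tilde g\|_1\max(\|g\|_1,\|\tilde g\|_1)^2$. For the radix-2 FFT I would use the standard bound: each of the 11 butterfly stages, with twiddle error $<2^{-48}$ and binary64 rounding, multiplies the $\ell_2$ error by at most $\sqrt2(1+c\,2^{-48})$ and adds a relative error of order $2^{-48}$. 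Parseval on the length-$2048$ DFT converts between coefficient and value norms. So the forward transform, two pointwise complex multiplications, the inverse transform and the division by $2048$ give an error in the coefficient vector of $g^3$ of order $11\cdot2^{-47}\|g\|_2^3\cdot\sqrt{2048}$. This should be stated as an explicit constant times $\|\tilde g\|_{\ell_1}^3$ and passed to $\ell_1$ over $b\le2043$ by another Cauchy--Schwarz factor $\sqrt{2048}$.

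For (3), I would reuse the $\gamma_k$ dot-product lemma from Claim~\ref{head:claim-quadrature}, applied to the $\le768$-term sums over $\nu$ and over $\mu$. The summand magnitudes are bounded by $|\hat U_{a\nu}||\hat U_{a\mu}|$ times the (already bounded) $|[z^b]\tilde g^3|$ plus its FFT error.

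Summing over odd $a+b\le681$ then reduces everything to sums of the form $\sum_a(\sum_\nu|\hat U_{a\nu}|)^2$, multiplied by explicit constants built from $2^{-48}$, $\gamma_{1536}$ and the FFT factor. These are again finite certified numbers computed in interval arithmetic. I expect the main obstacle to be making the FFT error bound rigorous with clean constants, handling the complex rounding model and the tie-breaking of fused versus separate operations. I would first try the classical Higham-style bound for Cooley--Tukey in the $\ell_2$ norm. If the resulting constant is too loose to reach $5.997\cdot10^{-8}$, I would instead bound the error pointwise at each stage using the verifier's running absolute-value envelopes, that is, a forward error analysis carried out in interval arithmetic alongside the computation.
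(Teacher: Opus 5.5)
Your decomposition is essentially the paper's. It covers dropped small weights, rounding of the products and of the FFT and cube (controlled in $\ell_2$ stage by stage, with Parseval to pass between coefficients and values), the nested weighted sums, and clipping via \cref{head:nonnegative-squares}. Everything is reduced to $\sum_a(\sum_\nu|\hat U_{a\nu}|)^2$ times the per-pair coefficient bound $(1+5.6\cdot10^{-12})^6$.

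However, you have misread the comparison target, and this leaves out an error source. In \cref{head:arithmetic-error} only $\hat B$ carries a hat. The weights are the exact, irrational $U_{a\nu}=\omega_\nu e^{-s_\nu^2/2}s_\nu^{a-1}/\sqrt{a!}$, just as the prefactor is the exact $4/\pi^2$. So besides $|4/\pi^2-\widehat{4/\pi^2}|$ you must also charge the conversion $U\to\hat U$. The paper does this in \cref{head:U-error}. It bounds $\sum_{a,\nu,\mu}|\frac4{\pi^2}U_{a\nu}U_{a\mu}-\widehat{4/\pi^2}\hat U_{a\nu}\hat U_{a\mu}|$ by $\frac4{\pi^2}\Delta(2\Sigma+\Delta)+|\frac4{\pi^2}-\widehat{4/\pi^2}|\sum_a(\sum_\nu|\hat U_{a\nu}|)^2$, where $\Delta=\sum_{a,\nu}|U_{a\nu}-\hat U_{a\nu}|$ and $\Sigma=\sum_{a,\nu}|\hat U_{a\nu}|$ are the certified totals in \cref{head:input-totals}. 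It then multiplies this by the coefficient bound.

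Your omission estimate also undercounts by about a factor of two. The sum \cref{head:computed-sum} drops a pair $(\nu,\mu)$ when \emph{either} weight is below $10^{-18}$. Write $A_a=\sum_\nu|\hat U_{a\nu}|$ and let $B_a$ be the sum of the small weights. Then the omitted mass is $2A_aB_a-B_a^2$, not $A_aB_a$; the paper uses the bound $2A_aB_a+B_a^2$.

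The floating-point model also has to be closed. The relative-error model behind a Higham-type FFT bound and the $\gamma_k$ dot-product lemma is only valid once overflow is ruled out. It also requires charging an additive $2^{-1074}$ at \emph{every} operation, not just at the products forming $g$. The paper supplies explicit magnitude bounds (forward values $<3/2$, inverse intermediates $<2^{16}$, weighted sums $<2^{12}$) and amplification counts, giving total additive error $<10^{-300}$. In the same vein, the cube step needs $\|v\|_\infty\le\sum_j|[z^j]g(z)|$ for the values on the unit circle, since $\|g\|_2$ does not control them.

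None of these repairs is numerically delicate. The stated constant is dominated by the paper's deliberately generous FFT allocation, $10000\cdot2048\cdot2^{-53}<2.274\cdot10^{-9}$ per pair, although its derived bound is only $2.222\cdot10^{-11}$. The rest is a blanket $\gamma_{1185792}$ for the sums, and your nested $\gamma_{1536}$-type estimate is tighter than that. Your sketch carries a spurious extra factor $\sqrt{2048}$, because the $2^{11/2}$ forward growth cancels the $1/\sqrt{2048}$ norm of the normalized inverse. Without it, an explicit version of your butterfly analysis lands near the paper's derived constant. So once the conversion term is added and the factor $2$ is fixed, your plan proves the claim with room to spare.
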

\begin{proof}
For each $(\nu,\mu)$, Cauchy--Schwarz and \cref{head:B-error} give
\[
\begin{aligned}
 \sum_j|[z^j]g(z)|&\le\|\hat B_{\nu\cdot}\|_2\|\hat B_{\mu\cdot}\|_2
 \le(1+5.604656\cdot10^{-12})^2,\\
 \sum_j|[z^j]g(z)^3|&\le\left(\sum_j|[z^j]g(z)|\right)^3.
\end{aligned}
\]
Here and below the weight indices range over $0\le a\le681$,
$1\le\nu,\mu\le768$.
The input enclosures and stored numbers give, at $256$ bits,
\begin{equation}\label{head:input-totals}
\begin{aligned}
 \sum_{a,\nu}|U_{a\nu}-\hat U_{a\nu}|&<1.092870488954584\cdot10^{-15},&
 \sum_{a,\nu}|\hat U_{a\nu}|&<26.904610216896492,\\
 \sum_a\left(\sum_\nu|\hat U_{a\nu}|\right)^2&<61.515376731920361,&
 \left|\frac4{\pi^2}-{\widehat{\frac4{\pi^2}}}\right|&<1.586515951775777\cdot10^{-17},\\
 \sum_{a,\nu:\hat U_{a\nu}<10^{-18}}|\hat U_{a\nu}|&<2.655682807169783\cdot10^{-15},&
 {\widehat{\frac4{\pi^2}}}&=\frac{228155022448185}{562949953421312}.
\end{aligned}
\end{equation}
For conversion of the weights and $4/\pi^2$,
\begin{align}
 \sum_{a=0}^{681}\sum_{\nu,\mu=1}^{768}
 \left|\frac4{\pi^2}U_{a\nu}U_{a\mu}
       -{\widehat{\frac4{\pi^2}}}\hat U_{a\nu}\hat U_{a\mu}\right|
 &\le\frac4{\pi^2}
 \left(\sum_{a,\nu}|U_{a\nu}-\hat U_{a\nu}|\right)
 \left(2\sum_{a,\nu}|\hat U_{a\nu}|+
              \sum_{a,\nu}|U_{a\nu}-\hat U_{a\nu}|\right)\notag\\
 &\quad+\left|\frac4{\pi^2}-{\widehat{\frac4{\pi^2}}}\right|
       \sum_a\left(\sum_\nu|\hat U_{a\nu}|\right)^2\notag\\
 &<8.386939012631\cdot10^{-14}.\label{head:U-error}
\end{align}
Multiplication by
$(1+5.604656\cdot10^{-12})^6$ bounds the resulting coefficient error.
Omitting entries below $10^{-18}$ contributes at most
\[
\begin{aligned}
 &{\widehat{\frac4{\pi^2}}}(1+5.604656\cdot10^{-12})^6
 \sum_a\sum_{\substack{\nu,\mu\\
       \hat U_{a\nu}<10^{-18}\ \mathrm{or}\ \hat U_{a\mu}<10^{-18}}}
            |\hat U_{a\nu}\hat U_{a\mu}|\\
 &\quad\le{\widehat{\frac4{\pi^2}}}(1+5.604656\cdot10^{-12})^6
 \sum_a\left[2\left(\sum_\nu|\hat U_{a\nu}|\right)
 \left(\sum_{\nu:\hat U_{a\nu}<10^{-18}}|\hat U_{a\nu}|\right)
 +\left(\sum_{\nu:\hat U_{a\nu}<10^{-18}}|\hat U_{a\nu}|\right)^2\right]\\
 &\quad<5.791527838\cdot10^{-14}.
\end{aligned}
\]
In each transform in \cref{head:discrete-transform}, the eleven
stages $(a,b)\mapsto(a+wb,a-wb)$ satisfy
\[
 |a+wb|^2+|a-wb|^2=2(|a|^2+|b|^2)\quad(|w|=1).
\]
For $a,b\in\C$, let $\widehat{ab}$ be their product computed with four
real multiplications and two real additions. Apart from the additive
$2^{-1074}$ errors treated below,
\[
 |\widehat{ab}-ab|
 \le\sqrt2(2\cdot2^{-53}+2^{-106})|a||b|\le3\cdot2^{-53}|a||b|.
\]
The component errors in $w$ therefore give
\[
 |\widehat{\hat w b}-wb|
 \le\bigl(\sqrt2\,2^{-48}+3\cdot2^{-53}(1+\sqrt2\,2^{-48})\bigr)|b|
 \le50\cdot2^{-53}|b|.
\]
The two final additions give stage error at most
$(51+50\cdot2^{-53})2^{-53}\sqrt{2(|a|^2+|b|^2)}$,
where $51+50\cdot2^{-53}<100$.
For $0\le k\le11$, $(1+100\cdot2^{-53})^k-1\le\gamma_{100k}$.
Including rounded input products, the Euclidean error after $k$ stages is at most
\[
 2^{k/2}(\gamma_{100k}(1+2^{-53})+2^{-53})
       \sum_j|[z^j]g(z)|
 \le5.53175\cdot10^{-12}\sum_j|[z^j]g(z)|.
\]
In exact arithmetic, each intermediate entry is a sum of distinct input
coefficients with multipliers of modulus one, so its modulus is at most
$\sum_j|[z^j]g(z)|$.
For coordinatewise cubes of complex vectors,
\[
 \|v^3-w^3\|_2\le3\max(\|v\|_\infty,\|w\|_\infty)^2\|v-w\|_2.
\]
For $(zz)z$, the product bound gives rounding error at most
$((1+3\cdot2^{-53})^2-1)|z|^3\le\gamma_{20}|z|^3$.
For the computed forward-transform vector $v\in\C^{2048}$, rounding
these cubes contributes Euclidean error at most
$\gamma_{20}\sqrt{2048}\|v\|_\infty^3$.
The Euclidean operator norm
of the inverse transform is $1/\sqrt{2048}$, and
$\|v\|_1\le\sqrt{2048}\|v\|_2$. The sum of absolute coefficient
errors is therefore at most
\begin{align}
 \left(\sum_j|[z^j]g(z)|\right)^3\bigl(&3(1+5.53175\cdot10^{-12})^2
                                     (5.53175\cdot10^{-12})\notag\\
 &+\sqrt{2048}(\gamma_{20}+\gamma_{1100}(1+\gamma_{20}))
                  (1+5.53175\cdot10^{-12})^3\bigr)
 <2.222247\cdot10^{-11}\label{head:fft-derived}
\end{align}
and, including the additive rounding errors, at most
\begin{equation}\label{head:fft-allocated}
 10000\cdot2048\cdot2^{-53}\left(\sum_j|[z^j]g(z)|\right)^3+10^{-290}
 <2.273736754509\cdot10^{-9}.
\end{equation}
For each $(\nu,\mu)$, the input products, two transforms, cubes, and
inverse normalization use at most
\[
 682+22\cdot1024\cdot10+2048(12+2)<2^{22}
\]
real operations. The bounds for $0\le k\le11$ give forward values of modulus at most
\[
 (1+5.53175\cdot10^{-12})(1+5.604656\cdot10^{-12})^2
 +4\cdot2^{22}\cdot2^{11}\cdot2^{-1074}<\frac32.
\]
An error in a real product passes through at most two remaining real
additions and is copied to the two outputs, with Euclidean amplification
$\sqrt2(1+2^{-53})^2<4$. With local rounding errors separated, each
remaining linear stage map has operator norm $<2$.
The forward and cube bounds give inverse inputs of
modulus $<8$, and all inverse intermediate magnitudes are bounded by
\[
 8\cdot2048(1+100\cdot2^{-53})^{11}
 +4\cdot2^{22}\cdot2^{11}\cdot2^{-1074}<2^{15}<2^{16}.
\]
Thus the transform and cube operations do not overflow.
With the relative rounding factors fixed, each computed product is
bilinear over $\R$, and the product bound gives input-error amplification at most
\[
 3(1+3\cdot2^{-53})^2\left(\frac32\right)^2<12
 \qquad(|z|<3/2)
\]
for the two products forming $z^3$.
Errors within these multiplications have amplification at most $4\cdot12$.
Across at most $22$ transform stages and the final
division by $2048$, a local error contributes at most
\[
 4\cdot2^{22}\cdot12\cdot\frac{\sqrt{2048}}{2048}
 <2^{24}<2^{40}
\]
times its magnitude to the sum of absolute coefficient errors.
Errors in the final division have amplification one. The total additive
error is $<2^{22}2^{-1074}2^{40}<10^{-300}$.
All $B$ dot products and weighted sums together use fewer than
$2\cdot1600\cdot768\cdot682+682^2(2\cdot768^2+8\cdot768)<2^{40}$
operations. Their intermediate magnitudes are bounded, respectively, by
$2\sum_j(\|v_j\|_2+\|v_j-\hat v_j\|_2)<16$ and
$2(1+\sum_{a,\nu}|\hat U_{a\nu}|)^2<2^{12}$, so they do not overflow.
A local error reaches its dot-product or weighted-sum output through at
most two weight multiplications, the final factor, and $1185792$ rounded
operations. Its amplification is at most
\[
 (1+2^{-53})^{1185792}\max\!\left(1,{\widehat{\frac4{\pi^2}}}\right)
 \left(1+\sum_{a,\nu}|\hat U_{a\nu}|\right)^2<2^{12}.
\]
Their additive errors therefore sum to less than
$2^{40}2^{12}2^{-1074}<10^{-300}<10^{-290}$.
Taking real parts cannot increase error, since $|\Re z|\le|z|$.
Using \cref{head:U-error} with the larger coefficient bound from
\cref{head:fft-allocated}, the conversion error is at most
\[
 (8.386939012631\cdot10^{-14})
 ((1+5.604656\cdot10^{-12})^6+2.273736754509\cdot10^{-9})
 <8.386939032\cdot10^{-14}.
\]
After multiplication by the weights, the error in the computed cube
coefficients and the error in the final sums are respectively bounded by
\[
\begin{aligned}
 &{\widehat{\frac4{\pi^2}}}\sum_a\left(\sum_\nu|\hat U_{a\nu}|\right)^2
 (2.273736754509\cdot10^{-9})<5.668708385\cdot10^{-8},\\
 &\gamma_{1185792}{\widehat{\frac4{\pi^2}}}
 \sum_a\left(\sum_\nu|\hat U_{a\nu}|\right)^2
 ((1+5.604656\cdot10^{-12})^6+2.273736754509\cdot10^{-9})
 +10^{-290}\\
 &\qquad<3.282182163\cdot10^{-9},
\end{aligned}
\]
The inner sums, outer sum, and final multiplication use at most
$768(2\cdot768)+2\cdot768+1\le2\cdot768^2+8\cdot768=1185792$
real operations per coefficient. Adding the conversion, omission,
transform, and summation bounds gives

\[
\begin{aligned}
 &8.386939032\cdot10^{-14}+5.791527838\cdot10^{-14}
 +5.668708385\cdot10^{-8}+3.282182163\cdot10^{-9}\\
 &\qquad=5.996940779766870\cdot10^{-8}.
\end{aligned}
\]
The exact entries are nonnegative by \cref{head:nonnegative-squares};
$|\max(x,0)-y|\le|x-y|$ for $y\ge0$ shows that replacing negative
entries by zero cannot increase error.
\end{proof}
\endgroup

\subsection{\texorpdfstring{{Bounds above degree $681$ for $H_{14},H_{17},\ldots,H_{31}$}}{Bounds above degree 681 for H14 and H17 through H31}}
\label{other:section}

{
\begingroup
\theoremstyle{plain}
\newtheorem{otherclaim}[headclaim]{Claim}
\begin{lemma}\label{other:lemma}
\begin{equation}\label{eq:upper-certified-nonadditive}
 \lambda_{14}\sum_{m>681}|[t^m]H_{14}(t)|
     +\sum_{i=17}^{31}\lambda_i\sum_{m>681}|[t^m]H_i(t)|
 <0.000003878150168295.
\end{equation}
\end{lemma}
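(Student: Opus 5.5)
The two families are handled separately. For $17\le i\le31$ the kernel is explicit by \cref{root:linear}, $H_i=\frac2\pi\arcsin P_i$ with $P_i(t)=\alpha t+\beta t^m$ and $|\alpha|+|\beta|\le1$. I would majorize coefficientwise as in \cref{root:coefficient-majorant}. Since $\arcsin$ has nonnegative Taylor coefficients $a_k=\binom{2k}{k}/(4^k(2k+1))$, we get $|[t^m]H_i|\le[t^m]\frac2\pi\arcsin P_{i,+}(t)$. For $m>681$, only powers $P_i^{2k+1}$ with a total of at least $682$ degrees contribute. The tail is then bounded by
\[
\frac2\pi\sum_k a_k\,\bigl(\text{mass of }P_{i,+}^{2k+1}\text{ above degree }681\bigr).
\]
Write $P_{i,+}^{2k+1}=\sum_j\binom{2k+1}{j}|\alpha|^{2k+1-j}|\beta|^j t^{2k+1+(m-1)j}$. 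Its mass above degree $681$ is a binomial tail probability, times $(|\alpha|+|\beta|)^{2k+1}\le1$.

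For $k\ge 341$ the whole mass counts, and $\sum_{k\ge341}a_k\lesssim\sum k^{-3/2}$. This is the dominant and unavoidable piece when $|\alpha|+|\beta|=1$ exactly. It is of order $\frac2\pi\cdot\frac{2}{\sqrt{\pi}}\cdot 341^{-1/2}\cdot\frac12\approx 0.04$, so after multiplication by $\lambda_i\le 3\cdot10^{-2}$ it is too big for $i=17$. The better route is therefore exact: for a nonnegative-coefficient series $G$ with $G(1)$ known, the tail equals $G(1)-\sum_{m\le681}[t^m]G$.

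Here $G_i=\frac2\pi\arcsin P_{i,+}$, with $G_i(1)=\frac2\pi\arcsin(|\alpha|+|\beta|)$. The head $\sum_{m\le 681}[t^m]G_i$ is computed in interval arithmetic, as already done for \cref{head:finite-margin}. This gives a rigorous upper bound on the tail that is sharp up to rounding. For $i=17,18$, $P_i$ has no sign change issues in the majorant, since only the $t^3$ coefficient is negative, and the computation is identical. The main difficulty is that $\arcsin$ has a square-root singularity at $1$ when $|\alpha|+|\beta|=1$. The tail then decays like $m^{-1/2}$ and the head sum converges slowly, so the computed difference is genuinely of size about $681^{-1/2}$ rather than tiny. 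That is acceptable only because the weights $\lambda_{17},\dots,\lambda_{31}$ are small. The check that $\lambda_{17}$ times this tail fits inside the budget is where I expect the margin to be tightest, and I would verify it first.

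For $H_{14}$ I use \cref{eq:upper-staircase-kernel}. Its coefficients satisfy $|[t^m]H_{14}|\le\sum_{3a+5b=m}\hat F_{ab}^2$ with $\hat F_{ab}=\E[F_*\psi_a(X_1)\psi_b(X_2)]$. Hence the tail is at most $1-\sum_{3a+5b\le681}\hat F_{ab}^2$ by Parseval, since $\E F_*^2=1$. The $\hat F_{ab}=u_{ab}/\sqrt{a!b!}$ are computed exactly from the rectangle decomposition of $F_*$ via \cref{head:hermite-integrals}, in $384$-bit interval arithmetic. This yields a rigorous lower bound on the retained energy.

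Because $F_*$ is piecewise constant on rectangles, its Hermite energy decays only polynomially. The remaining energy is therefore small but not negligible. It is again multiplied by the tiny weight $\lambda_{14}\approx3.6\cdot10^{-5}$. Summing the fifteen arcsine bounds and the $H_{14}$ bound, with outward rounding, gives \cref{eq:upper-certified-nonadditive}.
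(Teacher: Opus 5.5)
Your treatment of $H_{14}$ and of $19\le i\le31$ matches the paper, but the plan fails at $i=17$. Replacing the crude $k\ge341$ estimate by the exact majorant tail $G_{17}(1)-\sum_{m\le681}[t^m]G_{17}$ does not help, because that exact tail contains the $k\ge341$ piece and is itself about $0.02$. Here $P_{17,+}(t)=0.886t+0.114t^3$ with $P_{17,+}(1)=1$, so $G_{17}=\frac2\pi\arcsin P_{17,+}$ has square-root singularities at $t=\pm1$. Its odd coefficients are about $\frac2\pi\sqrt{2P_{17,+}'(1)}\,\pi^{-1/2}m^{-3/2}\approx0.56\,m^{-3/2}$, and its tail above degree $681$ is about $0.0216$. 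Since $\lambda_{17}\approx0.0265$ is not small, this contributes about $5.7\cdot10^{-4}$, roughly $150$ times the entire right-hand side.

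Even for $i=18$ the majorant costs about $\lambda_{18}\cdot0.02\approx1.3\cdot10^{-7}$, and the stated constant cannot absorb this. The number $0.000003878150168295$ is exactly $1.254728050780\cdot10^{-6}+2.623422117515\cdot10^{-6}$, the sum of the two certified pieces, so it has no slack.

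The missing idea is to keep the sign of $[t^3]P_i$ for $i=17,18$ rather than pass to $P_{i,+}$. Your remark that the computation is identical there because only $[t^3]P_i$ is negative gets this backwards: that negative coefficient is exactly what makes the true kernel small in the tail. With $P_i(t)=\alpha t-\beta t^3$ and $\alpha,\beta>0$, the true kernel $\frac2\pi\arcsin P_i$ is analytic and bounded on a disk $|z|\le R$ with $R>1$. The paper proceeds as follows:
\begin{enumerate}
\item It takes $(R,\eta)=(6/5,19/20)$ for $i=17$ and $(21/20,49/50)$ for $i=18$.
\item It writes $\Re P_i(Re^{\mathrm{i}\theta})=(\alpha R+3\beta R^3)x-4\beta R^3x^3$ with $x=\cos\theta$. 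Checking the endpoint and interior critical values gives $\sup_{|z|\le R}|\Re P_i(z)|<\eta$ by the maximum principle.
\item It uses $|\arcsin w|\le|w|/\sqrt{1-\eta^2}$ on the strip $|\Re w|\le\eta$.
\item Cauchy's estimate then gives $|[t^m]H_i|\le\frac{2(\alpha R+\beta R^3)}{\pi\sqrt{1-\eta^2}}R^{-m}$.
\end{enumerate}
The resulting geometric tails are below $6.982\cdot10^{-54}$ and $1.223\cdot10^{-13}$ respectively, which are negligible. With these two indices handled this way, the rest of your plan proves the lemma.
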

\begin{proof}
Claims~\ref{other:claim-staircase} and~\ref{other:claim-hyperplane} give
\[
 \begin{aligned}
 &\lambda_{14}\sum_{m>681}|[t^m]H_{14}(t)|
 +\sum_{i=17}^{31}\lambda_i\sum_{m>681}|[t^m]H_i(t)|\\
 &\quad<(1.254728050780+2.623422117515)\cdot10^{-6}
 =0.000003878150168295.
 \end{aligned}
\]
\end{proof}

\begin{otherclaim}\label{other:claim-staircase}
\begin{equation}\label{other:staircase-numeric}
 \lambda_{14}\sum_{m>681}|[t^m]H_{14}(t)|
 <1.254728050780\cdot10^{-6}<1.254729\cdot10^{-6}.
\end{equation}
\end{otherclaim}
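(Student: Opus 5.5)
Start from the explicit kernel \cref{eq:upper-staircase-kernel}. Writing $\E[F_*(X_1,X_2)\psi_a(X_1)\psi_b(X_2)]=u_{ab}/\sqrt{a!b!}$, the coefficient of $t^m$ in $H_{14}$ is a signed sum of the nonnegative squares $u_{ab}^2/(a!b!)$ over $3a+5b=m$. The triangle inequality therefore gives
\[
 \sum_{m>681}|[t^m]H_{14}(t)|\le\sum_{3a+5b>681}\frac{u_{ab}^2}{a!b!}
 =1-\sum_{3a+5b\le681}\frac{u_{ab}^2}{a!b!},
\]
where the equality is Parseval together with $F_*\in\{\pm1\}$, so $\E F_*^2=1$. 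The tail bound thus reduces to a \emph{lower} bound on the finite head sum of squared Hermite coefficients. This avoids estimating infinitely many coefficients directly.

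The head sum is computed exactly from the staircase structure of $F_*$. On each strip $k/256\le y<(k+1)/256$, $F_*$ is piecewise constant in $x$, with breakpoints at the corrected roots $r_{k,a}$. For $y\ge8$ it equals $-\sign(x)$, and it extends to $y<0$ by oddness. Hence $\R^2$ is partitioned into finitely many rectangles (some unbounded) on which $F_*$ is constant. Each $u_{ab}$ is then a finite sum of products
\[
 \pm\Bigl(J_a(x_0)-J_a(x_1)\Bigr)\Bigl(J_b(y_0)-J_b(y_1)\Bigr),
\]
using the closed forms \cref{head:hermite-integrals} with $J_0=\tfrac12\operatorname{erfc}(\cdot/\sqrt2)$ and $J_j=\phi\He_{j-1}$. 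Before using these, I would prove the identities: $(\phi\He_{j-1})'=-\phi\He_j$ and the limits at $\pm\infty$. Then I evaluate all $u_{ab}$ with $3a+5b\le681$ in outward-rounded interval arithmetic at $384$ bits. The Hermite polynomials are computed by the three-term recurrence at the (exact rational) rectangle endpoints. Summing the lower endpoints of $u_{ab}^2/(a!b!)$ gives a rigorous lower bound on the head, hence an upper bound on the tail. Multiplying by $\lambda_{14}\approx3.57\cdot10^{-5}$, the claim requires only that the tail mass of $F_*$ beyond graded degree $681$ be below about $0.035$.

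The main obstacle is whether this crude Parseval tail is small enough. $F_*$ is discontinuous, so its Hermite weight decays only polynomially, roughly like the boundary length over $\sqrt{\text{degree}}$. With graded degree $3a+5b$, the cutoff $681$ corresponds to ordinary degrees of order $100$–$200$, and a tail mass of a few percent is plausible but not guaranteed. If the bound falls short, I would first refine by sorting terms by $m$ and exploiting cancellation between the $(-1)^a$ signs only for moderate $m$ (say $681<m\le M'$). I would keep the Parseval remainder only for $m>M'$, at the cost of computing more $u_{ab}$.

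The second concern is numerical stability of $\He_j(x)\phi(x)/\sqrt{j!}$ for $j\sim227$ at endpoints up to $|x|=8$ and at infinity. I would handle this by using the normalized recurrence for $\psi_j\phi$ directly rather than forming $\He_j$ and $j!$ separately. The $384$-bit precision then makes rounding errors negligible relative to the margin between $1.254728050780\cdot10^{-6}$ and $1.254729\cdot10^{-6}$.
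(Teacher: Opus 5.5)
Your proposal matches the paper's proof. It bounds the tail by triangle inequality plus Parseval, reducing to $1-\sum_{3a+5b\le681}u_{ab}^2/(a!b!)$, and evaluates the head sum in $384$-bit interval arithmetic from the strip/breakpoint structure of $F_*$ (the paper folds in oddness and $F_*=-\sign(x)$ for $y\ge8$) using the closed forms for $J_j$. The crude Parseval bound already suffices: the paper certifies a head sum above $0.9648417014661795$ over $7843$ nonzero index pairs, a tail just under your $\approx0.035$ threshold, so your fallback exploiting cancellation between the $(-1)^a$ signs is not needed.
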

\begin{proof}
The formulas for $J_j$ in \cref{head:hermite-integrals} follow from
\[
 \frac d{dx}\operatorname{erfc}(x/\sqrt2)=-2\phi(x),\qquad
 (\phi\He_{j-1})'=-\phi\He_j,\qquad
 \lim_{x\to\pm\infty}\phi(x)\He_{j-1}(x)=0\quad(j\ge1).
\]
Define $u_{ij}:=\E[F_*(X_1,X_2)\He_i(X_1)\He_j(X_2)]$.
By \cref{root:odd-hermite}, $u_{ij}=0$ for $i+j$ even. On the strip
$k/256\le y<(k+1)/256$, the definition of $F_*$ gives
\[
 \begin{aligned}
 \int_{\mathbb R}F_*(x,y)\He_i(x)\phi(x)\,dx
 &=\mathbf1_{\{i=0\}}+2\sum_a(-1)^aJ_i(r_{k,a}),\\
 \int_{\mathbb R}\sign(x)\He_i(x)\phi(x)\,dx
 &=2\mathbf1_{\{i\text{ odd}\}}J_i(0).
 \end{aligned}
\]
For $i+j$ odd, symmetry and $F_*(x,y)=-\sign(x)$ for $y\ge8$ yield
\begin{align}
 u_{ij}
 &=2\int_0^\infty\int_{\mathbb R}
       F_*(x,y)\He_i(x)\He_j(y)\phi(x)\phi(y)\,dx\,dy\notag\\
 &=2\sum_{k=0}^{2047}
  \left(\mathbf1_{\{i=0\}}+2\sum_a(-1)^aJ_i(r_{k,a})\right)
       \left[J_j(k/256)-J_j((k+1)/256)\right]\notag\\
 &\quad-4\mathbf1_{\{i\text{ odd}\}}J_i(0)J_j(8).
 \label{other:staircase-moments}
\end{align}
By \cref{eq:upper-staircase-kernel} and Parseval,
\begin{equation}\label{other:staircase-kernel}
 \sum_{m>681}|[t^m]H_{14}(t)|
 \le\sum_{3i+5j>681}\frac{u_{ij}^2}{i!j!}
 =1-\sum_{3i+5j\le681}\frac{u_{ij}^2}{i!j!}.
\end{equation}
Evaluating \cref{other:staircase-moments} with $384$-bit interval
arithmetic gives
\[
 \begin{gathered}
 \#\{(i,j)\in\mathbb Z_{\ge0}^2:i+j\text{ odd},\ 3i+5j\le681\}=7843,\\
 \sum_{3i+5j\le681}\frac{u_{ij}^2}{i!j!}>0.9648417014661795,\qquad
 \lambda_{14}(1-0.9648417014661795)<1.254728050780\cdot10^{-6}.
 \end{gathered}
\]
\end{proof}

\begin{otherclaim}\label{other:claim-hyperplane}
\begin{equation}\label{other:hyperplane-numeric}
 \sum_{i=17}^{31}\lambda_i\sum_{m>681}|[t^m]H_i(t)|
 <2.623422117515\cdot10^{-6}<2.623423\cdot10^{-6}.
\end{equation}
\end{otherclaim}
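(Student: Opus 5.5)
For $17\le i\le31$, \cref{root:linear} gives $H_i=\frac2\pi\arcsin P_i$ with $P_i(t)=\alpha_i t+\beta_i t^{m_i}$, $m_i\in\{3,13,15\}$. Write $\arcsin u=\sum_{k\ge0}a_k u^{2k+1}$ with $a_k=\binom{2k}{k}/(4^k(2k+1))>0$. The plan is to bound the tail of each $H_i$ by the majorant $P_{i,+}(t)=|\alpha_i|t+|\beta_i|t^{m_i}$, exactly as in \cref{root:coefficient-majorant}. This gives
\[
 \sum_{m>681}|[t^m]H_i(t)|\le\frac2\pi\sum_{k\ge0}a_k\sum_{m>681}[t^m]P_{i,+}(t)^{2k+1}.
\]
Since $P_{i,+}(1)=|\alpha_i|+|\beta_i|\le1$ by \cref{root:normalization}, the total mass of $\frac2\pi\arcsin P_{i,+}$ is at most $1$. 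Hence the tail above degree $681$ equals $\frac2\pi\arcsin P_{i,+}(1)$ minus the head $\sum_{m\le681}[t^m]\frac2\pi\arcsin P_{i,+}(t)$, in analogy with \cref{other:staircase-kernel}.

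\textbf{Head computation.} I would compute the head exactly, as a nonnegative lower bound, using the truncated series from step 4 of the proof of Claim~\ref{head:claim-numerical}. Only $k\le340$ contributes through degree $681$, and every term is nonnegative. Expanding by the binomial theorem,
\[
 [t^m]P_{i,+}^{2k+1}=\binom{2k+1}{l}|\alpha_i|^{2k+1-l}|\beta_i|^{l},\qquad m=2k+1+l(m_i-1),
\]
so each coefficient is a single binomial term. The head is therefore a finite double sum over $(k,l)$ of positive terms, and I would evaluate it in outward-rounded interval arithmetic (say at $256$ bits). The rigorous upper bound on the tail is the upper endpoint of $\frac2\pi\arcsin(|\alpha_i|+|\beta_i|)$ minus the lower endpoint of the head.

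\textbf{Main obstacle.} The difficulty is when $P_{i,+}(1)=1$, which is the case for every row of the table after normalization. Then $\frac2\pi\arcsin$ has a square-root singularity at $1$, the coefficients decay only like $k^{-3/2}$, and the tail after degree $681$ is of order $681^{-1/2}$ times a constant depending on $|\alpha_i|$. The required bound $2.62\cdot10^{-6}$ is reachable only because the weights $\lambda_{17}$ and $\lambda_{18}$ are moderate and the others are tiny.

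For $i=17,18$ (with $m_i=3$ and $\alpha_i\approx0.886,0.96$), the exact subtraction above should still suffice. I would also try to tighten it by keeping the signs: since $\beta_i<0$ for $m_i=3$, compute the exact signed coefficients of $\arcsin(\alpha t+\beta t^3)$ beyond degree $681$. If the generic bound falls short, a second route is to bound the tail analytically via $a_k\le(\pi k)^{-1/2}(2k+1)^{-1}$. One then sums over $k>340$ using the total mass $P_{i,+}(1)^{2k+1}=1$, and for $k\le340$ uses the mass of $P_{i,+}^{2k+1}$ above degree $681$. That mass is a binomial tail probability $\Pr[\mathrm{Bin}(2k+1,|\beta_i|)>(681-2k-1)/(m_i-1)]$, bounded by Chernoff.

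Finally, multiply each tail bound by $\lambda_i$ and sum over $i=17,\dots,31$. The check is that the total is below $2.623422117515\cdot10^{-6}$.
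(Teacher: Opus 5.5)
There is a genuine gap at $i=17,18$. Your majorant-plus-exact-head computation is exactly what the paper does for $19\le i\le31$. Those thirteen rows already use up essentially the whole stated budget: the constant $2.623422117515\cdot10^{-6}$ is, up to negligible terms, $\sum_{i=19}^{31}\lambda_i\bigl(1-\sum_{m\le681}[t^m]\tfrac2\pi\arcsin P_{i,+}(t)\bigr)$.

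Your claim that the same subtraction ``should still suffice'' for $i=17,18$ is false. Replacing $P_i$ by $P_{i,+}$ forces $P_{i,+}(1)=1$. Then $\frac2\pi\arcsin P_{i,+}$ has square-root branch points at $t=\pm1$, and its odd coefficients decay only like $m^{-3/2}$. Its tail beyond degree $681$ is therefore about $\frac{2}{\pi^{3/2}}\sqrt{2P_{i,+}'(1)}\,681^{-1/2}$.
\begin{itemize}
\item For $i=17$, $P_{17,+}'(1)=0.886+3\cdot0.114=1.228$, so the tail is $\approx0.0216$. The weighted contribution is $\approx0.0265\cdot0.0216\approx5.7\cdot10^{-4}$, roughly $200$ times the entire claimed bound.
\item For $i=18$, the weighted contribution is $\approx1.35\cdot10^{-7}$, still far above the available slack.
\end{itemize}
Your second route (Chernoff bounds on the binomial tails plus $a_k\lesssim k^{-3/2}$) only re-estimates this same majorant tail, so no sharpening can help. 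Your suggestion to ``compute the exact signed coefficients beyond degree $681$'' is an infinite computation, and you propose no way to bound it.

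The missing idea is to exploit $[t^3]P_i<0$ analytically: the true $H_i$ is holomorphic on a disk of radius $R>1$. Write $P_i(t)=\alpha t-\beta t^3$ with $\alpha,\beta>0$. On $|z|=R$,
$$\Re P_i(Re^{\mathrm{i}\theta})=(\alpha R+3\beta R^3)x-4\beta R^3x^3,\qquad x=\cos\theta.$$
Checking the endpoint value $\alpha R-\beta R^3$ and the interior critical value, the maximum principle gives $\sup_{|z|\le R}|\Re P_i(z)|<\eta<1$. The paper uses $(R,\eta)=(6/5,19/20)$ for $i=17$ and $(21/20,49/50)$ for $i=18$.

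The branch of $\arcsin$ on the strip $|\Re w|<1$ satisfies $|\arcsin w|\le|w|/\sqrt{1-\eta^2}$ when $|\Re w|\le\eta$. Cauchy's estimate then gives
$$|[t^m]H_i|\le\frac{2(\alpha R+\beta R^3)}{\pi\sqrt{1-\eta^2}}R^{-m}.$$
Summing this geometric bound over odd $m>681$ gives tails below $7\cdot10^{-54}$ for $i=17$ and $1.3\cdot10^{-13}$ for $i=18$, both negligible after weighting.

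A strip, not the disk $|w|<1$, is the right domain for $\arcsin$. On the unit circle, $|P_i|=1$ only at $\pm\mathrm{i}$, where $P_i(\pm\mathrm{i})=\pm\mathrm{i}$ lies far from the branch points $\pm1$.

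With this replacement for $i=17,18$, the rest of your plan goes through as in the paper.
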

\begin{proof}
For $17\le i\le31$, \cref{root:linear},
\cref{root:coefficient-majorant}, and
\begin{equation}\label{other:arcsin-series}
 \arcsin z=\sum_{k\ge0}\frac{\binom{2k}{k}}{4^k(2k+1)}z^{2k+1}
 \qquad(|z|\le1)
\end{equation}
give
\[
 |[t^m]H_i(t)|\le[t^m]\frac2\pi\arcsin P_{i,+}(t),\qquad
 \sum_{m\ge0}[t^m]\frac2\pi\arcsin P_{i,+}(t)
 =\frac2\pi\arcsin P_{i,+}(1)\le1.
\]
Consequently,
\begin{equation}\label{other:positive-arcsine-tail}
 \sum_{m>681}|[t^m]H_i(t)|
 \le1-\sum_{m=1}^{681}[t^m]\frac2\pi\arcsin P_{i,+}(t).
\end{equation}
For $19\le i\le31$, evaluate at $256$ bits the finite sum
\[
 \begin{aligned}
 \sum_{m=1}^{681}[t^m]\frac2\pi\arcsin P_{i,+}(t)
 &=\frac2\pi\sum_{k=0}^{340}\frac{\binom{2k}{k}}{4^k(2k+1)}
            \sum_{m=1}^{681}[t^m]P_{i,+}(t)^{2k+1},\\
 [t^m]P_{i,+}^{2k+1}&=0\quad(m<2k+1).
 \end{aligned}
\]
For $i=17,18$, set
\[
 \alpha:=[t]P_i>0,\qquad\beta:=-[t^3]P_i>0,\qquad
 (R,\eta):=\begin{cases}(6/5,19/20),&i=17,\\
                         (21/20,49/50),&i=18.
             \end{cases}
\]
For $z=Re^{{\mathrm{i}}\theta}$ and $x=\cos\theta$,
\[
 \Re P_i(z)=(\alpha R+3\beta R^3)x-4\beta R^3x^3.
\]
Its squared endpoint and interior critical values satisfy
\[
 (\alpha R-\beta R^3)^2<\eta^2,\qquad
 \frac{(\alpha R+3\beta R^3)^3}{27\beta R^3}<\eta^2
 \quad\text{if }\frac{\alpha R+3\beta R^3}{12\beta R^3}<1.
\]
Thus $\sup_{|z|\le R}|\Re P_i(z)|<\eta$ by the maximum principle.
On $|\Re w|<1$, take the analytic branch of $\arcsin$ with
$\arcsin(0)=0$, $\arcsin'(0)=1$. For $w=x+{\mathrm{i}}y$, $|x|\le\eta$,
\[
 \begin{aligned}
 |1-w^2|^2&=(1-x^2)^2+2(1+x^2)y^2+y^4\ge(1-\eta^2)^2,\\
 |\arcsin w|&=\left|\int_0^1\frac{w}{\sqrt{1-t^2w^2}}\,dt\right|
                 \le\frac{|w|}{\sqrt{1-\eta^2}}.
 \end{aligned}
\]
Cauchy's estimate and oddness give
\[
 |[t^m]H_i(t)|\le
 \frac{2(\alpha R+\beta R^3)}{\pi\sqrt{1-\eta^2}}R^{-m},
 \qquad[t^{2m}]H_i(t)=0,
\]
and hence
\begin{equation}\label{other:cubic-tail}
 \sum_{m>681}|[t^m]H_i(t)|
 \le\frac{2(\alpha R+\beta R^3)}{\pi\sqrt{1-\eta^2}}
                         \frac{R^{-683}}{1-R^{-2}}
 <\begin{cases}6.982\cdot10^{-54},&i=17,\\
               1.223\cdot10^{-13},&i=18.
   \end{cases}
\end{equation}
These calculations use the normalized coefficients
\cref{root:normalization}. Summing the $256$-bit interval bounds
\cref{other:positive-arcsine-tail} and \cref{other:cubic-tail}
with weights $\lambda_i$ gives \cref{other:hyperplane-numeric}.
\end{proof}
\endgroup
}

\subsection{\texorpdfstring{{Bounds above degree $681$ for $H_1,\ldots,H_{13}$}}{Bounds above degree 681 for H1 through H13}}
\label{other:rare}

{
\begin{lemma}\label{other:rare-lemma}
\begin{equation}\label{eq:upper-certified-rare}
 \sum_{i=1}^{13}\lambda_i\sum_{m>681}|[t^m]H_i(t)|
                                  <0.000034097111321496.
\end{equation}
\end{lemma}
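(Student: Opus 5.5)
The plan is to bound each tail $\sum_{m>681}|[t^m]H_i(t)|$ for $1\le i\le13$ separately, using the additive kernel representation \cref{root:additive-kernel}, $H_i(t)=\sum_{a,b}C[a,b]P_i(t)^a(-t)^b$, together with the coefficient majorant \cref{root:coefficient-majorant}. Since $|[t^m](P_i^a(-t)^b)|\le[t^m](P_{i,+}^at^b)$, and these majorant coefficients are nonnegative and vanish for $m<a+b$, we get
\[
 \sum_{m>681}|[t^m]H_i(t)|\le\sum_{a,b}C[a,b]\sum_{m>681}[t^m]\bigl(P_{i,+}(t)^at^b\bigr)
 =\sum_{a,b}C[a,b]\Bigl(P_{i,+}(1)^a-\sum_{m\le 681}[t^m](P_{i,+}^at^b)\Bigr).
\]
Using $\sum_{a,b}C[a,b]=1$ and $P_{i,+}(1)\le1$, the right side is at most $1-\sum_{a+b\le681}C[a,b]\,T_i(a,b)$, where $T_i(a,b):=\sum_{m\le681}[t^m](P_{i,+}^at^b)$ is an exactly computable finite polynomial quantity. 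So the tail bound reduces to a lower bound on a finite weighted sum of the coefficients $C[a,b]$.

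For that lower bound I would reuse the machinery of Claim~\ref{head:claim-numerical}. The rational array $\hat C[a,b]$ satisfies $\sum_{a+b\le681}|\hat C[a,b]-C[a,b]|<6.08\cdot10^{-8}+1.86\cdot10^{-6}$, by \cref{head:joint-plane} and \cref{head:cutoff}. Since $0\le T_i\le1$, this yields
\[
 \sum_{m>681}|[t^m]H_i|\le 1-\sum_{a+b\le681}\hat C[a,b]T_i(a,b)+1.92\cdot10^{-6}.
\]
The sum is then evaluated in interval arithmetic, truncating polynomial powers at degree $681$ exactly as in step~4 of that claim. Multiplying by $\lambda_i$ and summing over $i=1,\ldots,13$, the $\sum\lambda_i\approx 0.00105$ factor keeps the cutoff error negligible, around $2\cdot10^{-9}$.

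The main obstacle is that this majorant is genuinely lossy for $i=1,\ldots,13$. These $P_i$ are dominated by $t^3$ (or $t^3,t^5$) with tiny linear part, so the mass of $H_i$ sits at degrees roughly $3a+b$. Hermite mass of $F$ at large $a$ then produces a real tail, and $1-\sum\hat C\,T_i$ may be as large as a few percent. That is consistent with the stated total $3.4\cdot10^{-5}$ against weights summing to about $10^{-3}$, so I expect the crude bound to essentially suffice.

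If the bound comes out too large for some $i$, I would try a Cauchy-estimate argument in the style of Claim~\ref{other:claim-hyperplane}. The idea is to bound $|H_i(z)|$ on a circle $|z|=R>1$, via $|P_i(z)|$ and $|z|$ and an analytic bound on $\sum C[a,b]\rho^a\sigma^b$. However, $F$ is a sign function with only $\ell_1$-summable coefficients, so no $R>1$ is available. The finite-sum majorant above is therefore the approach I would commit to.
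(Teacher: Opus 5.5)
Your proposal is correct and is essentially the paper's proof: majorize each coefficient via $P_{i,+}$, use total mass at most one to write the tail as one minus a finite degree-$\le 681$ sum, and replace the exact coefficients by the certified rational array $\hat C$ at the cost of the errors in \cref{head:joint-plane} and \cref{head:cutoff}, then evaluate in interval arithmetic. The only cosmetic difference is that the paper applies the mass bound to $C_{\rm tr}$ (via $\E F_{\rm tr}^2\le 1$) before switching to $\hat C$, whereas you use $\sum_{a,b}C[a,b]=1$ directly and compare coefficients only on $a+b\le 681$. Both routes give the same error budget.
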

\begin{proof}
For $1\le i\le13$, Parseval and \cref{root:coefficient-majorant} give
\[
 \sum_{m\ge0}[t^m]\sum_{a,b\ge0}C_{\rm tr}[a,b]P_{i,+}(t)^at^b
 =\sum_{a,b\ge0}C_{\rm tr}[a,b]P_{i,+}(1)^a
 \le\E[F_{\rm tr}(W,X)^2]\le1.
\]
Therefore \cref{root:additive-kernel}, \cref{root:coefficient-majorant},
\cref{head:cutoff}, and \cref{head:joint-plane} imply
\begin{equation}\label{other:captured-mass}
\begin{aligned}
 \sum_{m>681}|[t^m]H_i(t)|
 &\le\sum_{m>681}[t^m]\sum_{a,b\ge0}
                 C_{\rm tr}[a,b]P_{i,+}(t)^at^b
        +\sum_{a,b\ge0}|C[a,b]-C_{\rm tr}[a,b]|\\
 &\le1-\sum_{m=0}^{681}[t^m]\sum_{a+b\le681}
                 \hat C[a,b]P_{i,+}(t)^at^b
        +\sum_{a+b\le681}|\hat C[a,b]-C_{\rm tr}[a,b]|\\
 &\qquad+\sum_{a,b\ge0}|C[a,b]-C_{\rm tr}[a,b]|\\
 &\le1-\sum_{m=0}^{681}[t^m]\sum_{a+b\le681}
                 \hat C[a,b]P_{i,+}(t)^at^b\\
 &\qquad+6.078879564946870\cdot10^{-8}
        +2\sqrt{3\operatorname{erfc}\left(\frac{73}{10\sqrt2}\right)}.
\end{aligned}
\end{equation}
Evaluating the finite sum at $256$ bits using the exact rational
$\hat C[a,b]$ from Claim~\ref{head:claim-numerical} gives the following
upper bounds for $\lambda_i\sum_{m>681}|[t^m]H_i(t)|$:
\[
\begin{array}{c|c@{\qquad}c|c}
 i&\text{upper bound}&i&\text{upper bound}\\ \hline
 1&2.364541\cdot10^{-7}&8&6.366458\cdot10^{-6}\\
 2&1.243556\cdot10^{-7}&9&3.079728\cdot10^{-7}\\
 3&3.261885\cdot10^{-7}&10&3.796353\cdot10^{-6}\\
 4&2.277905\cdot10^{-6}&11&2.153852\cdot10^{-7}\\
 5&3.923603\cdot10^{-6}&12&1.933337\cdot10^{-6}\\
 6&1.454752\cdot10^{-6}&13&5.492683\cdot10^{-7}\\
 7&1.258509\cdot10^{-5}&&
\end{array}
\]
Summing the interval bounds before rounding to the displayed decimals
gives \cref{eq:upper-certified-rare}.
\end{proof}
}

\subsection{\texorpdfstring{{Bounds above degree $681$ for $\sum_{i\in\{0,15,16\}}\lambda_iH_i$}}{Bounds above degree 681 for the weighted sum with indices 0, 15, and 16}}
\label{heavy:section}
{
\begingroup
\theoremstyle{plain}
\newtheorem{heavyclaim}[headclaim]{Claim}
\begin{lemma}\label{heavy:lemma}
\begin{equation}\label{heavy:final-number}
 \sum_{m>681}\left|[t^m]\sum_{i\in\{0,15,16\}}\lambda_iH_i(t)\right|
 <0.000007875733380418.
\end{equation}
\end{lemma}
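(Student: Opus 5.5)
The plan is to follow the captured-mass strategy of \cref{other:rare-lemma}, but applied to the combined kernel $\sum_{i\in\{0,15,16\}}\lambda_iH_i$ rather than to each summand. These three components share the same threshold function $F$, and by \cref{root:additive-kernel} they differ only through the inner polynomials $P_0,P_{15},P_{16}$. Hence
\[
 \sum_{i\in\{0,15,16\}}\lambda_iH_i(t)=\sum_{a,b\ge0}C[a,b]\,(-t)^b\sum_{i\in\{0,15,16\}}\lambda_iP_i(t)^a .
\]
By \cref{root:coefficient-majorant}, each coefficient of this series is bounded in absolute value by the corresponding coefficient of the nonnegative majorant
\[
 \mathcal M(t):=\sum_{a,b}C[a,b]\,t^b\sum_{i\in\{0,15,16\}}\lambda_iP_{i,+}(t)^a .
\]
Its total coefficient mass is $\sum_{a,b}C[a,b]\sum_i\lambda_iP_{i,+}(1)^a\le\Lambda:=\lambda_0+\lambda_{15}+\lambda_{16}$, using $\sum_{a,b}C[a,b]=1$ and $P_{i,+}(1)\le1$. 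Therefore
\[
 \sum_{m>681}\Big|[t^m]\sum_{i\in\{0,15,16\}}\lambda_iH_i\Big|\le\Lambda-\sum_{m\le681}[t^m]\mathcal M(t),
\]
and everything reduces to a lower bound on the captured low-degree mass of $\mathcal M$.

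For that lower bound I will replace $C$ by $C_{\rm tr}$ and then by the rational array $\hat C$ of Claim~\ref{head:claim-numerical}. Only $a+b\le681$ contributes to degrees at most $681$, since $[t^m](P_{i,+}^at^b)=0$ for $m<a+b$. Exactly as in \cref{other:captured-mass}, this costs at most
\[
 \Lambda\Big(6.078879564946870\cdot10^{-8}+2\sqrt{3\operatorname{erfc}(73/(10\sqrt2))}\Big)\approx1.87\cdot10^{-6},
\]
using \cref{head:joint-plane} and \cref{head:cutoff}. I will then evaluate the finite sum $\sum_{m\le681}[t^m]\sum_{a+b\le681}\hat C[a,b]t^b\sum_i\lambda_iP_{i,+}(t)^a$ in $256$-bit interval arithmetic, truncating all polynomial products at degree $681$ as in step~4 of Claim~\ref{head:claim-numerical}. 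The target $7.875\cdot10^{-6}$ then requires the genuine high-degree mass $\sum_{m>681}[t^m]\mathcal M$ to be below about $6\cdot10^{-6}$.

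The main obstacle is exactly this genuine high-degree mass. These components carry almost all of the weight ($\Lambda\approx0.97$), so even small Hermite mass of $F$ at high $a+b$, or spreading of $P_{i,+}^a$ (whose degree ranges up to $3a$) past $681$, is amplified by a factor close to one. If the direct captured-mass evaluation fails to reach the target, I would next exploit cancellation. The inner polynomials $P_i\approx0.9t-0.1t^3$ satisfy $|\Re P_i(z)|<\eta<1$ on a circle $|z|=R>1$, which is checked as in Claim~\ref{other:claim-hyperplane}. I would then split off the part of $C$ with $a$ large and $b$ small, where a Cauchy-estimate bound $R^{-m}$ on the analytically continued kernel is available. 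For the complementary part, with large $b$, I would use that the $(-t)^b$ factor does not spread degree, so the majorant is nearly sharp there.
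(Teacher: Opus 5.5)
Your reduction is a valid inequality, but it cannot get near the target. The obstruction is structural, not numerical: the nonnegative majorant $\mathcal M$ throws away exactly the cancellation that makes this tail small.

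Every term with $a+b>681$ lies entirely in the tail of $\mathcal M$, so your bound is at least $\Lambda\sum_{a+b>681}C[a,b]$. Because $F$ is a discontinuous threshold function, this Hermite tail is of order $681^{-1/2}$, not $10^{-6}$. Conditioning on $X$ gives $\sum_{a>681}\sum_b C[a,b]=\E_X\sum_{a>681}q_a(R(X))^2$ with $R=\sum_j r(X_j)$. For $\sign(W)$ alone this equals $\frac2\pi\sum_{k\ge341}\binom{2k}{k}4^{-k}(2k+1)^{-1}\approx0.019$, and since $R(X)$ is typically small the value for $F$ is comparable. The paper's own captured-mass bounds show the same scale: about $0.03$ per unit weight for $H_1,\dots,H_{13}$, and $1-0.9648$ for $H_{14}$. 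With $\Lambda\approx0.97$ you would get roughly $2\cdot10^{-2}$, about three orders of magnitude above $7.9\cdot10^{-6}$. So the failure of the direct evaluation is certain, not a contingency.

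Your fallback does not repair this. A Cauchy estimate on $|z|=R>1$ is plausible for a piece with $b$ well below $681$; it mirrors the $\arcsin$ argument for $i=17,18$. The complementary large-$b$ piece, however, has the same defect. $F$ jumps across $\{W+R(X)=0\}$, so $\sum_{a,b}b\,C[a,b]=\E|\nabla_X F|^2=\infty$. Hence the Hermite mass with $b>B$ is not even summable in $B$ (heuristically it decays like $B^{-1/2}$). A majorant that is nearly sharp there is therefore nearly sharp at a value of order $10^{-3}$ to $10^{-2}$.

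You also cannot push this piece onto a larger circle. The $X$-correlation $-t$ reaches $\mp1$ at $t=\pm1$, so the complex-Gaussian continuation that works in the $W$-direction is unavailable there. The paper only obtains finite smoothness at $t=\pm1$, and only after capping $r$.

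The cancellation lives on the unit circle itself. There $|P_i(e^{\mathrm{i}\theta})|<1$ except at $\pm\mathrm{i}$, and the damping $|P_i|^a$ coming from the $W$-noise smooths out the $X$-discontinuity. No coefficientwise bound can see this.

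The missing idea is to bound $\|\mathcal D^4\sum_i\lambda_iH_i\|_{L_2(\mathbb T)}$ with $\mathcal D=t\,d/dt$. Then $\sum_m m^8|c_m|^2=\|\mathcal D^4\cdot\|_{L_2(\mathbb T)}^2$, oddness, and Cauchy--Schwarz give a tail at most $\|\mathcal D^4\cdot\|/\sqrt{14\cdot681^7}$. Making this explicit requires three ingredients:
\begin{enumerate}
\item Replace $r$ by a capped $C^6$ profile $\widetilde r$ with compactly supported derivatives, at $\ell_1$ cost $4\sqrt{\Pr(F\ne\widetilde F)}\,\Lambda$.
\item On arcs where $|P_i|\le\rho<1$, bound the weighted sums $\sum(a)_p(b)_j\rho^{a-p}\widetilde C[a,b]$ using Mehler-kernel estimates in $W$ and Gaussian moments of derivatives of $\widetilde r$ in $X$.
\item Near $\theta=\pi/2$, where $P_i(\mathrm{i})=\mathrm{i}$, continue the Gaussian density analytically to complex correlations with $\Re((1\pm P_i)^{-1})>0$ and use a score-function bound.
\end{enumerate}
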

\begin{proof}
Define
\begin{equation}\label{heavy:cap-definition}
 \begin{aligned}
 \chi(u)&:=1-462u^6+1980u^7-3465u^8+3080u^9-1386u^{10}+252u^{11},\\
 \widetilde r(x)&:=
 \begin{cases}
 r(x),&|x|\le27/4,\\
 {\sign}(x)\displaystyle\left(r(27/4)+
 \int_0^{\min\{|x|-27/4,1/4\}}r'(27/4+u)\chi(4u)\,du\right),&|x|>27/4.
 \end{cases}
 \end{aligned}
\end{equation}
For $i\in\{0,15,16\}$ and real $t\in[-1,1]$, let
\[
 \widetilde F(w,x):={\sign}\!\left(w+\sum_{j=1}^3\widetilde r(x_j)\right),
 \qquad \widetilde H_i(t):=\E[\widetilde F(W,X)\widetilde F(W',X')],
\]
where $(W,W'),(X_j,X'_j)$ are independent centered jointly Gaussian
pairs with unit marginal variances and
$\E[WW']=P_i(t)$, $\E[X_jX'_j]=-t$, $1\le j\le3$.
Set $\mathcal D:=t\,d/dt$ and
$\mathbb T:=\{t\in\C:|t|=1\}$, parametrized by $t=e^{{\mathrm{i}}\theta}$ with measure $d\theta/(2\pi)$.
\begin{heavyclaim}\label{heavy:claim-perturbation}
The function $\widetilde r$ is bounded and $C^6$, with
$\widetilde r^{(j)}(x)=0$ for $|x|>7$, $1\le j\le6$, and
\begin{equation}\label{heavy:weighted-perturbation}
 \sum_{m\ge0}\left|[t^m]\sum_{i\in\{0,15,16\}}
 \lambda_i(H_i-\widetilde H_i)\right|
 \le4\sqrt{{\Pr}(F\ne\widetilde F)}
 \sum_{i\in\{0,15,16\}}\lambda_i
 <8.769183\cdot10^{-7}.
\end{equation}
\end{heavyclaim}
\begin{heavyclaim}\label{heavy:claim-circle}
The Taylor series of each $\widetilde H_i$ converges absolutely and uniformly on $|t|\le1$,
its restriction to $\mathbb T$ is $C^4$, and
$d^4\widetilde H_i(e^{{\mathrm{i}}\theta})/d\theta^4
=\mathcal D^4\widetilde H_i(e^{{\mathrm{i}}\theta})$. Moreover,
\begin{equation}\label{heavy:combined-numerical}
 \frac{\left\|\mathcal D^4\sum_{i\in\{0,15,16\}}
 \lambda_i\widetilde H_i\right\|_{{L_2}(\mathbb T)}}{\sqrt{14\cdot681^7}}
 +4\sqrt{{\Pr}(F\ne\widetilde F)}
 \sum_{i\in\{0,15,16\}}\lambda_i
 <0.000007875733380418.
\end{equation}
\end{heavyclaim}
By Claim~\ref{heavy:claim-circle}, four integrations by parts and
Parseval give
\[
 \sum_{m\ge0}m^8\left|[t^m]\sum_{i\in\{0,15,16\}}
 \lambda_i\widetilde H_i\right|^2
 =\left\|\mathcal D^4\sum_{i\in\{0,15,16\}}
 \lambda_i\widetilde H_i\right\|_{{L_2}(\mathbb T)}^2.
\]
Oddness of $\widetilde r$ and \cref{root:odd-hermite} imply that
the even coefficients vanish. Thus Cauchy--Schwarz and
\[
 \sum_{\substack{m>681\\m\text{ odd}}}m^{-8}
 \le\frac12\int_{681}^\infty x^{-8}\,dx=(14\cdot681^7)^{-1}
\]
give
\[
 \sum_{m>681}\left|[t^m]\sum_{i\in\{0,15,16\}}
 \lambda_i\widetilde H_i\right|
 \le\frac{\left\|\mathcal D^4\sum_{i\in\{0,15,16\}}
 \lambda_i\widetilde H_i\right\|_{{L_2}(\mathbb T)}}{\sqrt{14\cdot681^7}}.
\]
Adding \cref{heavy:weighted-perturbation} and using
\cref{heavy:combined-numerical} proves the lemma.
\end{proof}

\begin{proof}[Proof of Claim~\ref{heavy:claim-perturbation}]
We use the inequalities
\begin{equation}\label{heavy:profile-checks}
 r(27/4)>7,\qquad r^{(j)}(27/4)>0\quad(1\le j\le15),\qquad
 |r(x)|<33/20\quad(|x|\le7/2).
\end{equation}
For $s\ge0$, \cref{heavy:cap-definition} and
$0\le\chi(u)\le1$ for $0\le u\le1$ give
\[
 \begin{aligned}
 r'(27/4+s)&=\sum_{j=1}^{15}\frac{r^{(j)}(27/4)}{(j-1)!}s^{j-1}>0,\\
 \min\{r(27/4+s),\widetilde r(27/4+s)\}&\ge r(27/4)>7.
 \end{aligned}
\]
Therefore
\[
 \begin{aligned}
 \max_j|X_j|\le27/4&\implies F=\widetilde F,\\
 |X_i|>27/4,\quad\max_{j\ne i}|X_j|\le7/2,\quad |W|\le37/10
 &\implies F=\widetilde F={\sign}(X_i),
 \end{aligned}
\]
because $7=2(33/20)+37/10$. Hence
\[
 \{F\ne\widetilde F\}\subseteq
 \bigcup_{i\ne j}\{|X_i|>27/4,\ |X_j|>7/2\}
 \cup\bigcup_i\{|X_i|>27/4,\ |W|>37/10\},
 \qquad 1\le i,j\le3.
\]
Independence and the union bound give
\begin{equation}\label{heavy:disagreement-bound}
 {\Pr}(F\ne\widetilde F)
 \le6\operatorname{erfc}\!\left(\frac{27}{4\sqrt2}\right)
 \operatorname{erfc}\!\left(\frac7{2\sqrt2}\right)
 +3\operatorname{erfc}\!\left(\frac{27}{4\sqrt2}\right)
 \operatorname{erfc}\!\left(\frac{37}{10\sqrt2}\right).
\end{equation}
Parseval and \cref{head:square-difference} yield
\[
 \begin{aligned}
 &\sum_{a\ge0,\,\beta\in\{0,1,\ldots\}^3}
 \left|\bigl(\E[F\psi_a(W)\psi_\beta(X)]\bigr)^2
 -\bigl(\E[\widetilde F\psi_a(W)\psi_\beta(X)]\bigr)^2\right|\\
 &\qquad\le\|F-\widetilde F\|_2\|F+\widetilde F\|_2
 \le4\sqrt{{\Pr}(F\ne\widetilde F)}.
 \end{aligned}
\]
Together with \cref{root:coefficient-majorant} and \cref{eq:correlation},
this proves \cref{heavy:weighted-perturbation}; the numerical evaluation
of \cref{heavy:disagreement-bound} uses interval rounding at $256$ bits.

The conclusions
\[
 0\le\chi(u)\le1\quad(0\le u\le1),\qquad
 \widetilde r\in C^6(\R)
\]
follow from
\[
 \chi(0)=1,\quad\chi(1)=0,\quad\chi'(u)=-2772u^5(1-u)^5,\quad
 \chi^{(j)}(0)=\chi^{(j)}(1)=0\quad(1\le j\le5).
\]
For $c=-7/2+(2k+1)/32$, $0\le k<112$, the Taylor bounds
\[
 |x-c|\le1/32\implies
 |r(x)|\le|r(c)|+\sum_{j=1}^{15}\frac{|r^{(j)}(c)|}{j!32^j}<33/20
\]
and the values $r(27/4),r^{(j)}(27/4)$, $1\le j\le15$, verify
\cref{heavy:profile-checks} with interval rounding at $256$ bits.
The definition of $\widetilde r$ gives boundedness and vanishing
derivatives outside $[-7,7]$.
\end{proof}

\begin{proof}[Proof of Claim~\ref{heavy:claim-circle}]
To bound $|\mathcal D^4\widetilde H_i(t)|$ when $|P_i(t)|<1$, define
\[
 \widetilde C[a,b]:=\sum_{|\beta|=b}
 \bigl(\E[\widetilde F(W,X)\psi_a(W)\psi_\beta(X)]\bigr)^2
 \qquad(a,b\ge0).
\]
For $\alpha\in\{0,1,\ldots\}^3$, write
$\alpha!:=\prod_i\alpha_i!$ and
$\partial^\alpha:=\partial_{x_1}^{\alpha_1}\partial_{x_2}^{\alpha_2}
\partial_{x_3}^{\alpha_3}$; $[z^\alpha]$ denotes the coefficient of
$\prod_i z_i^{\alpha_i}$. Derivatives at $X$ are evaluated at $x=X$. Set
\begin{equation}\label{heavy:chain-rule}
 B_{\alpha,l}(x):=\frac{\alpha!}{l!}[z^\alpha]
 \left(\sum_{i=1}^3\sum_{h=1}^4
 \frac{\widetilde r^{(h)}(x_i)}{h!}z_i^h\right)^l
 \qquad(1\le l\le|\alpha|\le4).
\end{equation}
For $1\le l\le j\le4$, set
\begin{equation}\label{heavy:T-definition}
 T_{j,l}:=\sum_{|\alpha|=j}\frac{j!}{\alpha!}\E[B_{\alpha,l}(X)^2],
 \qquad
 \begin{array}{c|rrrr}
 j\backslash l&1&2&3&4\\\hline
 1&0.099&&&\\
 2&0.221&0.335&&\\
 3&0.36&3.38&4.88&\\
 4&1.11&24.3&124&110
 \end{array}
\end{equation}
Use $(n)_0:=1$, $(n)_j:=n(n-1)\cdots(n-j+1)$ for $1\le j\le n$,
$(n)_j:=0$ for $j>n\ge0$, and $(-1)!!:=1$.
We set $0^0=1$.
\begin{heavyclaim}\label{heavy:claim-endpoint}
Each entry in \cref{heavy:T-definition} is a strict upper bound for $T_{j,l}$, and
\[
 \widetilde C[a,b]\ge0,\qquad\sum_{a,b}\widetilde C[a,b]=1,\qquad
 \widetilde H_i(t)=\sum_{a,b}\widetilde C[a,b]P_i(t)^a(-t)^b.
\]
This series defines the extension of $\widetilde H_i$ to complex $|t|\le1$.
For nonnegative integers $p,j$ and $0\le\rho<1$,
\begin{equation}\label{heavy:M-estimates}
 \begin{aligned}
 \sum_{a\ge p,b\ge0}(a)_p\rho^{a-p}\widetilde C[a,b]
 &\le\frac{2(2p-3)!!}{\pi(1-\rho^2)^{p-1/2}}&&(1\le p\le4),\\
 \sum_{a\ge p,b\ge j}(a)_p(b)_j\rho^{a-p}\widetilde C[a,b]
 &\le\left(\sum_{l=1}^j
 \sqrt{\frac{2T_{j,l}(2p+2l-3)!!}
 {\pi(1-\rho^2)^{p+l-1/2}}}\right)^2&&(j\ge1,\ p+j\le4).
 \end{aligned}
\end{equation}
For $i\in\{0,15,16\}$, $|t|=1$ and $|P_i(t)|<1$,
\begin{equation}\label{heavy:D4-endpoint}
 \begin{aligned}
 |\mathcal D^4\widetilde H_i(t)|\le{}&24
 \sum_{\substack{p,j\ge0\\1\le p+j\le4}}\frac1{p!j!}
 [z^4]\left(\sum_{h=1}^4\frac{|\mathcal D^hP_i(t)|}{h!}z^h\right)^p
 \left(\sum_{h=1}^4\frac{z^h}{h!}\right)^j\\
 &\quad\cdot\sum_{a\ge p,b\ge j}
 (a)_p(b)_j|P_i(t)|^{a-p}\widetilde C[a,b].
 \end{aligned}
\end{equation}
The series and its first four $\mathcal D$-derivatives converge uniformly
on $\{t:|t|\le1,\ |P_i(t)|\le\rho\}$ for every $0\le\rho<1$.
\end{heavyclaim}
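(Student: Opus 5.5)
The plan splits Claim~\ref{heavy:claim-endpoint} into its three parts: the structural identities for $\widetilde C$, the moment estimates \cref{heavy:M-estimates}, and the chain-rule bound \cref{heavy:D4-endpoint}. The $T_{j,l}$ entries are then a finite numerical verification.

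\textbf{Structural identities.} Since $|\widetilde F|=1$, Parseval in the variables $(W,X)$ gives $\widetilde C[a,b]\ge0$ and $\sum_{a,b}\widetilde C[a,b]=\E\widetilde F^2=1$. The identity $\widetilde H_i(t)=\sum_{a,b}\widetilde C[a,b]P_i(t)^a(-t)^b$ follows exactly as in \cref{root:additive-kernel}, from the one-coordinate noise identity \cref{eq:correlation} applied to $W$ with correlation $P_i(t)$ and to each $X_j$ with correlation $-t$. Absolute convergence for complex $|t|\le1$ follows from \cref{root:coefficient-majorant}, which gives a majorant series with sum at most $1$.

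\textbf{Moment estimates.} The sum $\sum_{a\ge p}(a)_p\rho^{a-p}\widetilde C[a,b]$ is a Gaussian-derivative quantity: $(a)_p$ weights arise from $p$ derivatives in the $W$-direction of the Ornstein--Uhlenbeck semigroup. Concretely, $\sum_{a,b}(a)_p\rho^{a-p}\widetilde C[a,b]$ equals $\E[\partial_W^p(T_{\sqrt\rho}\widetilde F)\cdot\partial_W^p(T_{\sqrt\rho}\widetilde F)]$ up to normalization, where $T$ acts in $W$ only. Conditioning on $X$, $\widetilde F$ is a sign function of $W+v$. Hence $\partial_W T_{\sqrt\rho}$ of it is a Gaussian density of variance $1-\rho$, and its $p$-th derivative has $L_2$ norm computable in closed form via $\int\phi_\sigma^{(p-1)}(w)^2\,dw$. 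This should produce exactly $\frac{2(2p-3)!!}{\pi(1-\rho^2)^{p-1/2}}$, in the spirit of \cref{head:smoothed-sign}.

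For the mixed bound with $j\ge1$, I would additionally differentiate $j$ times in $X$, using $(b)_j$ weights. The Fa\`a di Bruno formula \cref{heavy:chain-rule} expresses $\partial_x^\alpha\widetilde F$ through $\widetilde r^{(h)}$ times $l$ further $W$-derivatives of the sign. This gives a sum over $l$ of terms of the form $B_{\alpha,l}(X)$ times a $(p+l)$-th derivative of a smoothed sign. Applying Minkowski over $l$, then Cauchy--Schwarz with the independence of $W$ and $X$, yields the squared sum of square roots, with $T_{j,l}$ collecting $\E B_{\alpha,l}^2$. The $C^6$ smoothness and compact support of $\widetilde r^{(j)}$ from Claim~\ref{heavy:claim-perturbation} justify the integrations by parts.

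\textbf{Chain-rule bound and convergence.} For \cref{heavy:D4-endpoint}, I would apply $\mathcal D^4$ termwise to $P_i(t)^a(-t)^b$. Each power is a composition, so Fa\`a di Bruno with $\mathcal D$ gives $\mathcal D^4(P^a(-t)^b)$ as a sum over $p+j\le4$ of $(a)_p(b)_j P^{a-p}(-t)^{b-j}$ times the coefficient of $z^4$ in the displayed generating functions; note $\mathcal D^h t=t$, so $|\mathcal D^h(-t)|=1$. Taking absolute values with $|t|=1$ gives the bound. Uniform convergence on $\{|P_i|\le\rho\}$ follows from \cref{heavy:M-estimates}, which dominates the differentiated series by convergent constants.

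The $T_{j,l}$ entries are finite Gaussian integrals of polynomials in $\widetilde r^{(h)}(X_i)$, which I would check by interval arithmetic. I expect the main obstacle to be the mixed moment estimate: getting the precise constant $(2p+2l-3)!!$ requires carefully matching the Hermite weights $(b)_j$ to $X$-derivatives, that is, the identity $\sum(b)_j\|\Pi_b\cdot\|^2=\E\|\nabla^j\cdot\|^2$ with the $j!/\alpha!$ multinomial normalization.
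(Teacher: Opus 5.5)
Your overall architecture matches the paper's. You use Parseval and the one-coordinate correlation identity for the structural facts. For the mixed estimate you use the chain rule $\partial^\alpha q_a(\sum_i\widetilde r(x_i))=\sum_l B_{\alpha,l}(x)\,q_a^{(l)}(\sum_i\widetilde r(x_i))$, Gaussian integration by parts, the identity $\sum_{|\alpha|=j}\frac{j!}{\alpha!}\|\partial^\alpha g\|_2^2=\sum_b(b)_j\|\Pi_b g\|_2^2$, and Minkowski over $l$. For \cref{heavy:D4-endpoint} you use $\mathcal D^4f(t)=24[z^4]f(te^z)$ with binomial expansions of $P_i(te^z)^a$ and $e^{bz}$.

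The gap is in the one genuinely analytic input, the uniform Gram bound
\[
 \sup_{s\in\R}\sum_{a\ge p}(a)_p\rho^{a-p}|q_a^{(l)}(s)|^2\le\frac{2(2p+2l-3)!!}{\pi(1-\rho^2)^{p+l-1/2}}.
\]
You propose to get this as a closed-form Lebesgue integral $\int\phi_\sigma^{(p-1)}(w)^2\,dw$ and expect it to equal the constant exactly. That is the wrong quantity. The conditional norm is taken in $L_2(\gamma_1)$ in $W$, so it is $\rho^{-p}\int(\partial_w^pT_{\sqrt\rho}\sign(w+v))^2\phi(w)\,dw$. This is not translation invariant and depends on $v=\sum_i\widetilde r(X_i)$. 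By Mehler's formula it equals $4\partial_u^n\partial_w^n\phi_\rho(u,w)$ at $u=w=v$, where $n=p+l-1$ and $\phi_\rho$ is the $\rho$-correlated bivariate normal density. For $n=1$ this is $\frac{2}{\pi\sqrt{1-\rho^2}}e^{-v^2/(1+\rho)}\bigl(\frac{v^2}{(1+\rho)^2}+\frac{\rho}{1-\rho^2}\bigr)$, whose supremum over $v$ is strictly below $\frac{2}{\pi(1-\rho^2)^{3/2}}$ for $\rho<1$. So nothing ``produces exactly'' the constant. What you need is a bound uniform in $v$, and your plan has no argument for it. The paper gets it by Fourier inversion, $4|\partial_u^n\partial_w^n\phi_\rho|\le\pi^{-2}\int_{\R^2}|\xi\eta|^ne^{-(\xi^2+2\rho\xi\eta+\eta^2)/2}\,d\xi\,d\eta$. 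It then uses Cauchy--Schwarz to replace $|\xi\eta|^n$ by $\xi^{2n}$, and the Gaussian moment $(2n-1)!!/(1-\rho^2)^n$ under the inverse covariance.

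The same misconception affects your mixed estimate. The conditional $W$-factor depends on $X$ through $v$, and so does $B_{\alpha,l}(X)$. So ``Cauchy--Schwarz with the independence of $W$ and $X$'' is not the right step; it would produce $\E B_{\alpha,l}^4$ rather than $T_{j,l}$. The correct step is to bound the $W$-factor by its supremum over $v$, then integrate $\sum_{|\alpha|=j}\frac{j!}{\alpha!}B_{\alpha,l}(X)^2$. This is exactly why the uniform-in-$v$ Gram bound is indispensable.

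Two smaller points. Normalizing by $\rho^p$ degenerates at $\rho=0$, so work with $\sum_{a\ge p}(a)_p\rho^{a-p}q_a(v)^2$ directly. Also, $q_a^{(l)}=\sqrt{(a+l)!/a!}\,q_{a+l}$ reduces the case $l\ge1$ to $l=0$ with $p$ replaced by $p+l$.
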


For $z\in\mathbb C$ with $-1<\Re z<1$, the Gaussian density in coordinates
$u=(x+y)/\sqrt2$, $v=(x-y)/\sqrt2$ continues to
\begin{equation}\label{heavy:complex-density}
 \frac{\exp\bigl(-u^2/(2(1+z))-v^2/(2(1-z))\bigr)}
 {2\pi\sqrt{(1+z)(1-z)}},
\end{equation}
with the analytic square-root branch positive for real $-1<z<1$.
For $t=e^{{\mathrm{i}}\theta}$, $0<\theta<\pi$, and
$\Re((1\pm P_i(t))^{-1})>0$, multiply the four densities
\cref{heavy:complex-density} at correlations $P_i(t),-t,-t,-t$.
For $\sigma=\pm1$, under the probability law whose density is the
absolute value of this product divided by its integral over $\R^8$, set
\[
 \begin{aligned}
 Y_{0,\sigma}&:=\Re((1+\sigma P_i(t))^{-1})\frac{(W+\sigma W')^2}{2},\\
 Y_{1,\sigma}&:=\frac14\sum_{j=1}^3(X_j+\sigma X'_j)^2,\qquad
 Y:=(Y_{0,+},Y_{0,-},Y_{1,+},Y_{1,-}).
 \end{aligned}
\]
For $1\le k\le4$, define
\begin{equation}\label{heavy:ell}
 \begin{aligned}
 \ell_k(Y):=-\frac{k!}{2}[\epsilon^k]\sum_{\sigma=\pm1}\biggl(
 &\ln(1+\sigma P_i(te^\epsilon))+3\ln(1-\sigma te^\epsilon)\\
 &+\frac{Y_{0,\sigma}}
 {\Re((1+\sigma P_i(t))^{-1})(1+\sigma P_i(te^\epsilon))}
 +\frac{2Y_{1,\sigma}}{1-\sigma te^\epsilon}\biggr),
 \end{aligned}
\end{equation}
holding $Y$ and $\Re((1+\sigma P_i(t))^{-1})$ constant in $\epsilon$.
\begin{heavyclaim}\label{heavy:claim-arc}
For $i\in\{0,15,16\}$ and every compact interval $I\subset(0,\pi)$ with
$\inf_{\theta\in I}\Re((1\pm P_i(e^{{\mathrm{i}}\theta}))^{-1})>0$,
$\widetilde H_i$ continues analytically to a neighborhood of the arc.
For $\theta\in I$ and $t=e^{{\mathrm{i}}\theta}$,
\begin{equation}\label{heavy:score-bound}
 |\mathcal D^4\widetilde H_i(e^{{\mathrm{i}}\theta})|\le
 \frac{(2/\sqrt{2\sin\theta})^3
 \sqrt{\E|\ell_1^4+6\ell_1^2\ell_2+3\ell_2^2+4\ell_1\ell_3+\ell_4|^2}}
 {\sqrt{|1-P_i(t)^2|\Re((1+P_i(t))^{-1})\Re((1-P_i(t))^{-1})}}.
\end{equation}
The coordinates of $Y$ are independent, with
\[
 \E e^{sY_{0,\sigma}}=(1-2s)^{-1/2},\qquad
 \E e^{sY_{1,\sigma}}=(1-2s)^{-3/2}
 \quad(\sigma=\pm1,\ s<1/2).
\]
\end{heavyclaim}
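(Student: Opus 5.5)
The natural approach to Claim~\ref{heavy:claim-arc} is to represent $\widetilde H_i(t)$ directly as an integral of $\widetilde F(w,x)\widetilde F(w',x')$ against the product of the four bivariate Gaussian densities \cref{heavy:complex-density}, at correlations $z_0=P_i(t)$ and $z_1=z_2=z_3=-t$. This formula holds for real $t\in(-1,1)$, and we continue it in $t$. On the arc, the assumption $\inf_I\Re((1\pm P_i)^{-1})>0$, together with $\Re((1\pm t)^{-1})=\tfrac12>0$ for $|t|=1$, $t\ne\pm1$, makes each Gaussian factor integrable. Since $\widetilde F$ is bounded, differentiation under the integral is justified locally uniformly near the arc by dominated convergence. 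This gives the analytic continuation. It agrees with the Taylor series from Claim~\ref{heavy:claim-endpoint} on the real segment by the Mehler/Hermite expansion, hence on the connected region where both are defined, by the identity theorem.

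For the bound \cref{heavy:score-bound}, write the product density as $p_\epsilon$, with $t$ replaced by $te^\epsilon$. Then $\mathcal D^k$ at $t$ equals $\partial_\epsilon^k$ at $\epsilon=0$. Moreover, $\partial_\epsilon^4 p_\epsilon/p_0$ is the complete Bell polynomial $\ell_1^4+6\ell_1^2\ell_2+3\ell_2^2+4\ell_1\ell_3+\ell_4$ in the cumulant-type quantities $\ell_k=\partial_\epsilon^k\ln p_\epsilon$. In the rotated coordinates $u=(w+w')/\sqrt2$, $v=(w-w')/\sqrt2$, this log-density is exactly $-\tfrac12\sum_\sigma(\ln(1+\sigma z)+u_\sigma^2/(1+\sigma z))$. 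We rewrite it in terms of $Y_{0,\sigma}$ and $Y_{1,\sigma}$ to obtain \cref{heavy:ell}; the normalization $\Re((1+\sigma P_i)^{-1})$ is chosen so that $Y$ becomes a $\chi^2$-type variable under $|p_0|$. Then $|\mathcal D^4\widetilde H_i|\le\int|\widetilde F\widetilde F|\,|p_0|\,|\text{Bell}|\le Z\,\sqrt{\E|\text{Bell}|^2}$, using $|\widetilde F|\le1$ and Cauchy--Schwarz under the normalized law $|p_0|/Z$, where $Z=\int|p_0|$.

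Computing $Z$ is a product of Gaussian integrals. For one pair at correlation $z$, $\int|p|$ equals $|1-z^2|^{-1/2}\cdot\bigl(\Re(1+z)^{-1}\Re(1-z)^{-1}\bigr)^{-1/2}$, since $|e^{-u^2/(2(1+z))}|=e^{-u^2\Re(1+z)^{-1}/2}$. For $z=-t$ with $|t|=1$, we have $|1-t^2|=2\sin\theta$ and $\Re(1\pm t)^{-1}=\tfrac12$, which gives the factor $2/\sqrt{2\sin\theta}$ per coordinate, cubed. Under $|p_0|/Z$, the variables $u_\sigma,v_\sigma$ are independent centered Gaussians with variances $1/\Re((1+\sigma z)^{-1})$. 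Hence $Y_{0,\sigma}$ is $\tfrac12\chi^2_1$-scaled, with moment generating function $(1-2s)^{-1/2}$. For the $x$-coordinates, $\Re(1\pm t)^{-1}=\tfrac12$ gives variance $2$, so $\tfrac14(X_j+\sigma X_j')^2=\tfrac12(\text{variance-}2\text{ Gaussian})^2/2$; summed over $j$ this yields $(1-2s)^{-3/2}$. Independence follows because the normalized density factorizes over the eight rotated coordinates.

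I expect the main obstacle to be the bookkeeping that matches \cref{heavy:ell} exactly, namely the factor $2Y_{1,\sigma}$ and the sign conventions of $-\tfrac{k!}{2}[\epsilon^k]$. A second difficulty is justifying that the normalized law and the $\epsilon$-derivatives interchange with integration near the arc endpoints. I would handle this by noting that all $\ell_k$ are polynomials in $Y$, which has exponential moments.
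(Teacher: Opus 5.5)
Your proposal is correct and follows essentially the paper's proof. The shared steps are: the integral representation against the product of continued bivariate densities, with a locally uniform Gaussian envelope giving the analytic continuation; $\mathcal D^k=\partial_\epsilon^k$ under $t\mapsto te^\epsilon$, so that $\partial_\epsilon^4p_\epsilon/p_0$ is the Bell polynomial in the $\ell_k$; Cauchy--Schwarz under the normalized law $|p_0|/Z$ using $|\widetilde F|\le1$; and the factorized single-pair mass formula, which gives $Z$, the independence of the coordinates of $Y$, and their moment generating functions. There is one small bookkeeping slip: under that law, $Y_{0,\sigma}$ and each $\tfrac14(X_j+\sigma X_j')^2=\tfrac12\bigl((X_j+\sigma X_j')/\sqrt2\bigr)^2$ are squares of standard Gaussians, not $\tfrac12\chi^2_1$ as your intermediate expressions suggest, and this is exactly what your stated moment generating functions $(1-2s)^{-1/2}$ and $(1-2s)^{-3/2}$ require.
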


Use the $6284$ closed intervals
\[
 I\in\left\{\left[\frac{k}{4000},\frac{k+1}{4000}\right]:0\le k\le6282\right\}
 \cup\left\{\left[\frac{6283}{4000},\frac\pi2\right]\right\}.
\]
The interval computations specified at the end of this subsection give, for every
$(i,I)$, an upper bound
\[
 \sup_{\theta\in I}|\mathcal D^4\widetilde H_i(e^{{\mathrm{i}}\theta})|
 \le U_{i,I}<\infty,
\]
with at least one of
Claims~\ref{heavy:claim-endpoint} and~\ref{heavy:claim-arc}
applicable throughout $I$. Before rounding to the displayed decimal
bounds, the interval computations give
\begin{equation}\label{heavy:panel-numerics}
 \begin{aligned}
 \left(\frac2\pi\sum_I|I|
 \left(\sum_{i\in\{0,15,16\}}\lambda_iU_{i,I}\right)^2\right)^{1/2}
 &<215825.836929,\\
 \frac{\left(\frac2\pi\sum_I|I|
 \left(\sum_{i\in\{0,15,16\}}\lambda_iU_{i,I}\right)^2\right)^{1/2}}
 {\sqrt{14\cdot681^7}}
 +4\sqrt{{\Pr}(F\ne\widetilde F)}\sum_{i\in\{0,15,16\}}\lambda_i
 &<0.000007875733380418.
 \end{aligned}
\end{equation}
Equation~\eqref{root:normalization} and Claim~\ref{heavy:claim-endpoint} give
\[
 \sum_{m\ge0}|[t^m]\widetilde H_i|
 \le\sum_{a,b}\widetilde C[a,b]
 \left(\sum_{m\ge1}|[t^m]P_i|\right)^a\le1,
\]
so convergence is uniform on the closed disk. The normalized polynomials
$P_i$ specified by Table~\ref{root:recipe} satisfy
$[t]P_i>0$, $[t^3]P_i<0$, $[t]P_i-[t^3]P_i=1$, whence
\[
 |P_i(e^{{\mathrm{i}}\theta})|^2
 =1+4([t]P_i)([t^3]P_i)\cos^2\theta<1\quad(e^{{\mathrm{i}}\theta}\ne\pm {\mathrm{i}}),
 \qquad P_i({\mathrm{i}})={\mathrm{i}},\quad\Re((1\pm P_i({\mathrm{i}}))^{-1})=\tfrac12.
\]
Claim~\ref{heavy:claim-endpoint} gives continuous derivatives through
order four away from $\pm {\mathrm{i}}$; Claim~\ref{heavy:claim-arc} and conjugation
give analytic continuation near $\pm {\mathrm{i}}$. Since
\[
 |P_i(t)|\le([t]P_i-[t^3]P_i)|t|=|t|<1\qquad(|t|<1),
\]
the integral against the product Gaussian density is holomorphic on the
open unit disk. It agrees there with the Hermite series by
\cref{eq:correlation} on the real interval and the identity theorem;
the arc integral continues this same function.
Thus $\widetilde H_i(e^{{\mathrm{i}}\theta})$ is $C^4$ and
$d^4/d\theta^4=({\mathrm{i}}\mathcal D)^4=\mathcal D^4$.
Oddness and real coefficients give
$\widetilde H_i(-t)=-\widetilde H_i(t)$ and
$\widetilde H_i(\bar t)=\overline{\widetilde H_i(t)}$, hence
\begin{equation}\label{heavy:panel-sum}
 \begin{aligned}
 \left\|\mathcal D^4\sum_{i\in\{0,15,16\}}\lambda_i\widetilde H_i
 \right\|_{{L_2}(\mathbb T)}^2
 &=\frac2\pi\int_0^{\pi/2}
 \left|\sum_{i\in\{0,15,16\}}\lambda_i
 \mathcal D^4\widetilde H_i(e^{{\mathrm{i}}\theta})\right|^2d\theta\\
 &\le\frac2\pi\sum_I|I|
 \left(\sum_{i\in\{0,15,16\}}\lambda_iU_{i,I}\right)^2.
 \end{aligned}
\end{equation}
Thus \cref{heavy:panel-numerics} gives
\begin{equation}\label{heavy:circle-numerical}
 \left\|\mathcal D^4\sum_{i\in\{0,15,16\}}\lambda_i\widetilde H_i
 \right\|_{{L_2}(\mathbb T)}<215825.836929
\end{equation}
and \cref{heavy:combined-numerical}.
\end{proof}

\begin{proof}[Proof of Claim~\ref{heavy:claim-endpoint}]
\cref{heavy:M-estimates} implies \cref{heavy:D4-endpoint} by
the triangle inequality and the expansions
\[
 \begin{aligned}
 \mathcal D^4(P_i(t)^a(-t)^b)
 &=24[z^4]P_i(te^z)^a(-t)^be^{bz},\\
 P_i(te^z)^a
 &=\sum_{p=0}^a\frac{(a)_p}{p!}P_i(t)^{a-p}(P_i(te^z)-P_i(t))^p,\\
 P_i(te^z)-P_i(t)&=\sum_{h=1}^4\frac{\mathcal D^hP_i(t)}{h!}z^h+O(z^5),\\
 e^{bz}&=\sum_{j=0}^b\frac{(b)_j}{j!}(e^z-1)^j,\qquad
 \mathcal D^hP_i(t)=([t]P_i)t+3^h([t^3]P_i)t^3.
 \end{aligned}
\]
For $|z|\le1$ and $1\le h\le4$,
\[
 |\mathcal D^hP_i(z)|\le[t]P_i-3^h[t^3]P_i.
\]
The expansions and \cref{heavy:M-estimates} therefore give, for
$0\le j\le4$ and $0\le\rho<1$,
\[
 \sum_{a,b}\widetilde C[a,b]
 \sup_{\substack{|z|\le1\\|P_i(z)|\le\rho}}
 \left|\left.\mathcal D^j(P_i(t)^a(-t)^b)\right|_{t=z}\right|<\infty,
\]
proving uniform convergence, in particular near $\pm1$ since
$|P_i(\pm1)|<1$.

To prove \cref{heavy:M-estimates}, conditioning on $X$ gives
\[
 \widetilde C[a,b]=\sum_{|\beta|=b}
 \left(\E\left[q_a\!\left(\sum_i\widetilde r(X_i)\right)
 \psi_\beta(X)\right]\right)^2,
\]
where $q_a$ is defined in \cref{head:smoothed-sign}.
Parseval and \cref{eq:correlation} give $\sum_{a,b}\widetilde C[a,b]=1$
and the expansion of $\widetilde H_i$ in Claim~\ref{heavy:claim-endpoint}.
For nonnegative integers $p,l$ with $p+l\ge1$, $s\in\mathbb R$, and
$0\le\rho<1$, we use
\begin{equation}\label{heavy:gram-bound}
 \sum_{a\ge p}(a)_p\rho^{a-p}|q_a^{(l)}(s)|^2
 \le\frac{2(2p+2l-3)!!}{\pi(1-\rho^2)^{p+l-1/2}}.
\end{equation}
For $j=0$, Parseval and \cref{heavy:gram-bound} with $l=0$ give
\[
 \begin{aligned}
 \sum_{a\ge p,b\ge0}(a)_p\rho^{a-p}\widetilde C[a,b]
 &=\E\!\left[\sum_{a\ge p}(a)_p\rho^{a-p}
 \left|q_a\!\left(\sum_i\widetilde r(X_i)\right)\right|^2\right]\\
 &\le\frac{2(2p-3)!!}{\pi(1-\rho^2)^{p-1/2}}
 \qquad(1\le p\le4).
 \end{aligned}
\]
For $a\ge0$, $|\alpha|=j\in\{1,2,3,4\}$, and
$\beta\in\{0,1,\ldots\}^3$, Taylor expansion and Gaussian integration
by parts using $(\phi\He_n)'=-\phi\He_{n+1}$ give
\[
 \begin{aligned}
 \partial^\alpha q_a\!\left(\sum_i\widetilde r(x_i)\right)
 &=\sum_{l=1}^{|\alpha|}B_{\alpha,l}(x)
 q_a^{(l)}\!\left(\sum_i\widetilde r(x_i)\right),\\
 \E\!\left[\partial^\alpha q_a\!\left(\sum_i\widetilde r(X_i)\right)
 \psi_\beta(X)\right]
 &=\sqrt{\frac{(\beta+\alpha)!}{\beta!}}\,
 \E\!\left[q_a\!\left(\sum_i\widetilde r(X_i)\right)
 \psi_{\beta+\alpha}(X)\right],\\
 \sum_{|\alpha|=j}\frac{j!}{\alpha!}
 \left\|\partial^\alpha q_a\!\left(\sum_i\widetilde r(X_i)\right)\right\|_2^2
 &=\sum_b(b)_j\widetilde C[a,b],
 \end{aligned}
\]
The last equality follows from Parseval and
$\sum_{|\alpha|=j}(j!/\alpha!)\prod_i(\beta_i)_{\alpha_i}=(|\beta|)_j$.
Boundedness of $\widetilde r,\ldots,\widetilde r^{(4)}$ and
\cref{head:smoothed-sign} justify integration by parts. The triangle inequality gives
\[
 \left(\sum_{a\ge p,b\ge j}(a)_p(b)_j\rho^{a-p}\widetilde C[a,b]\right)^{1/2}
 \le\sum_{l=1}^j\left(T_{j,l}\sup_{s\in\mathbb R}
 \sum_{a\ge p}(a)_p\rho^{a-p}|q_a^{(l)}(s)|^2\right)^{1/2}.
\]
Apply \cref{heavy:gram-bound}; finite sums and monotone convergence
justify summation over $a$.

For \cref{heavy:gram-bound}, with $u,v\in\mathbb R$ and $n=p+l-1$, use
\begin{equation}\label{heavy:gram-identity}
 \sum_{a\ge p}(a)_p\rho^{a-p}q_a^{(l)}(u)q_a^{(l)}(v)
 =4\partial_u^n\partial_v^n
 \frac{e^{-(u^2-2\rho uv+v^2)/(2(1-\rho^2))}}
 {2\pi\sqrt{1-\rho^2}}.
\end{equation}
Fourier inversion, Cauchy--Schwarz, and the Gaussian moment formula give
\[
 \begin{aligned}
 4\left|\partial_u^n\partial_v^n
 \frac{e^{-(u^2-2\rho uv+v^2)/(2(1-\rho^2))}}
 {2\pi\sqrt{1-\rho^2}}\right|
 &\le\frac1{\pi^2}\int_{\R^2}|\xi\eta|^n
 e^{-(\xi^2+2\rho\xi\eta+\eta^2)/2}\,d\xi\,d\eta\\
 &\le\frac1{\pi^2}\int_{\R^2}\xi^{2n}
 e^{-(\xi^2+2\rho\xi\eta+\eta^2)/2}\,d\xi\,d\eta
 =\frac{2(2n-1)!!}{\pi(1-\rho^2)^{n+1/2}}.
 \end{aligned}
\]
Setting $u=v=s$ proves \cref{heavy:gram-bound}. To prove
\cref{heavy:gram-identity}, use \cref{head:smoothed-sign},
$(\phi\He_j)'=-\phi\He_{j+1}$, and Mehler's identity:
\[
 \begin{aligned}
 \sum_{a\ge p}(a)_p\rho^{a-p}q_a^{(l)}(u)q_a^{(l)}(v)
 &=4\phi(u)\phi(v)\sum_{k\ge0}\frac{\rho^k}{k!}
 \He_{k+n}(u)\He_{k+n}(v)\\
 &=4\partial_u^n\partial_v^n
 \left(\phi(u)\phi(v)\sum_{k\ge0}\frac{\rho^k}{k!}\He_k(u)\He_k(v)\right).
 \end{aligned}
\]
At $\rho=0$, only $a=p$ remains. For $0<\rho<1$ and arbitrary
nonnegative integer shifts $p,l$, termwise differentiation is justified
locally uniformly in $u,v\in\mathbb R$ by
\[
 \begin{aligned}
 &\sum_{k\ge0}\frac{\rho^k}{k!}|\He_{k+p}(u)\He_{k+l}(v)|\\
 &\quad\le\left(\sup_{k\ge0}\frac{\sqrt{(k+p)!(k+l)!}}{k!}\rho^{k/2}\right)
 \left(\sum_{k\ge0}\rho^{k/2}\psi_{k+p}(u)^2\right)^{1/2}
 \left(\sum_{k\ge0}\rho^{k/2}\psi_{k+l}(v)^2\right)^{1/2}.
 \end{aligned}
\]
For fixed $p,l$ and $0<\rho<1$,
\[
 \frac{\sqrt{(k+p)!(k+l)!}}{k!}\rho^{k/2}
 \le(k+p+l+1)^{(p+l)/2}\rho^{k/2}\longrightarrow0\quad(k\to\infty).
\]
Mehler's identity with equal arguments gives
\[
 \sum_{k\ge0}\rho^{k/2}\psi_{k+p}(u)^2
 \le\frac{\rho^{-p/2}}{\sqrt{1-\rho}}
 \exp\!\left(\frac{\sqrt\rho\,u^2}{1+\sqrt\rho}\right).
\]
Apply the preceding sum inequality to tails. The displayed limit and
these bounds for $(p,u)$ and $(l,v)$ give uniform convergence on compact
sets of $u,v\in\R$.

Each monomial of $B_{\alpha,l}^2$ contains $2l\le8$ derivative factors.
Independence factors its expectation into one-dimensional moments. For each such moment,
with $k\in\{0,1,\ldots\}^4$ and $1\le|k|\le8$, parity and vanishing derivatives give
\[
 \begin{aligned}
 &\int_\R\prod_{h=1}^4(\widetilde r^{(h)}(x))^{k_h}\phi(x)\,dx\\
 &=\left(1+(-1)^{\sum_h(h+1)k_h}\right)
 \left(\int_0^{27/4}\prod_{h=1}^4(r^{(h)}(x))^{k_h}\phi(x)\,dx +\int_0^{1/4}\prod_{h=1}^4(\widetilde r^{(h)}(27/4+u))^{k_h}
 \phi(27/4+u)\,du\right).
 \end{aligned}
\]
The moment with $|k|=0$ is $1$. On each integration interval,
$\deg\widetilde r^{(h)}\le25$ for $1\le h\le4$, so each product has degree
at most $25|k|\le200$. For $a\in\R$, $l\ge0$, and integer $k\ge1$,
Gaussian integration gives
\begin{equation}\label{heavy:gaussian-moments}
 \begin{aligned}
 \int_0^l\phi(a+u)\,du
 &=\tfrac12[\operatorname{erfc}(a/\sqrt2)-\operatorname{erfc}((a+l)/\sqrt2)],\\
 \int_0^l u\phi(a+u)\,du
 &=\phi(a)-\phi(a+l)-a\int_0^l\phi(a+u)\,du,\\
 \int_0^l u^{k+1}\phi(a+u)\,du
 &=k\int_0^l u^{k-1}\phi(a+u)\,du
 -a\int_0^l u^k\phi(a+u)\,du-l^k\phi(a+l)\quad(k\ge1).
 \end{aligned}
\end{equation}
For $(a,l)=(0,27/4),(27/4,1/4)$, these recurrences and
\cref{heavy:cap-definition}, evaluated with interval rounding at $1536$
bits, give the strict bounds in \cref{heavy:T-definition}.
\end{proof}

\begin{proof}[Proof of Claim~\ref{heavy:claim-arc}]
For $\Re((1\pm z)^{-1})>0$, we use
\begin{equation}\label{heavy:mass-single}
 \int_{\R^2}\left|
 \frac{e^{-u^2/(2(1+z))-v^2/(2(1-z))}}
 {2\pi\sqrt{(1+z)(1-z)}}\right|\,du\,dv
 =\left(|1-z^2|\Re((1+z)^{-1})\Re((1-z)^{-1})\right)^{-1/2},
\end{equation}
and the coefficient identity
\[
 4![\epsilon^4]\exp\!\left(\sum_{k=1}^4\frac{\ell_k(Y)}{k!}\epsilon^k\right)
 =\ell_1^4+6\ell_1^2\ell_2+3\ell_2^2+4\ell_1\ell_3+\ell_4.
\]
The right-hand side equals the fourth derivative in $\epsilon$ at
$\epsilon=0$ of the product density at correlations
$P_i(te^\epsilon),-te^\epsilon,-te^\epsilon,-te^\epsilon$, divided by
that product density at $\epsilon=0$. The integration coordinates
$W,W',X,X'$ are held fixed. Since $|\widetilde F|\le1$, Cauchy--Schwarz and
\[
 \Re((1\pm e^{{\mathrm{i}}\theta})^{-1})=\tfrac12,\qquad
 |1-e^{2{\mathrm{i}}\theta}|=2\sin\theta\quad(0<\theta<\pi)
\]
give \cref{heavy:score-bound}. On a sufficiently small bounded complex
neighborhood of the compact arc, the closure satisfies
\[
 \begin{aligned}
 \Re((1\pm P_i(t))^{-1})
 &\ge\frac12\inf_{\theta\in I}\Re((1\pm P_i(e^{{\mathrm{i}}\theta}))^{-1})>0,\\
 \Re((1\pm t)^{-1})&\ge\frac14.
 \end{aligned}
\]
Each of the first four $\mathcal D$-derivatives of the product density
is that density times a polynomial of degree at most eight in
$W,W',X,X'$, with coefficients bounded on this neighborhood. The density
and these derivatives are therefore bounded by a constant multiple of
\[
 \begin{aligned}
 &\left(1+W^2+(W')^2+\sum_{j=1}^3(X_j^2+(X'_j)^2)\right)^4\\
 &\quad\cdot\exp\!\Biggl[-\frac14
 \min\left\{\frac12,\inf_{\substack{\theta\in I\\\sigma=\pm1}}
 \Re((1+\sigma P_i(e^{{\mathrm{i}}\theta}))^{-1})\right\}\cdot
 \left(W^2+(W')^2+\sum_{j=1}^3(X_j^2+(X'_j)^2)\right)\Biggr].
 \end{aligned}
\]
This integrable bound is independent of $t$ in that neighborhood.
Differentiation under the integral therefore gives the analytic
continuation of $\widetilde H_i$.

To verify \cref{heavy:mass-single}, integrate the absolute density
separately in $u,v$. After normalization,
\[
 \sqrt{\Re((1+z)^{-1})}u,\qquad\sqrt{\Re((1-z)^{-1})}v
\]
are independent standard Gaussians. Thus $Y_{0,\pm}$ are their individual squares and
$Y_{1,\pm}$ sums of three such squares, proving the stated independence
and moment-generating functions. The Gaussian change of variables is
made at $\epsilon=0$, after differentiation at fixed $W,W',X,X'$;
hence $Y$ and $\Re((1+\sigma P_i(t))^{-1})$ remain fixed as required in
\cref{heavy:ell}. The logarithm of the product density at correlations
$P_i(te^\epsilon),-te^\epsilon,-te^\epsilon,-te^\epsilon$
gives \cref{heavy:ell}. The expansions
\[
 \ln(1+u)=\sum_{h=1}^4\frac{(-1)^{h+1}}h u^h+O(u^5),\qquad
 (1+u)^{-1}=\sum_{h=0}^4(-u)^h+O(u^5)
\]
apply to the two arguments in \cref{heavy:ell} after factoring
$1+\sigma P_i(t)$ and $1-\sigma t$, respectively, with
\[
 u=\frac{\sigma(P_i(te^\epsilon)-P_i(t))}{1+\sigma P_i(t)}
 \quad\text{or}\quad
 u=\frac{-\sigma t(e^\epsilon-1)}{1-\sigma t}.
\]
To evaluate the expectation in \cref{heavy:score-bound}, in the rest
of this proof set
\[
 \alpha=(1/2,1/2,3/2,3/2),\qquad
 (Y_1,Y_2,Y_3,Y_4)=(Y_{0,+},Y_{0,-},Y_{1,+},Y_{1,-}).
\]
Define
\[
 Q_{\alpha_h,k}(y):=\sum_{j=0}^k\binom kj(-2)^{k-j}
 \left(\prod_{l=j}^{k-1}(\alpha_h+l)\right)y^j
 \qquad(1\le h\le4,\ k\ge0),
\]
with empty products one; $[y^k]Q_{\alpha_h,k}(y)=1$.
For nonnegative integers $j,k,n$, we use
\begin{equation}\label{heavy:laguerre}
 \begin{aligned}
 \E[Q_{\alpha_h,k}(Y_h)Q_{\alpha_h,j}(Y_h)]
 &=\mathbf1_{k=j}4^kk!\prod_{l=0}^{k-1}(\alpha_h+l),\\
 Y_h^n&=\sum_{k=0}^n2^{n-k}\binom nk
 \left(\prod_{l=k}^{n-1}(\alpha_h+l)\right)Q_{\alpha_h,k}(Y_h).
 \end{aligned}
\end{equation}
For $k\in\{0,1,\ldots\}^4$, $|k|\le4$, set
\[
 p_k:=\frac{\E\!\left[
 (\ell_1^4+6\ell_1^2\ell_2+3\ell_2^2+4\ell_1\ell_3+\ell_4)
 \prod_{h=1}^4Q_{\alpha_h,k_h}(Y_h)\right]}
 {\prod_{h=1}^4\left(4^{k_h}k_h!\prod_{j=0}^{k_h-1}(\alpha_h+j)\right)}.
\]
Since $\deg_Y\ell_j\le1$ for $1\le j\le4$, independence and
\cref{heavy:laguerre} give
\begin{equation}\label{heavy:score-norm}
 \E|\ell_1^4+6\ell_1^2\ell_2+3\ell_2^2+4\ell_1\ell_3+\ell_4|^2
 =\sum_{|k|\le4}|p_k|^2
 \prod_{h=1}^4\left(4^{k_h}k_h!\prod_{j=0}^{k_h-1}(\alpha_h+j)\right).
\end{equation}
For \cref{heavy:laguerre}, use
\[
 \sum_{k\ge0}Q_{\alpha_h,k}(y)\frac{z^k}{k!}
 =(1+2z)^{-\alpha_h}\exp\!\left(\frac{yz}{1+2z}\right)
 \qquad(|z|<1/2).
\]
The moment-generating functions in Claim~\ref{heavy:claim-arc} give
\[
 \E\!\left[
 \frac{\exp\!\left(Y_h\left(\frac{u}{1+2u}+\frac{v}{1+2v}\right)\right)}
 {(1+2u)^{\alpha_h}(1+2v)^{\alpha_h}}\right]
 =(1-4uv)^{-\alpha_h}.
\]
For $|u|,|v|<1/8$, the integrand is bounded by
$(4/3)^{2\alpha_h}e^{Y_h/3}$, with finite expectation
$(4/3)^{2\alpha_h}3^{\alpha_h}$. Taking $k!j![u^kv^j]$ under the
expectation proves orthogonality. Substitution $z=s/(1-2s)$ gives
\[
 e^{sY_h}=\sum_{k\ge0}
 \frac{Q_{\alpha_h,k}(Y_h)s^k}{k!(1-2s)^{\alpha_h+k}};
\]
taking $n![s^n]$ proves the second identity.
\end{proof}

\smallskip\noindent\textbf{Verification of the interval bounds.}
For each $(i,I)$, evaluate \cref{heavy:D4-endpoint} using
\cref{heavy:M-estimates} and \cref{heavy:T-definition} when
$\sup_{\theta\in I}|P_i(e^{{\mathrm{i}}\theta})|<99/100$, and evaluate
\cref{heavy:score-bound} when
\[
 \inf I\ge3/5,\qquad
 \inf_{\theta\in I}\Re((1\pm P_i(e^{{\mathrm{i}}\theta}))^{-1})>0.
\]
Use interval rounding at $192$ bits on the entire closed interval.
For every $(i,I)$, at least one estimate satisfying the displayed
hypotheses gives a finite upper endpoint; $U_{i,I}$ is the minimum of
the available upper endpoints. The sum in \cref{heavy:panel-sum} and the interval bound
from \cref{heavy:disagreement-bound}, before rounding to the displayed
decimal bound, give \cref{heavy:panel-numerics}.
\endgroup
}
\subsection{Verification of the numerical bounds}\label{root:verification}

{
{From the repository root, run
\begin{verbatim}
python verify.py --upper
\end{verbatim}
The inputs and source are in \path{upper/}.
Source paths are relative to \path{upper/source/certification/},
except \path{verify.py}, \path{data_link.py}, \path{requirements.txt},
and paths beginning with \path{data/} or \path{upper/},
which are relative to the repository root.
Generated mathematical data files use the same relative paths under
\path{runs/complete-*/upper_results/run-*/work/certification/}.}
The tables give the arithmetic precision in bits and the bounds verified.
SHA-256 values are computed from file contents and compared with the
recorded values.

\smallskip\noindent
{\small\setlength{\tabcolsep}{3pt}\renewcommand{\arraystretch}{1.15}
\begin{tabular}{@{}p{.42\linewidth}p{.09\linewidth}p{.45\linewidth}@{}}
\textbf{Source}&\textbf{Bits}&\textbf{Checks and uses}\\\hline
{\path{data_link.py}}&Exact&
Checks SHA-256 values and exact equality with the supplied rounding data
and data defining $F_*$;
\cref{root:normalization}.\\
\path{weighted_staircase/freeze_recipe.py}&Exact&
Inserts $F_*$ for $i=14$; {\path{upper/verify.py}}
checks the SHA-256 value of
\path{weighted_staircase/explicit_staircase_mixture.json}.\\
\path{upper_head_681/full_head_plan.py}&$512$&
Quadrature and omitted-integral bounds in Claim~\ref{head:claim-quadrature};
output entries \texttt{tilted\_J0\_bound}, \texttt{tilted\_J2\_bound}
for the two exponential moments.\\
\path{upper_head_681/certified_b_matrix.py}&$512$&
Quadrature nodes and weights, values of $\psi_j$, $\hat B,\hat U$;
input and summation errors; output entry \texttt{A\_Hilbert\_error}
for the combined Hermite coefficient error $<4.096939258130\cdot10^{-10}$.\\
\path{upper_head_681/contract_certified_head.py}&$256$;
binary64&Bounds on sine and cosine values; polynomial cubes and weighted
sums in \cref{head:computed-sum}.\\
\path{upper_head_681/verify_plane_budget.py}&$256$&
Errors in the discrete Fourier transforms, weighted sums, and omission
of $\hat U_{a\nu}<10^{-18}$ in Claim~\ref{head:claim-arithmetic};
\cref{head:joint-plane}.\\
\path{compose_head.py}&$256$&
Coefficients through degree $681$ in \cref{head:finite-margin}, with
\cref{head:head-error} subtracted; \cref{eq:upper-certified-head}.\\
\end{tabular}}

The files \path{upper_rounding.json} and \path{staircase.json} in
{\path{data/}} are checked for the indices $0,\ldots,31$
and their specified rounding functions, positive weights, $2048$ strips
of width $1/256$ covering $[0,8)$, increasing $r_{k,a}$, and signs in $\{-1,1\}$.
For $0\le k<2048$, row $k$, numbered from $0$, of
{\path{data/staircase.json}} stores $r_{k,a}$, numbered from $1$,
in its \texttt{breakpoints} list, with signs $1,-1,1,\ldots$.
These are the rational breakpoints specified below.

The accepted intervals in step 2 are the \texttt{inner\_cells} entries of
\path{upper_head_681/full_head_quadrature_plan.json}.
The routine \path{upper_head_681/contract_certified_head.py}, with inputs
from \path{upper_head_681/certified_b_matrix.py}, generates
\path{upper_head_681/certified_plane.npy}. Its entries are the rational
array $\hat C$ used in \cref{head:finite-margin} and
\cref{other:captured-mass}; the error checks and both coefficient
calculations use this same generated array.

\noindent\begin{minipage}{\linewidth}
The additional scalar check in \cref{head:fourier-tail}{, run automatically by \path{verify.py},} is
\begin{verbatim}
from flint import arb, ctx
ctx.prec = 256
assert (64 * (-arb(512)).exp() * arb(32)**681
        / (345 * arb.pi() * (343 * arb.fac_ui(681)).sqrt())
        < arb(3) / 10**18)
\end{verbatim}
\end{minipage}

\smallskip\noindent
{\small\setlength{\tabcolsep}{3pt}\renewcommand{\arraystretch}{1.15}
\begin{tabular}{@{}p{.42\linewidth}p{.09\linewidth}p{.45\linewidth}@{}}
\textbf{Source}&\textbf{Bits}&\textbf{Checks and uses}\\\hline
\path{weighted_staircase/certify_head.py}&$384$&
\cref{other:staircase-moments}, \cref{other:staircase-numeric}.\\
\path{degree681/certify_linear_tails.py}&$256$&
\cref{other:positive-arcsine-tail}, \cref{other:cubic-tail}.\\
\path{weighted_staircase/certify_nonadditive.py}&$384$&
\cref{eq:upper-certified-nonadditive}; coefficients of
$\lambda_{14}H_{14}+\sum_{i=17}^{31}\lambda_iH_i$ for
\path{compose_head.py}.\\
\path{certify_rare_additive_tails.py}&$256$&
\cref{other:captured-mass} with the same $\hat C$ and errors;
\cref{eq:upper-certified-rare}.\\
\path{upper_tail/certify_quintic_continuation.py}&$256$&
\cref{heavy:profile-checks}, \cref{heavy:weighted-perturbation}.\\
\path{upper_tail/certify_factorized_energies.py}&$1536$&
\cref{heavy:chain-rule}, \cref{heavy:gaussian-moments};
the table in \cref{heavy:T-definition}.\\
\path{upper_head/arc/certify_complex_arc.py}, \texttt{bound}&$192$&
\cref{heavy:ell}, \cref{heavy:laguerre}, all $70$ terms of
\cref{heavy:score-norm}; \cref{heavy:score-bound}.\\
\path{upper_tail/certify_heavy_tail_fine.py}&$192$&
Bounds \cref{heavy:D4-endpoint}, \cref{heavy:score-bound} on the $6284$
intervals; their union $[0,\pi/2]$; \cref{heavy:panel-sum},
\cref{heavy:circle-numerical}, \cref{heavy:weighted-perturbation}.\\
\path{certify_upper.py}&$256$&
Checks SHA-256 values, degrees, and all rounding indices; combines
\cref{eq:upper-certified-head}, \cref{eq:upper-certified-nonadditive},
\cref{eq:upper-certified-rare}, and \cref{heavy:final-number}
before decimal rounding, including division by $\sqrt{14\cdot681^7}$
and \cref{heavy:weighted-perturbation}.\\
{\path{verify.py}}&$256$&
Bounds \cref{eq:upper-certified-head}, \cref{eq:upper-certified-nonadditive},
\cref{eq:upper-certified-rare}, \cref{heavy:final-number}, and
$c_1-\sum_{m\ge2}|c_m|>1000000/1779893$.\\
\end{tabular}}

Arb is the library used for real and complex interval arithmetic;
interval endpoints are rounded down for lower bounds and up for upper bounds.
Binary64 calculations use round-to-nearest and the bounds of
Claim~\ref{head:claim-arithmetic}; other finite calculations use exact
rationals or these interval bounds. Verification assumes correct
execution of these arithmetic operations and elementary functions.
Use Python $3.12$ or a compatible newer version, without \texttt{-O},
and {\path{requirements.txt}}.
The program {\path{upper/verify.py}} assembles
\path{weighted_staircase/explicit_staircase_mixture.json} from the stored
inputs and runs all $12$ numerical calculations in a fresh directory
before writing \path{VERIFIED.json}. The program
{\path{verify.py}} writes \path{ACCEPTED.json} only after
the input comparisons{, the scalar check above,} and the five numerical inequalities in its table
entry have been checked.
{The two acceptance reports are located at
\path{runs/complete-*/upper_results/run-*/VERIFIED.json} and
\path{runs/complete-*/ACCEPTED.json}, respectively.}
The option \texttt{--all --workers 4} also runs the lower certificate and
requires a C++17 compiler and GMP/MPFR; \texttt{--data-only} checks input
identities and rational comparisons without executing either numerical proof.
}

\paragraph{Staircase breakpoints.}
For each $k\in\{0,\ldots,2047\}$, substitute $y=(2k+1)/512$
in \cref{root:staircase-polynomial}. Round its real roots
to the nearest multiples of $10^{-12}$ and list them in increasing order.
For each triple $(k,a,+1)$ or $(k,a,-1)$ below, respectively add or
subtract $10^{-12}$ from the $a$-th rounded root for this $k$.
Leave all other rounded roots unchanged.
Let $r_{k,a}$ denote the resulting breakpoints.
All roots are simple, rounding is unique, and
$r_{k,a}<r_{k,a+1}$ after these corrections.

\begingroup\small
\setlength{\arraycolsep}{4pt}
\[
\begin{array}{rrr@{\qquad}rrr@{\qquad}rrr@{\qquad}rrr}
k&a&\pm1&k&a&\pm1&k&a&\pm1&k&a&\pm1\\\hline
44&1&-1&487&1&-1&948&5&+1&1261&3&+1\\
64&5&+1&619&5&-1&984&2&+1&1281&3&+1\\
181&5&-1&626&1&-1&1019&1&-1&1309&2&-1\\
286&1&-1&669&1&+1&1019&5&-1&1319&2&-1\\
305&5&+1&684&4&+1&1030&2&+1&1412&1&+1\\
342&1&+1&715&3&+1&1053&2&+1&1417&2&-1\\
369&1&-1&739&1&-1&1055&1&-1&1677&1&+1\\
402&2&-1&862&5&+1&1113&1&-1&&&\\
449&1&-1&902&5&+1&1125&1&-1&&&
\end{array}
\]
\endgroup

\end{document}